\documentclass{revtex-stylesheet}

\begin{document}
\title{Differentially private quantum sensor networks}

\author{Daniel J. Spencer}
\email{djspence@umd.edu}
\affiliation{Joint Center for Quantum Information and Computer Science, NIST/University of Maryland, College Park, MD 20742, USA}
\affiliation{Joint Quantum Institute, NIST/University of Maryland, College Park, MD 20742, USA}
\affiliation{Department of Physics, University of Maryland, College Park, MD 20742, USA}

\author{Kaiyan Shi}
\affiliation{Joint Center for Quantum Information and Computer Science, NIST/University of Maryland, College Park, MD 20742, USA}
\affiliation{Department of Computer Science, University of Maryland, College Park, MD 20742, USA}

\author{Emil T. Khabiboulline}
\affiliation{Joint Center for Quantum Information and Computer Science, NIST/University of Maryland, College Park, MD 20742, USA}
\affiliation{Joint Quantum Institute, NIST/University of Maryland, College Park, MD 20742, USA}

\author{Gorjan Alagic}
\affiliation{Joint Center for Quantum Information and Computer Science, NIST/University of Maryland, College Park, MD 20742, USA}
\affiliation{Department of Computer Science, University of Maryland, College Park, MD 20742, USA}

\author{Alexey V. Gorshkov}
\affiliation{Joint Center for Quantum Information and Computer Science, NIST/University of Maryland, College Park, MD 20742, USA}
\affiliation{Joint Quantum Institute, NIST/University of Maryland, College Park, MD 20742, USA}

\begin{abstract}
    Quantum sensing is a promising technology capable of demonstrating clear advantage over comparable classical techniques for precise measurement. One application of quantum sensing is in function estimation, which can be done using a network of entangled quantum sensors, allowing for measurements with greater optimal sensitivity than unentangled sensing protocols. In cases where quantum sensor networks will be used to measure data that should remain private (e.g., biomedical data), it is imperative that these protocols include a privacy mechanism to hide sensitive information. In this work, we show that entangled sensor networks are vulnerable to certain privacy-violating attacks. To mitigate these attacks, we introduce secure sensing protocols endowed with differential privacy. We reconcile differential privacy with retaining Heisenberg-limited scaling, and introduce several protocols achieving varying balances between the two. We show that our main protocol, an $n$-node network sensing protocol that injects noise directly into the sensing Hamiltonian, exhibits a tradeoff between the desirable $\bigOh{1/n^2}$ Heisenberg scaling of the mean-squared error of the function estimate and the level of privacy attainable. Under assumptions on the network (a common source of randomness and a constant fraction of honest parties), we show that this protocol is locally implementable and achieves $(\bigOh{1}, \delta)$-differential privacy for arbitrarily small $\delta$ while retaining Heisenberg scaling of the mean-squared error. We prove that our protocols are resilient to attacks by broad classes of classical and quantum adversaries, and find advantages in the privacy-utility tradeoff when using quantum techniques.
\end{abstract}

\maketitle

\section{Introduction}\label{sec:intro}
Quantum sensing~\cite{degen2017} is a promising technology with applications across various disciplines that rely on accurate measurements. It has been shown~\cite{giovannetti2004,giovannetti2006,eldredge2018,belliardo2020,qian2021,bringewatt2021,ehrenberg2023} that entanglement provides an advantage over unentangled protocols in networks of quantum sensors for precision measurement. In quantum sensing, a common problem of interest is \emph{function estimation}, where $q(\boldsymbol{\theta})$ is a function of $n$ parameters $\boldsymbol{\theta} = (\theta_1, \ldots, \theta_n)$, each coupled to a quantum sensor, which can be a qubit, a boson, or any other multi-level quantum system. The protocols proposed in, for example, Refs.~\cite{eldredge2018,qian2021,bringewatt2021,ehrenberg2023}, solve the function estimation problem using a network of entangled qubit sensors. Physically, this function could be a magnetic or electric field to be interpolated at arbitrary positions in space. Examples of such sensor networks have been proposed for a variety of applications in geophysics~\cite{nabighian2005,wright2004}, biomedical imaging~\cite{aslam2023,boto2018,jensen2018,jensen2019,deans2016,xu2006}, dark matter searches~\cite{brady2022}, and the enhancement of atomic clock stability~\cite{guo2020}.

Often in sensing, the figure of merit is taken to be the mean-squared error $\varepsilon_{\mathrm{MSE}}$ of the function estimator, in particular how $\varepsilon_{\mathrm{MSE}}$ scales with $n$, the number of sensors. The above-mentioned entangled sensing protocols achieve the desirable \emph{Heisenberg limit} on $\varepsilon_{\mathrm{MSE}}$, which is an improvement over the best achievable bound with any unentangled strategy, called the \emph{standard quantum limit}. In this work, we restrict ourselves to linear functions of the form $q(\boldsymbol{\theta}) = \defsum{i=1}{n}{\alpha_i\theta_i}$. In particular, we consider the \emph{average} function, that is, we take $\alpha_i = 1/n$ for all $i = 1, \ldots, n$ such that $q(\boldsymbol{\theta}) = \frac{1}{n}\defsum{i=1}{n}{\theta_i}$. In this scenario, the standard quantum limit on $\varepsilon_{\mathrm{MSE}}$ is $\bigOh{1/n}$ while the Heisenberg limit, achievable with access to entangled resources, scales as $\bigOh{1/n^2}$, a quadratic enhancement~\cite{eldredge2018}. The average of the parameters is a common measure of interest that allows for a clear demonstration of Heisenberg scaling advantage and allows us to focus on the privacy aspect of our protocols.
    
The regimes in which quantum sensor networks could be used include the measurement of sensitive data that clients may wish to keep private. For example, in a network of distributed biomedical quantum sensors that can measure the presence of an infectious disease, it may be desirable to determine the average rate of infection while hiding which specific individuals have it. As another example, consider a network of entangled sensors distributed among sites within a country tasked with measuring geophysical signals that could indicate the presence of a valuable natural resource such as oil or strategic minerals. If each site is owned by a different company, and a client (e.g., the government of the country) wants to determine some function of the estimated amount of the resource at each site, then the network should have a way to estimate the function without leaking information about the amount of the resource at each site, information that could be used for some nefarious financial or military purpose.

The security of sensor networks has been studied in several prior works~\cite{huang2019,shettell2022,kasai2022,moore2023,moore2024,bugalho2024,kasai2024,hassani2025,farokhi2026precision}. In this work, we focus on a specific type of privacy-breaching attack called a \emph{differencing attack} against the network. An example of a differencing attack in the quantum sensing setting involves an adversary running the sensing protocol once to get an estimate of the linear function and then running the protocol again, this time excluding a single parameter (the "target"). Using these two results, the adversary can learn the parameter value of the target node. While more creative and complicated attacks can be conceived of, this simple attack exposes any single node in the network, which is catastrophic for any user that wishes to keep their data private in such a sensing application.

To address such attacks, we introduce privacy mechanisms based on \emph{differential privacy} to hide the parameter values of all nodes in the network at the cost of losing some accuracy in the estimate of our function. Differential privacy is a well-studied mechanism~\cite{dwork2006a,dwork2006b} that lends itself naturally to scenarios where privacy may be important in a sensing protocol. Its application is especially important as we approach an era in which quantum sensing is realized commercially. Classically, differential privacy can be combined with additive homomorphic encryption to achieve complementary notions of security in sensor networks~\cite{shi2011privacy-preserving}. In this work, we augment quantum sensor networks, where entanglement plays a key role in both security and sensing. In particular, we aim to maintain quantum-enhanced scaling in the precision of estimating functions like summations, while direct application of classical techniques on classical data does no better than the standard quantum limit.

We introduce several differentially private quantum sensor network protocols that achieve varying levels of optimality, according to a set of three criteria that we define below. We consider two adversarial models, one assuming a trusted central node in the network (the \emph{centralized network setting}) and one assuming no such trusted party (the \emph{decentralized network setting}), and we develop differentially private protocols to protect against attacks in each setting. Removing a trusted curator can encourage contributors who do not want to reveal their raw data. We highlight how the level of trust in the network has significant implications for the accuracy of function estimation. 

It is useful to point out that while there is fundamental measurement noise in the original sensing protocol, it is not sufficient for differential privacy and can average out as the number of rounds of sensing increases. Entanglement bestows anonymity, in that the output of the computation can be revealed while the inputs remain unknown. However, this notion of security does not yield differential privacy, since knowledge about the output itself can reveal whether an input contributed to the computation. As such, to introduce privacy to quantum sensor networks of the type we consider in this paper, we require an explicit mechanism. We find that directly applying the Laplace mechanism, commonly used in classical differential privacy, suffices in the centralized network setting. However, we find that when we consider the decentralized network setting, a \emph{quantum} mechanism actually performs better, which we achieve by directly modifying the Hamiltonian.

The analysis of our quantum mechanism accounts for both classical and quantum adversaries, and we show that this mechanism is secure against both. Our analysis makes use of quantum information-theoretic tools that prescribe a smaller amount of injected noise, compared to classical results, and thus improved utility for the same privacy level. We show that, under additional assumptions about the network, our local mechanism can become optimal according to the criteria that we define below in~\cref{def:optimality-conditions}.

\begin{table*}[tb]
    \centering
    \begin{tblr}{
    colspec={X[c,m]X[c,m]Q[c,m]X[c,m]Q[c,m]X[c,m]X[c,m]X[c,m]},
    hlines,
    vlines,
    row{1}={font=\bfseries}
    }
        Protocol & Assumptions & Noise source & Scaling of $\varepsilon_{\mathrm{MSE}}$ & $\varepsilon$ & $\delta$ & Global or local noise implementation & All criteria of~\cref{def:optimality-conditions}?\\
        Global Laplace mechanism (\cref{sec:centralized-network}) & Trusted curator & Laplace & {\color{ForestGreen}$\bigOh{1/n^2}$} & {\color{ForestGreen}$\bigTheta{1}$} & {\color{ForestGreen}$0$} & {\color{Red}Global} & \xmark\\
        Noisy Hamiltonian protocol (\cref{subsec:noisy-hamiltonian-protocol}) & Worst-case: one honest node & Laplace & {\color{Orange}$\bigOh{1/n^{\alpha}}$} & {\color{Orange}$\Theta(n^{(\alpha-1)/2})$} & {\color{ForestGreen}0} & {\color{ForestGreen}Local} & \xmark\\
        Honest-fraction noisy Hamiltonian protocol (\cref{subsec:nhp-honest-fraction}) & Honest fraction, CSR, average-type queries on sufficiently large subset & Gaussian & {\color{ForestGreen}$\bigOh{1/n^2}$} & {\color{ForestGreen}$\bigTheta{1}$} & {\color{ForestGreen}Arbitrarily small} & {\color{ForestGreen}Local} & \cmark
    \end{tblr}
    \caption{Summary of the optimality conditions achieved by each of the differentially private quantum sensing protocols we introduce in this work. Here, $1 \leq \alpha \leq 2$ is a parameter that can be chosen by the user. The only fully optimal protocol (with additional assumptions), by the criteria in~\cref{def:optimality-conditions}, is the noisy Hamiltonian protocol where the network has a common source of randomness (CSR) and a constant fraction of the nodes are honest. Green text indicates optimal, red indicates non-optimal, and orange indicates that the user can tune the level of optimality (at the expense of the optimality of another figure of merit).}
    \label{tab:summary-of-protocols}
\end{table*}

We note that our work is distinct from previous literature on quantum differential privacy~\cite{zhou2017,yoshida2020,hirche2023,li2023,guan2024,zhong2024}, which is concerned with deriving differentially private quantum algorithms to prevent an adversary from learning whether or not a specific quantum state was used as input based on the output of a quantum circuit. In our work, the input to the problem is a set of purely classical parameters in a quantum Hamiltonian that is then coupled to a qubit. Then, the qubits, with the encoded classical data, are unitarily evolved according to the Hamiltonian, after which we measure the final quantum state and are left with classical measurement results and post-measurement quantum states. Thus, our differential privacy analysis requires an interplay between classical and quantum techniques. Furthermore, we consider differential privacy in the distributed setting~\cite{li2021,guan2024,zhong2024}. We note some recent work connecting classical Fisher information~\cite{barnes2020} and quantum parameter estimation~\cite{farokhi2025} to differential privacy.

The rest of the article is organized as follows. In the remainder of this section, we outline our main results and define the criteria that characterize the quality of a differentially private quantum sensing protocol. Then, in~\cref{sec:prelims}, we introduce the standard quantum sensing protocol that we follow, motivate why we need privacy, review the relevant theory from classical and quantum differential privacy needed to develop our protocols, and connect differential privacy and sensing in an intuitive way. We then introduce the centralized and decentralized network settings and the corresponding differentially private quantum sensing protocols in~\cref{sec:centralized-network,sec:decentralized-network}, respectively, and show their correctness and soundness. Finally, we offer some concluding remarks in~\cref{sec:discussion}. We relegate additional analysis of our main protocol to~\cref{app-sec:ghz-state-leakiness,app-sec:noisy-hamiltonian-analysis}.

\subsection{Main result}\label{subsec:main-result}
Our main result is the introduction of several entangled sensor network protocols that achieve different levels of success across several measures; see~\cref{tab:summary-of-protocols}. We consider two different models for the structure of the sensor networks and introduce differentially private quantum sensing protocols for both. While the scaling of the mean-squared error $\varepsilon_{\mathrm{MSE}}$ of the function estimate as a function of the number of nodes in the network is the single figure of merit in the original quantum sensing protocol~\cite{eldredge2018}, we define in~\cref{def:optimality-conditions} a set of \emph{three} criteria that characterize the optimality of a differentially private quantum sensing protocol. Note that we introduce a few variables and terms (e.g., $\varepsilon$, $\delta$) that we define more precisely later on.

Let $q(\boldsymbol{\theta}) = \frac{1}{n}\defsum{i=1}{n}{\theta_i}$ be the average of the parameters $\boldsymbol{\theta} \in \RR^n$ and let $\varepsilon \geq 0$ denote the privacy level and $0 \leq \delta \leq 1$ denote the probability of a mechanism \emph{failing} to achieve an $\varepsilon$-level of privacy; a smaller $\varepsilon$ corresponds to higher privacy and $\delta$ can be arbitrarily small. Furthermore, denote by $Q$ an estimate of $q(\boldsymbol{\theta})$ and define the mean-squared error as $\varepsilon_{\mathrm{MSE}} = \mathbb{E}[(Q - q(\boldsymbol{\theta}))^2]$. Then, we have the following:

\begin{definition}[Optimal differentially private quantum sensing conditions]\label{def:optimality-conditions}
    The \emph{optimality} of an $(\varepsilon, \delta)$-differentially private quantum sensing protocol $\mathcal{M}$ depends on its \emph{correctness} and its \emph{soundness}. Correctness quantifies how, in the honest setting, the mean-squared error $\varepsilon_{\mathrm{MSE}}$ scales. If $\varepsilon_{\mathrm{MSE}} = \bigOh{n^{-\alpha}}$, then
    \begin{itemize}
        \item $\alpha = 1$ corresponds to the standard quantum limit,
        \item $\alpha = 2$ corresponds to the Heisenberg limit, and
        \item $1 < \alpha < 2$ is an intermediate regime.
    \end{itemize}
    The soundness is determined by two criteria:
    \begin{itemize}
        \item \textbf{Locality}: Does the protocol require a trusted, central party or can it be implemented locally by each node in the network?
        \item \textbf{Privacy}: How well does the protocol protect individual nodes' data, measured by the privacy parameters $\varepsilon$ and $\delta$?
    \end{itemize}
\end{definition}

Here, "honest setting" refers to the setting in which there are no adversarial nodes, and so the sensing protocol $\mathcal{M}$ is run, with added noise as prescribed. If, however, there \emph{are} adversarial nodes that wish to perform some sort of privacy-breaching attack, then the conditions in the soundness criteria become relevant: namely, how much privacy is afforded. An optimal differentially private quantum sensing protocol retains Heisenberg scaling of $\varepsilon_{\mathrm{MSE}}$ in the honest setting, is $\varepsilon$- or $(\varepsilon, \delta)$-differentially private for $\varepsilon = \bigOh{1}$ and small $\delta$, and uses a local mechanism.

Using the criteria in~\cref{def:optimality-conditions} as a guideline, we categorize our differentially private quantum sensor network protocols according to the network model assumed. Each protocol has its own set of tradeoffs that are chosen according to the user's preferences with respect to accuracy and privacy. Here, \emph{centralized} and \emph{decentralized} refer to the structure of the network, which effectively comes down to the existence or lack of a trusted central party that administers the network (i.e., creates and distributes entanglement, collects the measurement results, calculates the function estimate, and applies noise). As such, this affects how the noise is added: a centralized network with a trusted central party applies noise \emph{globally}, giving rise to a global differentially private mechanism, while a decentralized network does not have a central party and so each node in the network must add noise \emph{locally}, giving rise to a local differentially private mechanism.

\begin{enumerate}
    \item \emph{Centralized network}: In this setting, we assume that there exists a trusted central party called a \emph{curator} that administers the network. Here, we allow for both "internal" adversaries (i.e., adversarial nodes) and "external" adversaries, such as a malicious third party delegating a sensing task to the network. To protect a given node in this setting, we use a classical \emph{global Laplace mechanism}, which we show has a mean-squared error that scales according to the Heisenberg limit and is differentially private with privacy budget $\varepsilon = \bigOh{1}$ and $\delta = 0$. However, because it requires a trusted curator, it cannot be implemented locally by each node and thus fails the locality condition.
    
    \item \emph{Decentralized network}: If a trusted central party does not control the network, then the sensing task must be performed among the nodes themselves. To protect a given node in this setting, we introduce two protocols:
    \begin{enumerate}
        \item The \emph{noisy Hamiltonian protocol} admits a privacy-utility tradeoff between the scaling of $\varepsilon_{\mathrm{MSE}}$ as a function of $n$ and the privacy budget $(\varepsilon, \delta)$. Namely, we find $\varepsilon_{\mathrm{MSE}} = \bigOh{1/n^2} + \bigOh{1/n^\alpha}$ and $\varepsilon = \Theta(n^{(\alpha-1)/2})$, where $1 \leq \alpha \leq 2$ can be chosen such that
        \begin{enumerate}
            \item For $\alpha=1$, $\varepsilon_{\mathrm{MSE}} = \bigOh{1/n}$ scales according to the standard quantum limit but $\varepsilon = \Theta(1)$ is a constant,
            \item For $\alpha=2$, $\varepsilon_{\mathrm{MSE}} = \bigOh{1/n^2}$ scales according to the Heisenberg limit but $\varepsilon = \Theta(\sqrt{n})$ grows with the number of nodes, and
            \item For $1 < \alpha < 2$, we get intermediate behavior for both $\varepsilon_{\mathrm{MSE}}$ and $\varepsilon$.
        \end{enumerate}
        We also find conditions under which $\delta = 0$. Note that the locality condition is satisfied by construction, as the protocol does not require a trusted curator.

        \item The noisy Hamiltonian protocol with an \emph{honest fraction} is an $(\varepsilon, \delta)$-differentially private, locally implementable protocol that retains Heisenberg scaling in $\varepsilon_{\mathrm{MSE}}$ under certain conditions, but requires additional assumptions and uses a different noise model (the Gaussian mechanism, rather than the Laplace mechanism).
    \end{enumerate}
\end{enumerate}

The basic noisy Hamiltonian protocol is analyzed in the worst-case adversarial model, where we have a single honest node that is trying to protect its parameter against all other nodes. This yields the strongest privacy model but forces a tradeoff between privacy and Heisenberg scaling. In contrast, the honest-fraction noisy Hamiltonian protocol relaxes this model by assuming a common source of randomness and a constant fraction of honest nodes. Under these additional assumptions and a restriction to average-type queries on sufficiently large subsets, the protocol achieves Heisenberg scaling with privacy budget $\varepsilon = \bigTheta{1}$ and an arbitrarily small $\delta$, which is considered fully optimal according to our definition in~\cref{def:optimality-conditions}. We believe our work to be a novel unification of quantum sensing and differential privacy, and we hope that it will encourage further development of secure sensing protocols.

\section{Preliminaries}\label{sec:prelims}
In this section, we outline the physical setup of the entangled sensor network in~\cref{subsec:sensor-network} and briefly motivate its vulnerability. To address this vulnerability, we use tools from differential privacy, so we give an overview of the theories of classical and quantum differential privacy in~\cref{subsec:classical-dp,subsec:quantum-dp}, respectively. We give an interpretation of differential privacy in the context of quantum sensing in~\cref{subsec:dp-in-sensing}.

\begin{figure}[t]
    \centering
    \includegraphics[scale=0.4]{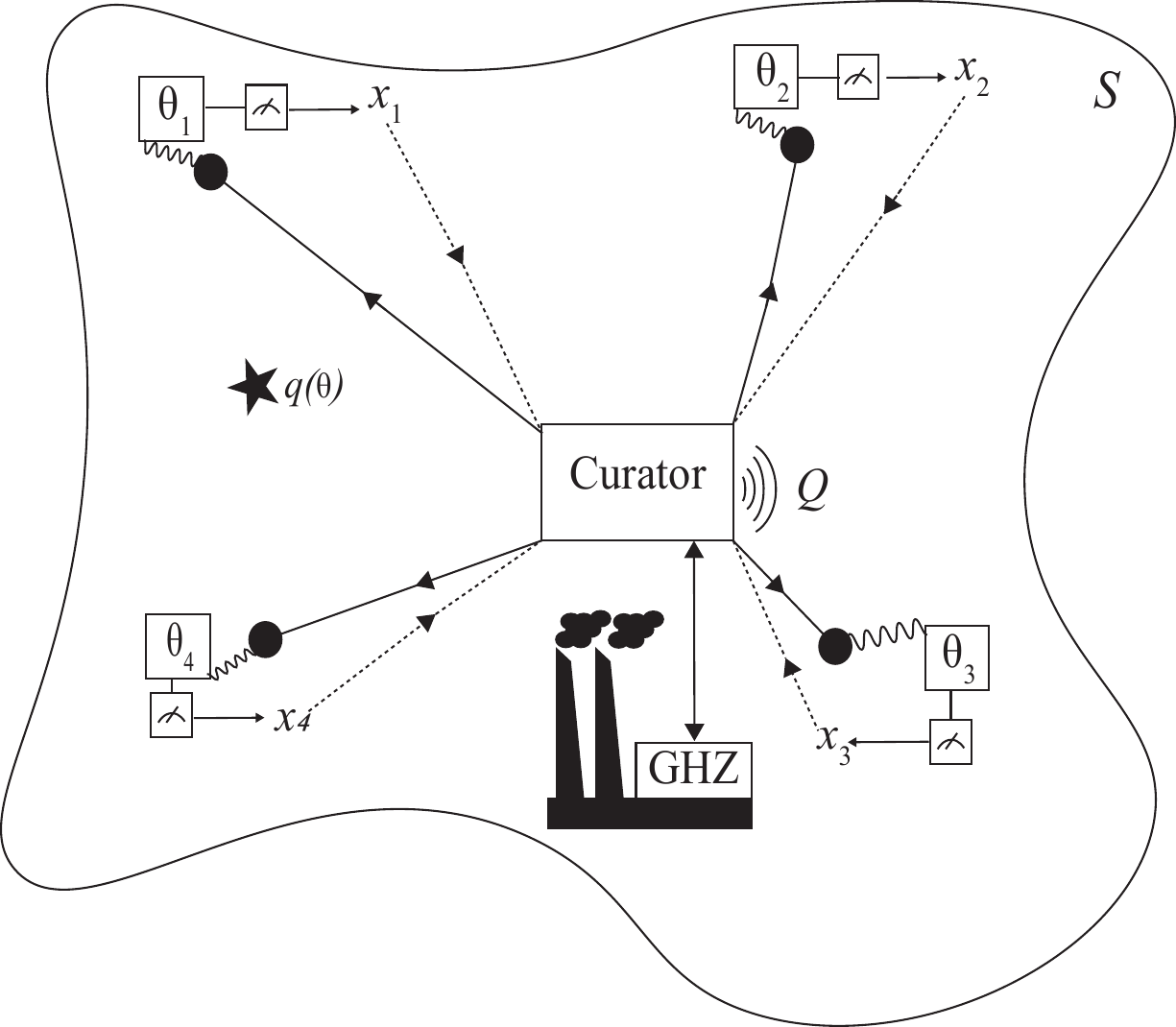}
    \caption{Schematic of the physical setup of the entangled sensor network. The nodes are embedded in some space $S$ and the goal is to estimate a function $q(\boldsymbol{\theta})$, labeled with a star. Here, we show the "centralized setting," in which a central curator administers the sensing protocol. This involves creating an $n$-qubit GHZ state and distributing the state to the network. Then, each qubit is coupled to the local parameter $\theta_i$ at each node, evolved according to the Hamiltonian defining the problem, and measured. The measurement results are then sent to the curator, from which the curator calculates an estimate $Q$ for the function $q$ and broadcasts the answer to the client (either a third party or one of the nodes themselves), which may be adversarial.}
    \label{fig:physical-picture}
\end{figure}

\subsection{Entangled sensor network}\label{subsec:sensor-network}
We first describe the entangled quantum sensor network that forms the basis of our differentially private sensing protocols, as sketched in~\cref{fig:physical-picture}; for an introduction to quantum sensing in general, we refer the reader to Ref.~\cite{degen2017}. Let $n$ denote the number of quantum sensors in the network, which we take to be qubits that evolve unitarily under a Hamiltonian
\begin{align}
    H(t) = H_c(t) + \frac{1}{2}\defsum{i=1}{n}{\theta_i\sigma_i^z},
\end{align}
where $H_c(t)$ is a time-dependent control Hamiltonian that is independent of the parameters $\boldsymbol{\theta}$, $\theta_i \in \RR$ is a parameter characterizing a physical quantity (e.g., a local magnetic field), and $\sigma_i^z$ is a Pauli-$Z$ operator acting on the $i\numth$ qubit sensor. Our goal is to estimate the value of a linear function $q(\boldsymbol{\theta})$. We do this by making measurements of a final quantum state $\ket{\psi_f}$, which we get from unitary evolution of an initial state $\ket{\psi_0}$. The unitary operator is informed by the parameters $\boldsymbol{\theta}$, namely
\begin{align}
    U(t,\boldsymbol{\theta}) = \mathcal{T}\exp{-i\defint{0}{t}{d\tau H(\tau)}},
\end{align}
where $\mathcal{T}$ is the time-ordering operator. Thus, the final state is given as
\begin{align}
    \ket{\psi_f} = U(t,\boldsymbol{\theta})\ket{\psi_0}.
\end{align}
The measurements are specified by a positive operator-valued measure (POVM):
\begin{align}
    \{\Pi_\xi\}\text{ such that } \int{\mathop{d\xi}\Pi_\xi} = 1.
\end{align}
We run the experiment $M \geq 1$ times and get an estimate $Q$ from the measurement results. Thus, a \emph{protocol} is the specification of the initial quantum state $\ket{\psi_0}$, the control Hamiltonian $H_c(t)$, the POVM $\{\Pi_\xi\}$ to estimate $q(\boldsymbol{\theta})$, and the estimator $Q$. An \emph{optimal protocol} minimizes the mean-squared error (MSE) $\varepsilon_{\mathrm{MSE}}$ on the function value given a fixed time $t$:
\begin{align}
    \varepsilon_{\mathrm{MSE}} &= \evsymb{(Q - q(\boldsymbol{\theta}))^2}\\
    &= \Var{Q} + \lp \evsymb{Q} - q(\boldsymbol{\theta}) \rp^2,\label{eqn:mse-explicit}
\end{align}
where the first term in~\cref{eqn:mse-explicit} is the variance of the estimator $Q$ and the second term is the bias. If $Q$ is an unbiased estimator, then $\mathbb{E}[Q] = q(\boldsymbol{\theta})$ and the second term becomes 0. The specific protocol that we follow in this paper is that proposed in Ref.~\cite{eldredge2018}. The protocol takes the initial quantum state to be a Greenberger-Horne-Zeilinger (GHZ) state, which is either produced by a trusted central party or is created among the nodes in the network according to a pre-specified protocol. Here, each node in the network receives one qubit of the GHZ state, and so each node is effectively a single qubit sensor; we mostly refer to the "nodes" of the network, but we may also use the terms "sensor" and "qubit" to refer to the nodes throughout the paper and thus treat all three terms interchangeably. At each node in the network, the qubit is coupled to a parameter $\theta_i$ and then evolved under the Hamiltonian defining the protocol. It is during this time that the function $q(\boldsymbol{\theta})$ is imprinted on the phase of the quantum state. In this work, we only consider the average function, which we define as
\begin{align}
    q(\boldsymbol{\theta}) = \frac{1}{n}\defsum{i=1}{n}{\theta_i}.
\end{align}
Here, the imprinting of $q(\boldsymbol{\theta})$ onto the phase happens directly as a result of the state evolution without the need for ancilla qubits or a control Hamiltonian. After this evolution, a parity measurement, which we define below, is made across all qubits. Repeating the protocol $M$ times gives us statistics that we then use to estimate the function of interest.

We now show in more detail how such a sensor network~\cite{eldredge2018,belliardo2020,ehrenberg2023} can be used to estimate $q(\boldsymbol{\theta})$; see~\cref{alg:standard-entangled-sensing-protocol}. As mentioned above, we take our initial state to be the $n$-qubit GHZ state:
\begin{align}
    \ket{\psi_0} = \frac{1}{\sqrt{2}}\lp \ket{0}^{\otimes n} + \ket{1}^{\otimes n} \rp.
\end{align}
Since we have the same coefficient (i.e., $1/n$) multiplying each parameter, we do not need a control Hamiltonian or any ancilla qubits to dial in different coefficients, as would be needed for a general function with coefficients that are not all equal. Thus, the Hamiltonian describing our problem is
\begin{align}
    H = \defsum{i=1}{n}{\frac{1}{2}\theta_i\sigma_i^z},
\end{align}
where we have written $H(t)$ as $H$ because the Hamiltonian is now time-independent. We assume that we evolve the GHZ state for a time $t$, which yields the following unitary operator describing the evolution:
\begin{align}
    U(t) = e^{-\frac{it}{2}\defsum{i=1}{n}{\theta_i\sigma_i^z}}.
\end{align}
Applying $U(t)$ to $\ket{\psi_0}$, we find the final state to be
\begin{align}
    \ket{\psi_f} &= U(t)\ket{\psi_0}\\
    &= \frac{1}{\sqrt{2}}\lp e^{-\frac{int}{2}q}\ket{0}^{\otimes n} + e^{\frac{int}{2}q}\ket{1}^{\otimes n} \rp,
\end{align}
where $q \equiv q(\boldsymbol{\theta})$. To extract an estimate for $q(\boldsymbol{\theta})$, we make a parity measurement on the final state, where the parity operator is $P = \bigotimes_{i=1}^{n}{\sigma_i^x}$. We calculate the expectation value, $\expval{P}$, as
\begin{align}
    \expval{P} &= \mel{\psi_f}{P}{\psi_f} = \cos(ntq).\label{eqn:exp-val-parity-measurement}
\end{align}
Then, to recover a function estimate $Q$ for the true function value $q(\boldsymbol{\theta})$, we find an estimate of the expectation value for $P$ using the measurement results and invert~\cref{eqn:exp-val-parity-measurement}, to get
\begin{align}
    Q = \frac{1}{nt}\arccos{\lp \expval{P}_{\text{est}} \rp},\label{eqn:solve-for-estimate}
\end{align}
where
\begin{align}
    \expval{P}_{\text{est}} = \frac{1}{M}\defsum{m=1}{M}{p_m}.
\end{align}
Here,
\begin{equation}
    p_m = \prod_{i=1}^{n}{x_i^{(m)}} \in \{-1,1\}
\end{equation}
is the measurement outcome of the parity operator $P$, where each $x_i^{(m)}$ is the $\sigma^x$ measurement result of the $i\numth$ qubit in round $m$. As explained in~\cite{eldredge2018,wineland1994}, the variance $\Var{Q}$ of the function estimate $Q$ is given by
\begin{align}
    \Var{Q} &= \frac{\Var{P}}{\lp \pdv{\expval{P}}{q} \rp^2}.
\end{align}
We calculate $\Var{P}$ as
\begin{align}
    \Var{P} &= \expval{P^2} - \expval{P}^2\\
    &= \sin^2(ntq).
\end{align}
Likewise, we have
\begin{align}
    \lp \pdv{\expval{P}}{q} \rp^2 &= \lp \pdv{}{q}\lp \cos(ntq) \rp \rp^2\\
    &= n^2t^2\sin^2(ntq).
\end{align}
Putting everything together, we have
\begin{align}
    \Var{Q} = \frac{1}{n^2t^2}.\label{eqn:standard-heisenberg-scaling}
\end{align}
Note that this is for a single-shot protocol; if we do this $M$ times, then we pick up a factor of $M$ in the denominator. An unentangled strategy (e.g., taking the initial state to be a product state), where each qubit sensor effectively acts independently of the others, yields $\Var{Q} \sim \frac{1}{nt^2}$~\cite{eldredge2018} for a single shot, so we see an improvement by a factor of $1/n$ with entanglement. This is the Heisenberg scaling advantage that forms one of our three optimality conditions in~\cref{def:optimality-conditions}.

\begin{figure}[t]
    \begin{algorithm}[H]
        \small
        \caption{Standard entangled sensing protocol}\label{alg:standard-entangled-sensing-protocol}
        \begin{algorithmic}[1]
            \Require $H = \defsum{i=1}{n}{\frac{1}{2}\theta_i\sigma^z_i}$, $\boldsymbol{\theta} = \lp \theta_1, \ldots, \theta_n \rp$, $t$, $M$
            \Ensure $Q$
            \For{$m \in [M]$\footnote{We introduce the notation $[M] = \{1, 2, \ldots, M\}$ as a convenient shorthand.}}
                \State Couple each sensor to local parameter $\theta_i$
                \State Evolve $\ket{\psi_0}$ under $U(t)$ to get $\ket{\psi_f}$
                \State Make parity measurement $P$ of $\ket{\psi_f}$
                \State $p_m \gets$ Parity measurement result
            \EndFor
            \State Estimate parity measurement expectation from measurement outcomes: $\expval{P}_{\text{est}} \gets \frac{1}{M}\defsum{m=1}{M}{p_m}$
            \State $Q \gets \frac{1}{nt}\arccos{\lp \expval{P}_{\text{est}} \rp}$
            \State \Return $Q$
        \end{algorithmic}
    \end{algorithm}
\end{figure}

The above protocol comes with an ambiguity as to which $\pi$-interval the phase lies in. Thus, with this protocol, there is an assumption of an estimate on the function that puts the function in a specific $\pi$ interval, where typically we use the entangled sensor network protocol to get a final bit of precision to pinpoint exactly where in this interval the phase lies. However, if we do not have access to such an estimate or we need more than just a final bit of precision, we need to make use of the full \emph{robust phase estimation} or \emph{bit-by-bit learning} protocol, which is described in Ref.~\cite[Appendix C]{ehrenberg2023} and in greater detail in Ref.~\cite{belliardo2020}. This allows us to estimate the function to $K$ bits of precision without requiring strong prior knowledge of the value of the function. We now describe this protocol.

We assume that, after unitary evolution, the state picks up a phase proportional to the function $q(\boldsymbol{\theta})$, namely
\begin{align}
    \ket{\psi_f} = \frac{1}{\sqrt{2}}\lp \ket{0} + e^{inqt}\ket{1} \rp \otimes \ket{0 \ldots 0},
\end{align}
where the phase can be pushed to the first qubit using a set of unentangling gates and where we are ignoring an overall phase. We start by dividing the total experiment time $t$ into $K$ stages, where $K$ is the number of bits of precision we measure the function to, and we divide each stage into $2\nu_j$ equal pieces. Then, we evolve the system in the $j\numth$ stage for time $t_j = M_j \delta t$, where $\delta t$ is some constant unit of time, $M_j = 2^{j-1}$, and where $j \in [K]$. As the sensing time for each stage increases exponentially, we can see that most of the total experiment time is spent estimating the \emph{least} significant bit of precision, as one would expect when trying to resolve the finer details of the accumulated phase. We assume that we have some $(n,t)$-independent prior knowledge of the function $q(\boldsymbol{\theta})$, in particular the domain of the parameters $\theta_i \in [\theta_{\min}, \theta_{\max}]$ for all $i \in [n]$, so we choose $\delta t$ such that it satisfies
\begin{align}
    n\delta t(\theta_{\max} - \theta_{\min}) \in [0, 2\pi).
\end{align}
In the $j\numth$ stage, we evolve the system as described above for a time $M_j \delta t$ and repeat, obtaining $2\nu_j$ independent copies of the state
\begin{align}
    \ket{\psi_j} = \frac{1}{\sqrt{2}}\lp \ket{0} + e^{inqM_j\delta t}\ket{1} \rp \otimes \ket{0 \ldots 0},\label{eqn:post-evolution-state}
\end{align}
where $\nu_j$ decreases linearly with $j$ and is taken to be $\nu_j = \lfloor x_j \rceil$, where
\begin{align}
    x_j = \frac{3}{\log_2{C}}(K - j) + x_K
\end{align}
for constants $C$ and $x_K$ and where $\lfloor \cdot \rceil$ denotes rounding to the nearest integer. $\nu_j$ has this specific form to minimize the total time $t$ for a fixed desired precision (i.e., number of bits $K$), as outlined in Ref.~\cite{belliardo2020}. The total time of this $K$-stage protocol is given by
\begin{align}
    t = 2\defsum{j=1}{K}{\nu_j M_j \delta t}.
\end{align}
With the post-evolution states of~\cref{eqn:post-evolution-state}, we make $2\nu_j$ single-qubit measurements, whose outcomes allow us to estimate $q(\boldsymbol{\theta})$ bit-by-bit. Specifically, we make two measurements, each $\nu_j$ times for each stage: a $\sigma^x$ measurement and a $\sigma^y$ measurement. Each measurement has an outcome of either $+1$ or $-1$, which we map to 0 and 1, respectively, where the outcome probabilities are given by
\begin{align}
    p^{(x)}(0) &= \frac{1 + \cos(M_j nq \delta t)}{2},\\
    p^{(x)}(1) &= 1 - p^{(x)}(0),\\
    p^{(y)}(0) &= \frac{1 + \sin(M_j nq \delta t)}{2},\\
    p^{(y)}(1) &= 1 - p^{(y)}(0).
\end{align}
The two sets of measurements allow us to resolve the two-fold degeneracy in the phase $nqM_j\delta t$ in a given $[0,2\pi)$ interval that arises from measurement along only one of the axes. As we increase $\nu_j$, the observed probabilities $f_0^{(x)}$ and $f_0^{(y)}$ for obtaining 0 for $\sigma^x$ and $\sigma^y$, respectively, converge to their expectation values. In particular, at each stage we get an estimator $\Tilde{\phi}$ for $\phi \coloneqq M_jnq\delta t$ as
\begin{align}
    \Tilde{\phi} = \arctan2\lp 2f_0^{(y)} - 1, 2f_0^{(x)} - 1 \rp \in [0, 2\pi),
\end{align}
where $\arctan2$ is the two-argument arc-tangent with range $[0, 2\pi)$. With the other variables known and the estimate $\tilde{\phi}$, we have enough information to estimate $q(\boldsymbol{\theta})$, thus concluding the sensing experiment. We use this full bit-by-bit learning protocol in the numerical analysis of the privacy of the noisy Hamiltonian protocol in~\cref{app-subsec:hockey-stick-divergence}.

As we discuss in more detail below, these entangled sensor networks are vulnerable to \emph{differencing attacks}, mentioned briefly in~\cref{sec:intro}. Here, an adversarial party can run the sensing protocol in such a way that they can learn the parameter or parameters of one or more nodes in the network, compromising the privacy of the network. As such, we need a mechanism to protect against such attacks, which is where differential privacy comes into play.

\subsection{Classical differential privacy}\label{subsec:classical-dp}
With the sensing protocol introduced and a motivation for why we should consider differentially private mechanisms, we now offer a brief introduction to classical differential privacy. Differential privacy was formally introduced in its modern form most notably by Dwork and others in 2006~\cite{dwork2006a,dwork2006b}. It is motivated primarily in the context of honestly published databases compromising privacy, where the output of database queries alone can violate privacy. Techniques like secure multiparty computation generally do not address this type of privacy violation, but they can be combined with differential privacy to give stronger privacy guarantees than either alone~\cite{shi2011privacy-preserving}. Differential privacy has been used in several industries for various applications. Examples include protecting user privacy for statistical analyses on Google searches during the COVID-19 pandemic~\cite{aktay2020google} and protecting the data of power companies' customers~\cite{finster2015smart,pare2020applying}. For a general introduction to differential privacy, we refer the reader to the work by Dwork \etal~\cite{dwork2014}, and for an introduction with a focus on the complexity of differentially private algorithms, we suggest the work by Vadhan~\cite{vadhan2017}. We also highlight recent NIST guidelines~\cite{near2025} for evaluating differential privacy guarantees, suggesting the maturity and promise of differential privacy as a practical tool.

Intuitively, a good differentially private mechanism provides both privacy for individuals in a group and good utility. Stated another way, a differentially private algorithm is one in which a potentially adversarial observer cannot tell whether an individual's data was used in a computation based on the output of the algorithm. Differential privacy provides a way to hide the data of specific individuals while also keeping the integrity of the conclusions we can draw from the group. In differential privacy, we often work with \emph{datasets}:

\begin{definition}[Dataset]\label{def:dataset}
    A \emph{dataset} $D$ is a collection of (potentially sensitive) information about one or more individuals. If $D$ contains the data of $n \geq 1$ individuals, we take the $i\numth$ row to contain the attributes of the $i\numth$ individual.
\end{definition}

As an intuitive example of a dataset that demonstrates the need for a differentially private mechanism, consider a hospital that stores the medical records of $n$ patients, where each row contains the attributes of each patient such as name, age, sex, address, medical record number, and status for a certain disease (e.g., diabetes, hypertension, etc.). If a research team at the hospital or at an outside academic institution wishes to use this dataset for a study, the patients included in the dataset would likely want their data to remain private and anonymous while the research team would like to access as much of the information as possible to be able to draw meaningful conclusions about their research hypothesis. Thus, we must have a mechanism that allows the hospital or manager of the dataset to release as much information as possible while still retaining sufficient privacy for all patients included in the dataset. The tools from differential privacy allow us to do exactly this.

A differentially private mechanism should protect against certain classes of adversarial attacks. One such attack that we have already mentioned is the differencing attack, where an adversary can make two or more queries and, using the two outputs, learn the data of an individual in the dataset. Another type of attack that differential privacy is designed to protect against is a \emph{linkage attack}, where an adversary can combine anonymized data from multiple sources to piece together the identity of an individual in a dataset. For example, Latanya Sweeney conducted a famous linkage attack in 1997 to link the then-governor of Massachusetts with his medical records using only publicly available information~\cite{sweeney1997}. Thus, even though direct identifiers (e.g., an individual's name) may be removed from a dataset, a clever combination of data from several sources is sufficient to reveal an individual's identity in a dataset. By adding controlled, randomized noise to queries, differential privacy makes it difficult to perform such attacks.

With this in mind, we formally define the relevant concepts from classical differential privacy. We start by defining classical global and local differentially private mechanisms, which will be relevant depending on which network setting we are considering later on in the sensing context. We start with global differential privacy, where the adversary can compute functions on datasets $D$ of all individuals~\cite{vadhan2017}. We call the set of possible functions we can compute on the dataset the \emph{query space}. In the definition, we include a subscript "$c$" to denote "classical" variables, where later, no subscript indicates "quantum" variables, and we include a superscript "$g$" to indicate that this is a global mechanism.

\begin{definition}[Classical global differential privacy]\label{def:cgdp}
    Let $\mathcal{X}^n$ be the set of all possible datasets of $n$ individuals and let $\varepsilon \geq 0$. A trusted curator holds a dataset $D \in \mathcal{X}^n$. Access to the data is provided via a randomized (i.e., probabilistic) mechanism $\mathcal{M}^g_c : \mathcal{X}^n \times \mathcal{Q}^g_c \to \mathcal{Y}^g_c$, where $\mathcal{Q}^g_c$ is the query space and $\mathcal{Y}^g_c$ is the output space of $\mathcal{M}^g_c$. Denote a query as $q^g_c \in \mathcal{Q}^g_c$. Then, $\mathcal{M}^g_c$ is \emph{$\varepsilon$-globally differentially private} if, for every pair of datasets $D, D' \in \mathcal{X}^n$ that differ by one row and for all $A \subseteq \mathcal{Y}^g_c$, we have
    \begin{align}
        \Pr[\mathcal{M}^g_c(D, q^g_c) \in A] \leq e^\varepsilon \cdot \Pr[\mathcal{M}^g_c(D', q^g_c) \in A].
    \end{align}
\end{definition}
        
We interpret this as saying that a global randomized mechanism is differentially private if, given access to two datasets that are close (i.e., they differ by one row), the output of the mechanism, and a desire for high privacy (i.e., a small value for $\varepsilon$), an adversary will have a small probability of being able to tell which dataset was used as the input.

While global differential privacy is usually what is meant by "differential privacy" in the classical literature, local differential privacy~\cite{bebensee2019} has in recent years become quite popular, with companies like Apple~\cite{apple2017} and Google~\cite{erlingsson2014} developing and deploying their own local differential privacy mechanisms~\cite{vadhan2017}. Note that the superscript "$\ell$" indicates that this is a local mechanism:

\begin{definition}[Classical local differential privacy]\label{def:cldp}
    Let $\mathcal{X}$ be the set of (potentially sensitive) possible individuals' data points, $\varepsilon \geq 0$, and $\mathcal{M}^\ell_c : \mathcal{X} \times \mathcal{Q}^\ell_c \to \mathcal{Y}^\ell_c$ be a randomized mechanism, where $\mathcal{Q}^\ell_c$ is the query space and $\mathcal{Y}^\ell_c$ is the output space of $\mathcal{M}^\ell_c$. Denote a query as $q_c^\ell \in \mathcal{Q}_c^\ell$. Then, $\mathcal{M}^\ell_c$ is $\varepsilon$-\emph{locally differentially private} if, for any pair of neighboring data points $x, x' \in \mathcal{X}$ and for all $A \subseteq \mathcal{Y}^\ell_c$, we have
    \begin{align}
        \Pr[\mathcal{M}^\ell_c(x, q_c^\ell) \in A] \leq e^\varepsilon \cdot \Pr[\mathcal{M}^\ell_c(x', q_c^\ell) \in A].
    \end{align}
\end{definition}

That is, given access to the output of a locally differentially private mechanism, there is a small probability of being able to determine if the input to the mechanism was $x$ or $x'$, which are neighboring data points. Thus, if an adversary has access to the response of an individual, then the adversary will be unable to precisely learn the individual's personal data. What is meant by "neighboring" depends on the structure of the data. If the input is continuous, then "neighboring" refers to $x$ and $x'$ that are close in some distance measure. If the data is discrete, then "neighboring" refers to $x$ and $x'$ that differ by one entry. For example, if one attribute encoded in each individual's $x$ is hair color, then $x$ and $x'$ are neighboring if $x$ and $x'$ are equal in all attributes except for the value of the hair color (e.g., $x$ has brown hair and $x'$ has blond hair). We denote $x$ and $x'$ as neighboring with the notation $x \sim x'$.

Note that~\cref{def:cgdp,def:cldp} are very similar, with the main difference being that the local mechanism takes as input a single individual's data $x$ or $x'$ whereas the global mechanism takes as input \emph{all} individuals' data in the form of a dataset $D$ or $D'$. As we will see, notions of global differential privacy apply to the centralized network setting (\cref{sec:centralized-network}) while those of local differential privacy apply to the decentralized setting (\cref{sec:decentralized-network}).

While a differentially private mechanism provides privacy, it should still give accurate answers to our queries. In classical differential privacy, we often consider the \emph{additive error}, which is how much error the differentially private mechanism introduces. Here, we take the output space $\mathcal{Y}_c^\ell$ to be the set of real numbers: $\mathcal{Y}_c^\ell = \RR$.

\begin{definition}[Additive error with probability $1-\zeta$]\label{def:additive-error}
    Given a query $q_c^\ell \in \mathcal{Q}_c^\ell$ and a data point $x \in \mathcal{X}$, a randomized mechanism $\mathcal{M}_c^\ell : \mathcal{X} \times \mathcal{Q}_c^\ell \to \mathcal{Y}_c^\ell$ has \emph{additive error} at most $\alpha(x, q_c^\ell)$ with probability $1-\zeta$ if
    \begin{equation}
        \Pr[\abs{\mathcal{M}_c^\ell(x, q_c^\ell) - y} \leq \alpha(x, q_c^\ell)] \geq 1 - \zeta,
    \end{equation}
    where $y = q_c^\ell(x)$ and the probability is over the randomness of the mechanism $\mathcal{M}^\ell_c$.
\end{definition}

The definition for a global mechanism is similar. In classical differential privacy, there is often a tradeoff between the scaling of the additive error as a function of $n$ with the mechanism type, where global mechanisms can achieve an additive error scaling as $\bigOh{1/n}$ while local mechanisms can only achieve $\bigOh{1/\sqrt{n}}$. In the sensing setting, the analogous quantity to the additive error is the mean-squared error $\varepsilon_{\mathrm{MSE}}$, and we will see a similar tradeoff between the scaling of $\varepsilon_{\mathrm{MSE}}$ in the centralized versus decentralized network settings.

We now introduce several differentially private mechanisms. We start with the \emph{randomized response mechanism}. Randomized response relies on the notion of a \emph{counting query}:

\begin{definition}[Counting query]\label{def:counting-query}
    A \emph{counting query} $C$ is specified by a predicate on rows: $C : \mathcal{X} \to \{0, 1\}$, and can be extended to a dataset $D \in \mathcal{X}^n$ to give the fraction of rows that satisfy the predicate such that $C' : \mathcal{X}^n \to [0,1]$:
    \begin{align}
        C'(D) = \frac{1}{n}\defsum{x \in D}{}{C(x)}.
    \end{align}
\end{definition}

Intuitively, this is simply a query on a given dataset that counts the fraction of individuals that satisfy the criteria of our query (e.g., what fraction of people in our dataset have a disease?). Then, the randomized response mechanism~\cite{warner1965} is defined as follows:

\begin{definition}[Randomized response]\label{def:randomized-response}
    Let $\varepsilon \geq 0$ and $\mathcal{M}^{\text{RR}}_c : \mathcal{X} \times \mathcal{Q}_c^\ell \to \mathcal{Y}_c^\ell$ be a randomized mechanism. Here, we take $\mathcal{Q}_c^\ell : \mathcal{X} \to \{0,1\}$ to be a counting predicate, and $\mathcal{Y}_c^\ell \to \{0,1\}$. Then, for $C \in \mathcal{Q}^\ell_c$ and $x \in \mathcal{X}$, we have
    \begin{align}
        \mathcal{M}^{\text{RR}}_c(x) =
        \begin{cases}
            C(x) & \text{with probability } \frac{e^{\varepsilon}}{\left(e^{\varepsilon}+1\right)}\\
            \neg C(x) & \text{with probability } \frac{1}{\left(e^{\varepsilon}+1\right)},
        \end{cases}
    \end{align}
    where "$\neg$" denotes the bit flip of the output of $C$.
\end{definition}

Comparing the definition above with the definition of differential privacy in~\cref{def:cldp}, we can see that randomized response is $\varepsilon$-differentially private. Note that the randomized response mechanism as defined in~\cref{def:randomized-response} is a \emph{local} mechanism.

Another commonly used differential privacy mechanism is the \emph{Laplace mechanism}. In the following, we assume that we are in the global setting; we mention the differences for the local Laplace mechanism afterwards. First, we define the Laplace distribution:

\begin{definition}[Laplace distribution]\label{def:laplace-distribution}
    The \emph{Laplace distribution} (or double exponential distribution) is the distribution with probability density function
    \begin{align}
        f(x;\mu, b) = \frac{1}{2b}e^{-\frac{\abs{x-\mu}}{b}},
    \end{align}
    where $x \in \RR$, $\mu$ is the mean and $b > 0$ is a scale parameter. We denote the Laplace distribution by $\Lap{\mu,b}$ or, if we take $\mu = 0$, then by $\Lap{b}$. The variance is then $\sigma^2 = 2b^2$.
\end{definition}

We now define the global Laplace mechanism~\cite{vadhan2017}:

\begin{definition}[Global Laplace mechanism]\label{def:laplace-mechanism}
    For $\varepsilon > 0$, a query $q_c^g : \mathcal{X}^n \to \RR$, datasets $D, D' \in \mathcal{X}^n$, and global sensitivity given by
    \begin{align}
        \mathrm{GS} \coloneqq \max_{D \sim D'}{\abs{q_c^g(D) - q_c^g(D')}},
    \end{align}
    where $D \sim D'$ denotes that $D$ and $D'$ differ by one row, the \emph{Laplace mechanism} $\mathcal{M}^{\text{Lap}}_{c,\mathrm{GS}}$ over a domain $\mathcal{X}^n$ takes a dataset $D \in \mathcal{X}^n$ and outputs
    \begin{align}
        \mathcal{M}^{\text{Lap}}_{c,\mathrm{GS}}(D) = q_c^g(D) + \Lap{\mathrm{GS}/\varepsilon}.
    \end{align}
\end{definition}

From the definition, we see that the differential privacy mechanism outputs a query with some noise drawn from the Laplace distribution. We refer the reader to Ref.~\cite{vadhan2017} for a detailed analysis of the benefits of using the Laplace distribution for the noise source rather than the Gaussian distribution, though as we will see, in some cases the Gaussian distribution is preferred. Importantly, we have the following result:

\begin{theorem}[Privacy of Laplace mechanism]\label{thm:laplace-mechanism-properties}
    The Laplace mechanism is $\varepsilon$-differentially private.
\end{theorem}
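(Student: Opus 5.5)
We compare output densities pointwise and then integrate over an arbitrary event $A$. Fix the query $q_c^g:\mathcal{X}^n\to\RR$ and two neighboring datasets $D\sim D'$. Write $b=\mathrm{GS}/\varepsilon$. By \cref{def:laplace-mechanism}, $\mathcal{M}^{\text{Lap}}_{c,\mathrm{GS}}(D)$ is distributed as $\Lap{q_c^g(D),b}$. Hence, by \cref{def:laplace-distribution}, it has density
\begin{align}
    p_D(y) = \frac{\varepsilon}{2\,\mathrm{GS}}\exp\lp -\frac{\varepsilon\abs{y-q_c^g(D)}}{\mathrm{GS}} \rp,
\end{align}
and $p_{D'}$ is the same expression with $D'$ in place of $D$. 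If $\mathrm{GS}=0$, the query is constant on neighbors, so the two output distributions coincide and the claim is trivial. We therefore assume $\mathrm{GS}>0$.

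The key step is a pointwise bound on the ratio of the two densities. For every $y\in\RR$,
\begin{align}
    \frac{p_D(y)}{p_{D'}(y)} = \exp\lp \frac{\varepsilon\lp \abs{y-q_c^g(D')}-\abs{y-q_c^g(D)} \rp}{\mathrm{GS}} \rp \leq \exp\lp \frac{\varepsilon\abs{q_c^g(D)-q_c^g(D')}}{\mathrm{GS}} \rp \leq e^{\varepsilon}.
\end{align}
The first inequality is the reverse triangle inequality. The second uses the definition of global sensitivity as the maximum of $\abs{q_c^g(D)-q_c^g(D')}$ over neighboring pairs.

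Next we integrate the bound $p_D(y)\leq e^{\varepsilon}p_{D'}(y)$ over any measurable $A\subseteq\RR$. This gives
\begin{align}
    \Pr[\mathcal{M}^{\text{Lap}}_{c,\mathrm{GS}}(D)\in A]\leq e^{\varepsilon}\Pr[\mathcal{M}^{\text{Lap}}_{c,\mathrm{GS}}(D')\in A],
\end{align}
which is exactly the condition of \cref{def:cgdp}.

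No step is a genuine obstacle. The only points needing care are the degenerate case $\mathrm{GS}=0$, handled above, and the restriction of \cref{def:cgdp} to measurable sets $A$. If $\mathrm{GS}=\infty$, the mechanism is not well defined, so we take the sensitivity to be finite. The same argument also gives the local Laplace mechanism: replace $\mathcal{X}^n$ by $\mathcal{X}$ and let the sensitivity be the maximum over neighboring data points $x\sim x'$, which yields \cref{def:cldp}.
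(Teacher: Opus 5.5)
Your proof is correct, and it is the standard argument. The paper gives no proof of its own and defers to Ref.~\cite{vadhan2017}, which uses exactly this pointwise bound on the density ratio via the triangle inequality, followed by integration over the event $A$; your treatment of the degenerate case $\mathrm{GS}=0$ and your remark about the local version are sound additions.
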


See Ref.~\cite{vadhan2017} for a proof of this statement. In the local setting, the domain is $\mathcal{X}$, so the local sensitivity is given by
\begin{align}
    \text{LS} = \max_{x \sim x'}{\abs{q_c^\ell(x) - q_c^\ell(x')}}, 
\end{align}
where $x,x' \in \mathcal{X}$ and $x \sim x'$ means that $x$ and $x'$ are neighboring, as described above. Correspondingly, this gives the local Laplace mechanism, where noise is added locally to each individual's data, rather than to queries of the whole database. As we will see, the noisy Hamiltonian protocol that we introduce in~\cref{subsec:noisy-hamiltonian-protocol} is based on the local Laplace mechanism.

We also make use of the  \emph{Gaussian mechanism}~\cite{dwork2014} in a variant of the noisy Hamiltonian protocol in~\cref{subsec:nhp-honest-fraction}:

\begin{definition}[Gaussian mechanism]\label{def:gaussian-mechanism}
    Let $\varepsilon, \delta > 0$, and let $q_c^g : \mathcal{X}^n \to \RR$ be a query. The global sensitivity of $q_c^g$ is
    \begin{align}
        \mathrm{GS} \coloneqq \max_{D \sim D'}\abs{q_c^g(D)-q_c^g(D')},
    \end{align}
    where the maximum is over all neighboring datasets $D, D' \in \mathcal{X}^n$
    that differ in one row. The \emph{Gaussian mechanism} with global sensitivity $\mathrm{GS}$ is the randomized mechanism $\mathcal{M}_{c,\mathrm{GS}}^{\mathrm{Gauss}} : \mathcal{X}^n \to \RR$ defined by
    \begin{align}
        \mathcal{M}_{c,\mathrm{GS}}^{\mathrm{Gauss}}(D) = q_c^g(D) + Z,\quad Z \sim \mathcal{N}(0,\sigma^2),
    \end{align}
    where
    \begin{align}
        \sigma = \frac{\kappa\cdot\mathrm{GS}}{\varepsilon}
    \end{align}
    and $\kappa^2 > 2\ln(1.25/\delta)$.
\end{definition}

In contrast to the Laplace mechanism, the Gaussian mechanism achieves $(\varepsilon,\delta)$-differential privacy (we denote $(\varepsilon,0)$-differential privacy by $\varepsilon$-differential privacy, often called \emph{perfect} differential privacy). As such, $(\varepsilon, \delta)$-differential privacy is strictly weaker than $\varepsilon$-differential privacy when $\delta > 0$ and is often called \emph{approximate} differential privacy. In practice, we aim for small values of $\delta$, on the order $10^{-5}$. Informally, we think of $\delta$ as the probability of \emph{failing} to achieve $\varepsilon$-differential privacy. For the Gaussian mechanism, we have the following:

\begin{theorem}[Privacy of Gaussian mechanism]\label{thm:gaussian-mechanism-properties}
    The Gaussian mechanism is $(\varepsilon, \delta)$-differentially private.
\end{theorem}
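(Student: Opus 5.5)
We would follow the standard privacy-loss argument of Dwork and Roth. We work in the regime $\varepsilon < 1$, where the constant $\kappa^2 > 2\ln(1.25/\delta)$ is known to suffice. Fix neighboring datasets $D \sim D'$ and let $\Delta = q_c^g(D) - q_c^g(D')$, so that $\abs{\Delta} \leq \mathrm{GS}$. The two output distributions are then $\mathcal{N}(q_c^g(D),\sigma^2)$ and $\mathcal{N}(q_c^g(D'),\sigma^2)$. By translation invariance we may shift and assume $q_c^g(D)=0$ and $q_c^g(D')=\Delta$. For an output $y = x$, where $x \sim \mathcal{N}(0,\sigma^2)$, define the privacy loss
\begin{align}
    L(x) = \ln\frac{e^{-x^2/(2\sigma^2)}}{e^{-(x-\Delta)^2/(2\sigma^2)}} = \frac{\abs{2x\Delta - \Delta^2}}{2\sigma^2}
\end{align}
in absolute value, taking the worse of the two sign orientations.

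The key step is to show that $\Pr_x[\abs{L(x)} > \varepsilon] \leq \delta$. This event is implied by $\abs{x} > \sigma^2\varepsilon/\mathrm{GS} - \mathrm{GS}/2$. We then apply the Gaussian tail bound $\Pr[\abs{x} > t] \leq \frac{2\sigma}{\sqrt{2\pi}\,t}e^{-t^2/(2\sigma^2)}$ with $\sigma = \kappa\,\mathrm{GS}/\varepsilon$. Setting $t = \sigma^2\varepsilon/\mathrm{GS} - \mathrm{GS}/2$, we need the bound to be at most $\delta$. Taking logarithms, this reduces to requiring
\begin{align}
    \ln\frac{t}{\sigma} + \frac{t^2}{2\sigma^2} \geq \ln\Bigl(\sqrt{\tfrac{2}{\pi}}\tfrac{1}{\delta}\Bigr).
\end{align}
We expect this inequality to be the main obstacle. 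It is a purely analytic but fiddly verification. Writing $t/\sigma = \kappa - \varepsilon/(2\kappa)$, it suffices that $\kappa^2 - \varepsilon \geq 2\ln(\sqrt{2/\pi}/\delta)$ and that the $\ln(t/\sigma)$ term is nonnegative (for $\delta$ small, $\kappa \geq 3/2$ makes $t/\sigma \geq 1$). Using $\varepsilon \leq 1$, these combine into $\kappa^2 \geq 2\ln(\sqrt{2/\pi}/\delta) + \ln e^{1} \cdot \varepsilon$. This is implied by $\kappa^2 > 2\ln(1.25/\delta)$, because $2\ln(1.25) - 2\ln\sqrt{2/\pi} \approx 0.9 \geq$ the slack needed after accounting for $\varepsilon<1$; we would check this constant carefully.

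Finally, we convert the loss bound to the privacy guarantee. Let $B = \{x : \abs{L(x)} \leq \varepsilon\}$, so that $\Pr[B^c] \leq \delta$. For any measurable $A \subseteq \RR$,
\begin{align}
    \Pr[\mathcal{M}(D)\in A] \leq \Pr[\mathcal{M}(D)\in A\cap B] + \delta \leq e^{\varepsilon}\Pr[\mathcal{M}(D')\in A] + \delta,
\end{align}
where the second inequality uses the pointwise density ratio bound on $B$. By symmetry in $D$ and $D'$, this yields $(\varepsilon,\delta)$-differential privacy, which is the statement of \cref{thm:gaussian-mechanism-properties}.
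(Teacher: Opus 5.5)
Your argument is the one the paper relies on. The paper gives no proof of its own and cites Dwork--Roth, whose proof is exactly this privacy-loss bound combined with a Gaussian tail estimate.

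Your restriction to $\varepsilon<1$ is necessary, not a convenience. The paper's definition allows any $\varepsilon>0$, but with $\sigma=\kappa\,\mathrm{GS}/\varepsilon$ the exact divergence is $\Phi(\varepsilon/(2\kappa)-\kappa)-e^{\varepsilon}\Phi(-\varepsilon/(2\kappa)-\kappa)$. For $\delta=10^{-5}$, $\varepsilon=10$ and $\kappa$ just above threshold, this is about $2\times 10^{-5}>\delta$, so the statement needs the hypothesis.

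There is also a small slip in direction. What you need is the inclusion $\{|L(x)|>\varepsilon\}\subseteq\{|x|>t\}$, which follows from $|2x\Delta-\Delta^2|\le 2|x||\Delta|+\Delta^2$ together with $|\Delta|\le\mathrm{GS}$. Saying the event ``is implied by'' $|x|>t$ states the converse.

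The step that actually fails is the constant check you deferred. After dropping $2\ln(t/\sigma)$ and $\varepsilon^2/(4\kappa^2)$, you need $\kappa^2\ge 2\ln(\sqrt{2/\pi}/\delta)+\varepsilon$. The hypothesis only gives $\kappa^2>2\ln(\sqrt{2/\pi}/\delta)+c_0$ with $c_0=2\ln 1.25-\ln(2/\pi)\approx 0.898$. So for $\varepsilon\in(0.898,1)$ and $\kappa$ near its threshold, your reduced inequality is false, and the argument as written does not cover the full range $\varepsilon<1$.

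The fix is to use the size of the logarithm, not just its sign:
\begin{itemize}
\item If $\kappa\ge 3/2$ and $\varepsilon<1$, then $t/\sigma=\kappa-\varepsilon/(2\kappa)\ge 7/6$, so $2\ln(t/\sigma)\ge 2\ln(7/6)\approx 0.308$.
\item Since $0.898+0.308>1>\varepsilon$, the full inequality $2\ln(t/\sigma)+\kappa^2-\varepsilon+\varepsilon^2/(4\kappa^2)\ge 2\ln(\sqrt{2/\pi}/\delta)$ holds.
\end{itemize}

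Your ``$\delta$ small'' caveat also leaves a range uncovered. The condition $\kappa\ge 3/2$ follows from $\kappa^2>2\ln(1.25/\delta)$ only when $\delta\le 1.25e^{-9/8}\approx 0.41$. For larger $\delta$, close the case separately: the total variation distance $2\Phi(|\Delta|/(2\sigma))-1\le 2\Phi(\varepsilon/(2\kappa))-1$ is already below $\delta$ there, which trivially gives $(\varepsilon,\delta)$-privacy.
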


This statement is proven in Ref.~\cite{dwork2014}. Finally, we highlight the composition theorem:

\begin{theorem}[Composition theorem]\label{thm:composition-thm}
    Let $\mathcal{M}_1^g, \ldots, \mathcal{M}_k^g$ be randomized mechanisms, where each $\mathcal{M}_j^g$ is $(\varepsilon_j, \delta_j)$-differentially private. Then, the mechanism that releases all outputs $\mathcal{M}^g(D) = (\mathcal{M}_1^g(D), \ldots, \mathcal{M}_k^g(D))$ is $(\varepsilon,\delta)$-differentially private, where
    \begin{equation}
        \varepsilon = \defsum{j=1}{k}{\varepsilon_j}\quad\text{and}\quad \delta = \defsum{j=1}{k}{\delta_j}.
    \end{equation}
\end{theorem}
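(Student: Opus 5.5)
The plan is to prove the basic (non-adaptive) composition theorem by reducing to the joint output distribution. The mechanisms $\mathcal{M}_1^g,\ldots,\mathcal{M}_k^g$ use independent internal randomness. Hence for fixed $D$ the law of $\mathcal{M}^g(D)$ on $\mathcal{Y}_1\times\cdots\times\mathcal{Y}_k$ is the product of the laws of the $\mathcal{M}_j^g(D)$.

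I will argue by induction on $k$, so it suffices to treat $k=2$. The general case then follows by grouping $(\mathcal{M}_1^g,\ldots,\mathcal{M}_{k-1}^g)$ as a single mechanism, which by the inductive hypothesis is $(\sum_{j<k}\varepsilon_j,\sum_{j<k}\delta_j)$-DP, and composing it with $\mathcal{M}_k^g$.

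First I settle the pure case, all $\delta_j=0$, when the outputs have densities or mass functions $p_j$. For neighbouring $D\sim D'$ the pointwise inequality $p_j(y_j\mid D)\le e^{\varepsilon_j}p_j(y_j\mid D')$ holds almost everywhere. This is equivalent to the set-wise definition in \cref{def:cgdp}, by taking $A$ to be small sets or level sets of the likelihood ratio. Multiplying these inequalities gives $p(y_1,y_2\mid D)\le e^{\varepsilon_1+\varepsilon_2}p(y_1,y_2\mid D')$, and integrating over any $A$ yields the claim with $\delta=0$.

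For the approximate case I will use the standard characterization of $(\varepsilon,\delta)$-DP. The definition $\Pr[\mathcal{M}(D)\in A]\le e^\varepsilon\Pr[\mathcal{M}(D')\in A]+\delta$ for all $A$ is equivalent to the bound $\int (p_D - e^{\varepsilon}p_{D'})_+\le\delta$ on the hockey-stick divergence. With $k=2$, fix $y_1$ and bound the inner integral over $y_2$ using $\mathcal{M}_2^g$. Write
\begin{align}
\Pr[(Y_1,Y_2)\in A\mid D]=\int p_1(y_1\mid D)\,\Pr[Y_2\in A_{y_1}\mid D]\,dy_1,
\end{align}
where $A_{y_1}$ is the section of $A$. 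Apply $\Pr[Y_2\in A_{y_1}\mid D]\le \min\{1,e^{\varepsilon_2}\Pr[Y_2\in A_{y_1}\mid D']+\delta_2\}$, which gives the $\delta_2$ term directly. The remaining term is $\int p_1(y_1\mid D)\min\{1,e^{\varepsilon_2}g(y_1)\}\,dy_1$, with $g(y_1)=\Pr[Y_2\in A_{y_1}\mid D']$. The function $h(y_1)=\min\{e^{-\varepsilon_2},g(y_1)\}$ takes values in $[0,1]$. The $(\varepsilon_1,\delta_1)$-DP of $\mathcal{M}_1^g$ extends from indicator functions to such $[0,1]$-valued test functions via the layer-cake representation $h=\int_0^1 \mathbf{1}[h>s]\,ds$. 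This yields $\int p_1(\cdot\mid D)h\le e^{\varepsilon_1}\int p_1(\cdot\mid D')h+\delta_1$. Multiplying by $e^{\varepsilon_2}$ and using $e^{\varepsilon_2}h\le 1$ gives the estimate $e^{\varepsilon_1+\varepsilon_2}\Pr[(Y_1,Y_2)\in A\mid D']+\delta_1 e^{\varepsilon_2}+\delta_2$.

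The main obstacle is precisely this step: the naive argument leaves $\delta_1e^{\varepsilon_2}$ rather than $\delta_1$. To fix it, I will instead apply the $\delta$-decomposition. Each $(\varepsilon_j,\delta_j)$-DP pair $(P_j,P'_j)$ lets one write $P_j=(1-\delta_j)\tilde P_j+\delta_j E_j$ with $\tilde P_j\le e^{\varepsilon_j}P'_j$ pointwise; this is obtained by removing the excess mass $(p_j-e^{\varepsilon_j}p'_j)_+$, which has total mass at most $\delta_j$. Under the product measure, the event that neither "bad" component is selected has probability at least $(1-\delta_1)(1-\delta_2)\ge 1-\delta_1-\delta_2$. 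On that event the pointwise pure argument gives the factor $e^{\varepsilon_1+\varepsilon_2}$, and the bad event contributes at most $\delta_1+\delta_2$. This yields exactly $(\varepsilon_1+\varepsilon_2,\delta_1+\delta_2)$, and the induction finishes the proof.
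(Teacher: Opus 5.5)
The paper does not prove this statement; it defers to Ref.~\cite{dwork2014}. The two-mechanism proof there is exactly the sectioning argument you started with and then abandoned.

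Your ``main obstacle'' is self-inflicted. Your function $\phi(y_1)=\min\{1,e^{\varepsilon_2}g(y_1)\}$ is already $[0,1]$-valued. Apply the layer-cake extension of $(\varepsilon_1,\delta_1)$-DP to $\phi$ itself, rather than to $h=e^{-\varepsilon_2}\phi$ followed by rescaling:
\[
\int p_1(y_1\mid D)\,\phi(y_1)\,dy_1\le e^{\varepsilon_1}\int p_1(y_1\mid D')\,\phi(y_1)\,dy_1+\delta_1\le e^{\varepsilon_1+\varepsilon_2}\Pr[(Y_1,Y_2)\in A\mid D']+\delta_1 .
\]
Together with the $\delta_2$ you already extracted, this closes the $k=2$ case. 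Equivalently, since $h\le e^{-\varepsilon_2}$, its layer-cake integral runs only over $s\in[0,e^{-\varepsilon_2}]$, so the additive term for $h$ is $e^{-\varepsilon_2}\delta_1$, not $\delta_1$. Dwork and Roth write the same step pointwise: $p_1(a\mid D)\le e^{\varepsilon_1}p_1(a\mid D')+\mu(a)$, where $\mu=(p_1(\cdot\mid D)-e^{\varepsilon_1}p_1(\cdot\mid D'))_+$ and $\int\mu\le\delta_1$. The $\mu$-term then multiplies a quantity at most $1$.

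Your decomposition route is also valid and genuinely different (a coupling, or ``bad event'', argument), but one inequality is misstated. After normalizing, $\tilde P_j\le e^{\varepsilon_j}P'_j$ is false in general. Take $\varepsilon_j=0$, $P_j$ the point mass at $0$, and $P'_j$ uniform on $\{0,1\}$; then $\delta_j=1/2$ and $\tilde P_j=P_j$, yet $\tilde P_j(\{0\})=1>1/2$. What holds is $(1-\delta_j)\tilde P_j\le e^{\varepsilon_j}P'_j$, so you must keep the weight $\Pr[\text{good}]$ attached when you bound the good-event contribution.

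The cleanest fix is to work with the sub-probability measures $R_j=\min\{p_j,e^{\varepsilon_j}p'_j\}$ and $B_j=(p_j-e^{\varepsilon_j}p'_j)_+$. Expanding $\bigotimes_j(R_j+B_j)$, the term $\bigotimes_jR_j$ is dominated by $e^{\sum_j\varepsilon_j}\bigotimes_jP'_j$. All remaining terms have total mass $1-\prod_j(1-\delta'_j)\le\sum_j\delta_j$, where $\delta'_j=B_j(\mathcal{Y}_j)\le\delta_j$.

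Done this way, your route needs no induction and even gives the marginally sharper $\delta=1-\prod_j(1-\delta_j)$. The sectioning proof is shorter and adapts immediately to adaptive composition, since it only uses that the law of $Y_2$ given $Y_1=y_1$ is DP for each fixed $y_1$. Both routes need the mechanisms to use independent internal randomness. You rightly make that assumption explicit; the paper's statement leaves it implicit.
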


We refer the reader to Refs.~\cite{dwork2006a,dwork2014} for proof and more discussion on the composition theorem. As is often the case, if each $\mathcal{M}_j^g$ is itself $(\varepsilon_0, \delta_0)$-differentially private, then the full mechanism is $(k\varepsilon_0, k\delta_0)$-differentially private.

\subsection{Quantum differential privacy}\label{subsec:quantum-dp}
We now provide some definitions corresponding to quantum differential privacy in analogy with the classical definitions discussed above. We also review some important results from the quantum differential privacy literature. We start with the global setting.

\begin{definition}[Global quantum mechanism]\label{def:global-quantum-mechanism}
    A \emph{global quantum mechanism} $\mathcal{M}^g : \mathcal{X}^n \to \mathcal{Y}^g$ for a set of classical datasets $\mathcal{X}^n$ in which each dataset contains $n$ rows (corresponding to data for $n$ individuals) and a set of output quantum states $\mathcal{Y}^g$ is a two-step process that
    \begin{enumerate}
        \item Encodes a classical dataset $D \in \mathcal{X}^n$ into a quantum state $\rho(D)$ and
        \item Applies a quantum channel (i.e., a completely-positive trace-preserving (CPTP) map) to $\rho(D)$ to output a quantum state.
    \end{enumerate}
\end{definition}

We could also think of steps 1 and 2 as a single step, though we delineate them in the definition for clarity. In the context of quantum sensing, a query is a run of the sensing protocol on a specific function of interest (e.g., the average of the full network). With this in mind, we define our notion of quantum global differential privacy~\cite{hirche2023}:
    
\begin{definition}[Quantum global differential privacy]\label{def:qgdp}
    Let $\mathcal{X}^n$ be a set of possible datasets and $\mathcal{M}^g$ be a global quantum mechanism. We consider $\mathcal{M}^g$ to be $(\varepsilon, \delta)$-differentially private if for any POVM $\{P_i\}$ such that $\defsum{i}{}{P_i} = \identity$ and all datasets $D,D'\in \mathcal{X}^n$ that differ by one row (i.e., $D \sim D'$), we have
    \begin{align}
        \label{eqn:qgdp-prob}
        \Tr[P\mathcal{M}^g(D)] \leq e^\varepsilon\Tr[P\mathcal{M}^g(D')] + \delta,
    \end{align}
    for POVM element $P$.
\end{definition}

We note two things about this definition. First, since we want to protect datasets $D$, we take $D, D'$ as input and the quantum state produced by the quantum mechanism $\mathcal{M}^g$ as output. Compare this to previous definitions in, for example~\cite{hirche2023,angrisani2023}, where the input is a quantum state $\rho$ and "quantum differential privacy" is based on neighboring states $\rho \sim \sigma$, according to an appropriate distance measure. Our definition can also be viewed in this way by first applying some encoding algorithm $\mathcal{A}_{\text{enc}}$ to encode $D$ into the state $\rho$, that is, $D \xmapsto{\mathcal{A}_{\text{enc}}} \rho(D)$. Then, by $\rho(D) \sim \rho(D')$, we mean that $D \sim D'$. Second, we interpret~\cref{eqn:qgdp-prob} as relating the probabilities that an adversary can distinguish between $D$ and $D'$ as the inputs to the quantum channel $\mathcal{M}^g$ by making a measurement $P$ on the output states. This is analogous to how we interpret the indistinguishability of the output values from two datasets that differ by one row in the classical case. 

We now move on to the local setting, where the space of operations is restricted.

\begin{definition}[Local quantum mechanism]\label{def:local-quantum-mechanism}
    A \emph{local quantum mechanism} $\mathcal{M}^\ell : \mathcal{X} \to \mathcal{Y}^\ell$ for a set of possible data points $\mathcal{X}$ and a set of output quantum states $\mathcal{Y}^\ell$ is a two-step process that
    \begin{enumerate}
        \item Encodes a classical data point $x \in \mathcal{X}$ into a quantum state $\rho(x)$ and
        \item Applies a local quantum channel to $\rho(x)$ to output a quantum state.
    \end{enumerate}
\end{definition}

This leads us to an analogous definition of local differential privacy:

\begin{definition}[Quantum local differential privacy]\label{def:qldp}
    Let $\mathcal{X}$ be the set of possible individual data points and $\mathcal{M}^\ell$ be a local quantum mechanism. $\mathcal{M}^\ell$ is $(\varepsilon, \delta)$-differentially private if for any POVM $\{P_i\}$ such that $\defsum{i}{}{P_i} = \identity$ and all possible data points $x,x'\in \mathcal{X}$ such that $x \sim x'$, we have
    \begin{align}
        \label{eqn:qldp-prob}
        \Tr[P\mathcal{M}^\ell(x)] \leq e^\varepsilon\Tr[P\mathcal{M}^\ell(x')] + \delta,
    \end{align}
    for POVM element $P$.
\end{definition}

We conclude this subsection with some useful results that we will use in our privacy proofs. We start by defining the \emph{quantum hockey-stick divergence}:

\begin{definition}[Quantum hockey-stick divergence]\label{def:quantum-hockey-stick-divergence}
    Given $\gamma \in \RR_+$ and quantum states $\rho$ and $\rho'$, the \emph{quantum hockey-stick divergence} is defined as
    \begin{align}
        E_{\gamma}(\rho\|\rho') \coloneqq \Tr[(\rho-\gamma\rho')_+],
    \end{align}
    where $\Tr[\sigma_+]$ is the sum of the positive eigenvalues of $\sigma$.
\end{definition}

The hockey-stick divergence is a generalization of statistical distance and is frequently used in both classical and quantum differential privacy; it is a special case of an $f$-divergence~\cite{renyi1961}. Using this definition, we have the following:

\begin{theorem}[Quantum differential privacy equivalence]\label{thm:hockey-stick-div}
    Let $\varepsilon > 0$, $\delta \geq 0$, $\gamma = e^\varepsilon$, and $\mathcal{M}$ be a quantum mechanism (either global or local). Then, the following two statements are equivalent:
        \begin{enumerate}
            \item $\mathcal{M}$ is $(\varepsilon, \delta)$-differentially private.
            \item $\delta \geq \sup_{\alpha \sim \alpha'}{E_\gamma(\mathcal{M}(\alpha) \| \mathcal{M}(\alpha'))},$
        \end{enumerate}
    where $E_\gamma$ is the quantum hockey-stick divergence defined in~\cref{def:quantum-hockey-stick-divergence} and where we take $\alpha = D$ in the global model and $\alpha = x$ in the local model.
\end{theorem}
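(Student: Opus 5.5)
The plan is to prove the equivalence by showing that, for a fixed pair of neighboring inputs $\alpha \sim \alpha'$ with output states $\rho = \mathcal{M}(\alpha)$ and $\rho' = \mathcal{M}(\alpha')$, the quantity
\begin{align}
    \sup_{0 \leq P \leq \identity}\lp \Tr[P\rho] - \gamma\Tr[P\rho'] \rp
\end{align}
equals $E_\gamma(\rho\|\rho')$. Both directions then follow by taking the supremum over neighboring pairs. The point is that \cref{def:qgdp,def:qldp} quantify over POVM elements $P$, i.e., operators with $0 \leq P \leq \identity$, since any such $P$ can be completed to the two-outcome POVM $\{P, \identity - P\}$. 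So $(\varepsilon,\delta)$-differential privacy is precisely the statement that $\Tr[P(\rho - \gamma\rho')] \leq \delta$ for all such $P$ and all neighboring pairs.

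The key step is a variational characterization of the positive part of a Hermitian operator. Let $X = \rho - \gamma\rho'$, which is Hermitian, and write its spectral decomposition $X = \sum_k \lambda_k \ketbra{k}{k}$. Let $\Pi_+$ be the projector onto the span of eigenvectors with $\lambda_k > 0$.

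First, for any $0 \leq P \leq \identity$, we have
\begin{align}
    \Tr[PX] = \sum_k \lambda_k \mel{k}{P}{k} \leq \sum_{\lambda_k>0}\lambda_k \mel{k}{P}{k} \leq \sum_{\lambda_k>0}\lambda_k = \Tr[X_+],
\end{align}
using $0 \leq \mel{k}{P}{k} \leq 1$.

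Second, the choice $P = \Pi_+$ is a valid POVM element and attains $\Tr[\Pi_+ X] = \Tr[X_+]$. Hence the supremum equals $E_\gamma(\rho\|\rho')$ and is in fact a maximum. In infinite dimensions, the same argument applies with the spectral measure in place of the sum.

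With this lemma in hand, the two directions are immediate.

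\emph{(1) $\Rightarrow$ (2).} If $\mathcal{M}$ is $(\varepsilon,\delta)$-differentially private, apply the defining inequality with $P = \Pi_+$ for each neighboring pair. This gives $E_\gamma(\mathcal{M}(\alpha)\|\mathcal{M}(\alpha')) \leq \delta$, and taking the supremum over pairs yields (2).

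\emph{(2) $\Rightarrow$ (1).} If $\delta$ bounds the supremum of the divergences, then for any POVM element $P$ and neighboring pair,
\begin{align}
    \Tr[P\mathcal{M}(\alpha)] - e^\varepsilon\Tr[P\mathcal{M}(\alpha')] \leq E_\gamma(\mathcal{M}(\alpha)\|\mathcal{M}(\alpha')) \leq \delta.
\end{align}
This is exactly \cref{eqn:qgdp-prob} or \cref{eqn:qldp-prob}.

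I expect the only real subtlety to be the bookkeeping rather than the mathematics. One must confirm that the definitions range over all POVM elements, so that the optimal projector $\Pi_+$ is admissible, and handle the global and local cases uniformly by letting $\alpha$ denote $D$ or $x$ respectively. The argument is the same in both cases, since it uses only that $\mathcal{M}(\alpha)$ and $\mathcal{M}(\alpha')$ are density operators.
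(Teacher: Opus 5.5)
Your proof is correct and is the standard argument. The paper gives no proof of its own and defers to Ref.~\cite{hirche2023}, where the equivalence follows from the same variational identity you establish, $E_\gamma(\rho\|\rho') = \max_{0 \leq P \leq \identity}\Tr[P(\rho-\gamma\rho')]$ (with the maximum attained at the projector onto the positive eigenspace), together with your observation that every $0 \leq P \leq \identity$ is an element of the POVM $\{P, \identity - P\}$.
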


We refer the reader to Ref.~\cite{hirche2023} for a proof of this statement. Finally, we will find the following post-processing theorem helpful in our proofs:

\begin{theorem}[Post-processing theorem]\label{thm:post-prosessing}
    Let $\mathcal{M}$ be a local quantum mechanism that is $(\varepsilon,\delta)$-locally differentially private. Let $\mathcal{N}$ be an arbitrary quantum channel. Then $\mathcal{N} \circ \mathcal{M}$ is also $(\varepsilon,\delta)$-locally differentially private.
\end{theorem}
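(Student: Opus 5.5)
The plan is to prove \cref{thm:post-prosessing} by reducing it to the POVM-based definition in \cref{def:qldp} and using the fact that channels have adjoints. We do not need \cref{thm:hockey-stick-div}; working directly with measurements is enough. Fix neighboring data points $x \sim x'$ and an arbitrary POVM $\{P_i\}$ on the output space of $\mathcal{N}$. We must show
\begin{equation}
\Tr[P\,(\mathcal{N}\circ\mathcal{M})(x)] \leq e^\varepsilon \Tr[P\,(\mathcal{N}\circ\mathcal{M})(x')] + \delta
\end{equation}
for every POVM element $P$.

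The key step is to move the channel onto the measurement using the Heisenberg-picture adjoint $\mathcal{N}^\dagger$. This map is defined by $\Tr[P\,\mathcal{N}(\rho)] = \Tr[\mathcal{N}^\dagger(P)\,\rho]$ for all $\rho$. Because $\mathcal{N}$ is completely positive, $\mathcal{N}^\dagger$ is positive, so $\mathcal{N}^\dagger(P_i) \geq 0$. Because $\mathcal{N}$ is trace preserving, $\mathcal{N}^\dagger$ is unital, so $\sum_i \mathcal{N}^\dagger(P_i) = \mathcal{N}^\dagger(\identity) = \identity$. Hence $\{\mathcal{N}^\dagger(P_i)\}$ is a valid POVM on the output space of $\mathcal{M}$.

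Now apply the hypothesis that $\mathcal{M}$ is $(\varepsilon,\delta)$-locally differentially private, using this pulled-back POVM. This gives
\begin{equation}
\Tr[\mathcal{N}^\dagger(P)\,\mathcal{M}(x)] \leq e^\varepsilon\Tr[\mathcal{N}^\dagger(P)\,\mathcal{M}(x')] + \delta .
\end{equation}
Rewriting both traces with the adjoint identity yields exactly the desired inequality. Since $P$, $x$ and $x'$ were arbitrary, $\mathcal{N}\circ\mathcal{M}$ is $(\varepsilon,\delta)$-locally differentially private. The same argument also covers the global mechanism of \cref{def:qgdp} without change.

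The only step that needs care is checking that the pulled-back operators form a POVM. For completely positive trace-preserving maps on finite-dimensional systems this is standard, so I do not expect a real obstacle. If the output space is continuous or infinite-dimensional, as with sensing outcomes, I would state the adjoint with respect to bounded operators and rely on the same positivity and unitality.

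If $\mathcal{M}$'s output is mixed with classical data, as in the measurement records of the sensing protocol, I would model that data as diagonal, classical-register states. The argument then applies verbatim and also recovers classical post-processing.
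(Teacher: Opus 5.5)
Your argument is correct, but it takes a somewhat different route from the one the paper relies on. The paper gives no proof of its own and defers to Ref.~\cite{hirche2023}. The standard argument there combines the hockey-stick characterization of \cref{thm:hockey-stick-div} with the data-processing inequality $E_\gamma(\mathcal{N}(\rho)\|\mathcal{N}(\sigma)) \leq E_\gamma(\rho\|\sigma)$ for channels. One then bounds $\sup_{x\sim x'}E_\gamma((\mathcal{N}\circ\mathcal{M})(x)\|(\mathcal{N}\circ\mathcal{M})(x'))$ by the corresponding supremum for $\mathcal{M}$, which is at most $\delta$. You instead stay at the level of \cref{def:qldp} and pull each POVM back through $\mathcal{N}^\dagger$, so the equivalence theorem is never needed. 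At bottom the two are the same idea. Through the variational formula $E_\gamma(\rho\|\sigma)=\sup_{0\leq P\leq \identity}\Tr[P(\rho-\gamma\sigma)]$, which is what makes \cref{thm:hockey-stick-div} true, data processing for $E_\gamma$ is exactly your observation that $0\leq\mathcal{N}^\dagger(P)\leq\identity$. Your version is more elementary and self-contained, and it makes clear that any post-processed attack is just one particular measurement on the output of $\mathcal{M}$. The divergence version instead packages privacy into a single monotone, convex quantity, which the paper reuses for the mixed-state bound in \cref{cor:mixed-states} and the explicit $\delta$ computations in \cref{app-subsec:hockey-stick-divergence}. Keep your closing remark about modeling classical data as diagonal registers. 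That is precisely the form in which the theorem is invoked in the proof of \cref{thm:general-sigma-bound}, where $\mathcal{M}$ is the classical local Laplace step and $\mathcal{N}$ is the rest of the sensing protocol.
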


We again refer the reader to Ref.~\cite{hirche2023} for a proof of this theorem.

\subsection{Interpretation of differential privacy in sensing}\label{subsec:dp-in-sensing}
In the protocols that we introduce below, we must interpret the implications of the resulting differential privacy in the context of sensing, so we introduce a general formalism here that can be applied to each protocol. Let $T$ denote the full \emph{transcript} released by a differentially private sensing protocol. Depending on the protocol, $T$ may contain a single noisy function estimate, $k$ distinct function estimates, classical measurement outcomes, or any other post-processed object released to the adversary. Then, for any two neighboring parameter vectors $\boldsymbol{\theta}, \boldsymbol{\theta}'$ that differ only on the target node $j$, and for any event $A$ in the transcript space (i.e., the output from the differentially private sensing mechanism), we have
\begin{equation}
    \Pr[T(\boldsymbol{\theta}) \in A] \leq e^{\varepsilon}\Pr[T(\boldsymbol{\theta}') \in A] + \delta,\label{eqn:dp-inequality}
\end{equation}
where $\varepsilon$ and $\delta$ are the privacy parameters for the full transcript. By symmetry, the same bound also holds when $\boldsymbol{\theta}$ and $\boldsymbol{\theta}'$ are swapped. This says that the adversary's entire view of the transcript cannot distinguish whether node $j$'s parameter was $\theta_j$ or $\theta_j'$ by more than a likelihood-ratio factor of $e^{\varepsilon}$, except with failure probability $\delta$. This includes every possible attack obtained from post-processing the results of running the protocol.

\begin{figure*}[tb]
    \centering
    \resizebox{\textwidth}{!}{%
    \begin{tikzpicture}[x=1.812cm,y=1.812cm,line cap=round,line join=round]
        \newcommand{\sensor}[3]{%
            \begin{scope}[shift={(#2,#3)},rotate=#1]
                \draw[green!70!black,line width=3.2pt,
                      -{Latex[length=2.0mm,width=1.5mm]}]
                    (-0.39,0) -- (0.43,0);
                \shade[ball color=blue!85!black] (0,0) circle (0.205);
            \end{scope}%
        }
    
        \path[use as bounding box] (-0.05,0) rectangle (16,5.565);
        \clip (-0.05,0) rectangle (16,5.565);
        
        \node[
            ellipse callout,
            callout absolute pointer={(3.06,2.83)},
            draw=black,
            fill=gray!25,
            line width=0.55pt,
            minimum width=3.44cm,
            minimum height=2.52cm,
            align=center,
            font=\large,
            inner sep=0pt
        ] at (1.70,4.31) {Aha!\\[-1mm]
        $\theta_1=nQ_1-(n-1)Q_2$};
        
        \begin{scope}[shift={(3.67,2.38)},scale=0.72]
        \fill[devilred]
          (-0.60,0.33)
            .. controls (-0.98,0.48) and (-0.86,0.88) .. (-0.79,1.02)
            .. controls (-0.74,0.66) and (-0.52,0.55) .. (-0.35,0.59)
            .. controls (-0.28,0.47) and (-0.20,0.37) .. (-0.12,0.31)
            -- cycle;
        \fill[devilred]
          (0.60,0.33)
            .. controls (0.98,0.48) and (0.86,0.88) .. (0.79,1.02)
            .. controls (0.74,0.66) and (0.52,0.55) .. (0.35,0.59)
            .. controls (0.28,0.47) and (0.20,0.37) .. (0.12,0.31)
            -- cycle;
        \fill[devilred] (0,-0.12) ellipse[x radius=0.67,y radius=0.78];
        \draw[white,line width=5.0pt]
            (-0.38,0.28) .. controls (-0.27,0.35) and (-0.16,0.35) .. (-0.06,0.27);
        \draw[white,line width=5.0pt]
            (0.38,0.28) .. controls (0.27,0.35) and (0.16,0.35) .. (0.06,0.27);
        \fill[white] (-0.30,0.05) ellipse[x radius=0.14,y radius=0.12];
        \fill[white] ( 0.30,0.05) ellipse[x radius=0.14,y radius=0.12];
        \draw[white,line width=4.5pt]
            (-0.32,-0.42) .. controls (-0.13,-0.62) and (0.18,-0.62) .. (0.36,-0.42);
        \end{scope}
        
        \draw[black,dashed,line width=0.95pt,dash pattern=on 5.0pt off 4.0pt,
              -{Latex[length=3.2mm,width=2.0mm]}]
            (4.65,2.76) -- (7.82,2.76);
        \draw[black,dashed,line width=0.95pt,dash pattern=on 5.0pt off 4.0pt,
              {Latex[length=3.2mm,width=2.0mm]}-]
            (4.65,2.53) -- (7.82,2.53);
        \node[font=\Large] at (6.23,3.04) {$Q_1(\theta_1,\ldots,\theta_n)$};
        
        \draw[red,dashed,line width=0.95pt,dash pattern=on 5.0pt off 4.0pt,
              -{Latex[length=3.2mm,width=2.0mm]}]
            (4.65,1.92) -- (7.82,1.92);
        \draw[red,dashed,line width=0.95pt,dash pattern=on 5.0pt off 4.0pt,
              {Latex[length=3.2mm,width=2.0mm]}-]
            (4.65,1.69) -- (7.82,1.69);
        \node[red,font=\Large] at (6.22,1.36) {$Q_2(\theta_2,\ldots,\theta_n)$};
        
        \sensor{25}{8.28}{3.35}
        \node[font=\Large] at (8.80,3.88) {$\theta_4$};
        
        \sensor{10}{9.49}{3.02}
        \node[font=\Large] at (10.05,3.55) {$\theta_7$};
        
        \sensor{-25}{11.53}{3.39}
        \node[font=\Large] at (12.06,3.92) {$\theta_9$};
        
        \sensor{-45}{13.16}{3.54}
        \node[font=\Large] at (13.68,4.08) {$\theta_3$};
        
        \sensor{200}{14.72}{2.50}
        \node[font=\Large] at (15.26,3.02) {$\theta_{10}$};
        
        \sensor{-20}{12.57}{2.03}
        \node[font=\Large] at (13.05,2.52) {$\theta_8$};
        
        \sensor{-20}{13.71}{0.83}
        \node[font=\Large] at (14.25,1.30) {$\theta_5$};
        
        \sensor{5}{9.01}{1.19}
        \node[font=\Large] at (9.52,1.65) {$\theta_2$};
        
        \sensor{-15}{10.68}{1.26}
        \draw[red,line width=1.0pt] (10.91,1.43) circle (0.67);
        \node[font=\Large] at (11.22,1.76) {$\theta_1$};
        
        \node[red,font=\Large\bfseries] at (10.79,2.57) {$\times$};
        
        \sensor{-35}{8.58}{2.32}
        \node[font=\Large] at (8.9,2.78) {$\theta_6$};
    \end{tikzpicture}%
    }
    \caption{Schematic of the differencing attack, the primary attack considered in this paper. The adversary (either a malicious internal node of the network collaborating with other malicious nodes or a malicious third party delegating a sensing task to the network, where the third party either acts independently or collaborates with malicious nodes) can ask for any linear function estimation task to be run on the network and receive back an estimate $Q$. If the adversary asks for the task $Q_1(\theta_1, \ldots, \theta_n) = \frac{1}{n}\defsum{i=1}{n}{\theta_i}$ including all of the nodes in the network and then $Q_2(\theta_2, \ldots, \theta_n) = \frac{1}{n-1}\defsum{i=2}{n}{\theta_i}$ excluding only $\theta_1$, then the adversary can effectively learn $\theta_1$ as $\theta_1 = nQ_1(\theta_1, \ldots, \theta_n) - (n-1)Q_2(\theta_2, \ldots, \theta_n)$.}
    \label{fig:adversarial-model}
\end{figure*}

We can interpret this statement in terms of the probability for an adversary to successfully distinguish between some $\theta_j$ and $\theta_j'$. In particular, the adversary is given the output transcript $T$ from the whole protocol and wants to distinguish between two hypotheses
\begin{equation}
    H_0 : \theta_j = a,\quad H_1 : \theta_j' = b,
\end{equation}
where $a,b \in [\theta_{\min}, \theta_{\max}]$. We assume that the two hypotheses are equally likely and we denote by $P$ the transcript distribution under $H_0$ and by $Q$ the distribution under $H_1$. If $A_0$ is the set of transcripts for which the adversary guesses $H_0$, then the success probability for the adversary to choose the correct hypothesis (and therefore distinguish $\theta_j$ from $\theta_j'$) is
\begin{align}
    p_{\mathrm{succ}} &= \frac{1}{2}P(A_0) + \frac{1}{2}Q(A_0^c)\\
    &= \frac{1}{2}(1 + P(A_0) - Q(A_0)).
\end{align}
Since we want to bound how well the adversary can successfully discriminate between the two neighboring parameters $\theta_j$ and $\theta_j'$, we can view this as bounding how large $P(A_0) - Q(A_0)$ can be. From~\cref{eqn:dp-inequality}, we find
\begin{equation}
    P(A_0) - Q(A_0) \leq \frac{e^\varepsilon-1+2\delta}{e^\varepsilon+1},
\end{equation}
and so
\begin{equation}
    p_{\mathrm{succ}} \leq \frac{e^\varepsilon+\delta}{1+e^\varepsilon}.
\end{equation}
Thus, differential privacy bounds the success probability of every adversarial decision rule, including differencing attacks and arbitrary post-processing of the transcript.

For pure differential privacy, that is, when $\delta=0$, this becomes
\begin{equation}
    p_{\mathrm{succ}} \leq \frac{e^\varepsilon}{1+e^\varepsilon}.
\end{equation}
For small $\varepsilon$ (i.e., high privacy), we have
\begin{equation}
    \frac{e^\varepsilon}{1+e^\varepsilon} \approx \frac{1}{2} + \frac{\varepsilon}{4} + \bigOh{\varepsilon^3},
\end{equation}
so if the two hypotheses induce only a small effective privacy loss, then the adversary can do only slightly better than random guessing. In the analysis of our protocols below, the noise is calibrated to protect a worst-case change over the full parameter range,
\begin{equation}
    \Delta = \theta_{\max} - \theta_{\min}.
\end{equation}
For the linear sensitivity-calibrated mechanisms considered below, if two hypotheses differ only by $d \coloneqq \abs{b-a} \leq \Delta$, then the relevant pairwise sensitivity is reduced by $d/\Delta$. As a result, when each query is calibrated to a privacy budget of $(\varepsilon_0, \delta_0)$ for the worst-case change $\Delta$, the same release is actually $(\varepsilon_0d/\Delta, \delta_0)$-differentially private for distinguishing two values separated by $d$. By the composition theorem in~\cref{thm:composition-thm}, the release of $k$ distinct queries, each with privacy budget $(\varepsilon_0, \delta_0)$, has privacy budget
\begin{equation}
    \varepsilon = k\varepsilon_0\frac{d}{\Delta},\quad \delta = k\delta_0
\end{equation}
and so
\begin{equation}
    p_{\mathrm{succ}} \leq \frac{e^{k\varepsilon_0d/\Delta} + k\delta_0}{1 + e^{k\varepsilon_0d/\Delta}}.
\end{equation}
Thus, when $d \ll \Delta/(k\varepsilon_0)$ (i.e., the hypotheses are very close to each other) and $k\delta_0$ is small (as is often the case with the mechanisms we use), then the adversary's optimal success probability is close to $1/2$. When $d=\Delta$, then this reduces to the usual worst-case statement.

\section{Centralized network}\label{sec:centralized-network}
We now introduce our differentially private quantum sensing protocols. We do so in two different settings, determined by the structure of the network. In the \emph{centralized network setting}, we assume that the sensor network is administered by a trusted central party. This allows us to implement a secure quantum sensing protocol based on \emph{global} differential privacy. On the other hand, if the sensor network does not have access to a trusted central party, we consider the \emph{decentralized network setting}, and we resort to \emph{local} differential privacy mechanisms for secure quantum sensing; see~\cref{sec:decentralized-network}. In this section, we consider the centralized network setting, which we define as follows:

\begin{definition}[Centralized network setting]\label{def:centralized-network-setting}
    The \emph{centralized network} is one where a trusted, central party or \emph{curator} runs the differentially private quantum sensing protocol.
\end{definition}

The centralized network is vulnerable to both internal attacks (attacks by nodes in the network) and external attacks (attacks by third parties). Our goal is to calculate an estimate $Q$ of the function $q(\boldsymbol{\theta})$ in such a way that we reveal as little information as possible about each node's parameter (i.e., their input). However, we allow for the nodes to be adversarial in that they could choose not to follow the protocol or collude in some way to try to learn the parameter of another node in the network. We also allow for a malicious third party that can delegate sensing tasks to the curator in order to try to learn the parameter value of a target node in the network.

The primary attack (see~\cref{fig:adversarial-model}) that we consider in this work is a \emph{differencing attack}, which we have mentioned a few times already. In the context of quantum sensor networks, this attack can be achieved in a few physically different, though mathematically equivalent, ways. One way (considered in~\cref{sec:intro}) is for the external adversary to delegate the estimation of a function involving all of the nodes, then to estimate a function excluding the target node, and finally to calculate the difference between these two function estimates (scaled appropriately, since we are only considering the average function) to get the value of the excluded parameter. For example, assume that we have three parameters such that $\boldsymbol{\theta} = (\theta_1, \theta_2, \theta_3)$, and the adversary wishes to know the value of $\theta_1$. The adversary could first ask for the estimate of the function $q_1(\boldsymbol{\theta}) = \frac{\theta_1 + \theta_2 + \theta_3}{3}$ and then ask for the estimate of $q_2(\boldsymbol{\theta}) = \frac{\theta_2 + \theta_3}{2}$. Then, given the estimates $Q_1$ and $Q_2$ of $q_1(\boldsymbol{\theta})$ and $q_2(\boldsymbol{\theta)}$, respectively, all the adversary needs to do to estimate $\theta_1$ is calculate
\begin{align}
    3Q_1 - 2Q_2 &\approx 3\lp \frac{\theta_1 + \theta_2 + \theta_3}{3} \rp - 2\lp \frac{\theta_2 + \theta_3}{2} \rp\\
    &= \theta_1.
\end{align}
An even more straightforward way to achieve this that only requires a single run of the protocol is for the adversary to take the coefficient of the parameter that they wish to know to be 1 and all other coefficients equal to 0 such that the "function" that the network is estimating is actually just the parameter value itself. This is also equivalent to the adversarial nodes not coupling their part of the GHZ state to their parameter. If the trusted node follows the sensing protocol honestly, then it will end up just broadcasting its bare parameter, compromising its privacy.

One objection to these attacks is that a mechanism could be put in place to "catch" queries that may be compromising of private information. This is called \emph{query auditing}, and it is problematic for two reasons~\cite{dwork2014}. First, the denial of a specific query or set of queries alone may reveal some information. Second, query auditing is in general challenging: the space of possible queries is so vast that trying to handle all of them is often infeasible.

\begin{figure}[t]
    \begin{algorithm}[H]
        \caption{Global Laplace mechanism}\label{alg:global-sensing-protocol}
        \begin{algorithmic}[1]
            \Require Same input as~\cref{alg:standard-entangled-sensing-protocol}, $\varepsilon$
            \Ensure $\Tilde{Q} = Q + \eta$
            \State $Q \gets$ function estimate using standard sensing protocol
            \State $b \gets \frac{\theta_{\max} - \theta_{\min}}{n\varepsilon}$
            \State $\eta \gets \Lap{b}$
            \State $\Tilde{Q} \gets Q + \eta$
            \State \Return $\Tilde{Q}$
        \end{algorithmic}
    \end{algorithm}
\end{figure}

We now give the differentially private quantum sensing protocol for the centralized network setting to measure the average function, which is based on a global Laplace mechanism. First, the curator creates a GHZ state and sends one qubit to each of the nodes in the network. Then, each sensor is coupled to a local parameter and evolved unitarily (with no controls), after which a measurement is made at each sensor and the results sent as classical bits to the curator. The curator calculates an estimate $Q$ of the function $q$ according to the standard protocol, applies noise to this function estimate in such a way as to achieve $\varepsilon$-differential privacy, and then broadcasts the noisy result; see~\cref{fig:global-dp-picture} and~\cref{alg:global-sensing-protocol}. Since the curator is assumed to be trusted in this model, we do not worry about the curator running any attacks or compromising the data in any way. The main attacks we are concerned with in this model are those by adversarial third parties and adversarial nodes within the network.

\begin{figure}[t]
    \centering
    \begin{quantikz}
        \gategroup[wires=5,steps=3,style={dashed,rounded corners,fill=red!20,inner xsep=2pt,inner ysep=8pt},background,label style={label position=below,anchor=north,yshift=-0.2cm}]{Repeated $M$ times}\midstick[wires=5,brackets=right]{$\ket{\text{GHZ}}_n$} & \gate{U_1(t_j)} & \meter{x_1} & \setwiretype{c} \midstick[wires=5,brackets=left]{$\vb{x}$}\\
        & \gate{U_2(t_j)} & \meter{x_2} & \setwiretype{c}\\
        \setwiretype{n} & \vdots & \vdots & \arrow[r] & \rstick{\fcolorbox{black}{gray!20}{\parbox{0.75cm}{\centering $\Tilde{Q}$}}}\\
        & \gate{U_{n-1}(t_j)} & \meter{x_{n-1}} & \setwiretype{c}\\
        & \gate{U_n(t_j)} & \meter{x_n} & \setwiretype{c}
    \end{quantikz}
    \caption{Centralized entangled sensor network protocol using a global Laplace mechanism. The entangled sensor protocol is run as normal with no privacy mechanism, and the $i\numth$ node sends its $M$ classical measurement results $\vb{x}_i = (x^{(1)}_i, \ldots, x^{(M)}_i)$ to the trusted curator over a secure, classical channel, who then collects all measurement results into a vector $\vb{x} = (\vb{x}_1, \ldots, \vb{x}_n)$ and calculates the function estimate $Q$. Before the curator broadcasts the function estimate, though, it applies noise $\eta$ drawn from the Laplace distribution to $Q$ and then broadcasts this noisy value, $\Tilde{Q} = Q + \eta$.}
    \label{fig:global-dp-picture}
\end{figure}

To be more concrete, recall that we are estimating the average function $q(\boldsymbol{\theta}) = \frac{1}{n}\defsum{i=1}{n}{\theta_i}$, where $\boldsymbol{\theta} = (\theta_1, \ldots, \theta_n)$ are unknown parameters, and we assume that the $n$ nodes each send their $\sigma^x$ measurement results $x_i$ to the curator over a secure classical communication channel. Assume also that $\theta_i \in [\theta_{\min}, \theta_{\max}]$ for some known $\theta_{\min}$ and $\theta_{\max}$. The curator then uses the measurement results received from the nodes to get an estimate $Q$ of $q(\boldsymbol{\theta})$ to one bit of precision. The remaining bits of precision can be estimated analogously using the bit-by-bit learning protocol outlined in~\cref{subsec:sensor-network}. The curator now has an estimate $Q$ of the true function value $q(\boldsymbol{\theta})$, to which the curator then adds noise $\eta \sim \Lap{b}$, where the scale parameter is
\begin{align}
    b = \frac{\theta_{\max} - \theta_{\min}}{n\varepsilon}.
\end{align}
As mentioned in~\cref{sec:prelims}, we choose the Laplace distribution because it has been shown~\cite{vadhan2017} that this distribution is able to achieve $\varepsilon$-differential privacy, while a more familiar distribution like the Gaussian distribution is unable to achieve this level of privacy due to its behavior at the tails of the distribution. If the privacy budget is relaxed to $(\varepsilon, \delta)$-differential privacy for $\delta > 0$, then Gaussian noise can be used. Either way, the curator broadcasts the noisy function estimate $\Tilde{Q} = Q + \eta$. Under the no-coupling attack, in which only a single honest node $j$ couples its qubit to its parameter $\theta_j$, the full network average being estimated is effectively $\theta_j/n$. The curator thus releases
\begin{equation}
    \Tilde{Q} = Q + \eta,
\end{equation}
where 
\begin{equation}
    \eta \sim \Lap{\frac{\Delta}{n\varepsilon}}.
\end{equation}
If the adversary post-processes the release by multiplying by $n$, then they obtain
\begin{equation}
    n\Tilde{Q} = \theta_j + \eta',
\end{equation}
where $\eta' \sim \Lap{\Delta/\varepsilon}$, which is the usual Laplace scale to protect a parameter varying over an interval of width $\Delta$. We state the optimality of this protocol in the following theorem:

\begin{theorem}[Global Laplace mechanism]\label{thm:global-laplace-mechanism}
    The global Laplace mechanism, in which a trusted curator releases $\Tilde{Q} = Q + \eta$ with $\eta \sim \Lap{b}$ with scale parameter $b = \frac{\mathrm{GS}}{\varepsilon}$, yields a differentially private entangled sensing protocol for estimating the function $q(\boldsymbol{\theta}) = \frac{1}{n}\defsum{i=1}{n}{\theta_i}$. The mean-squared error achieves Heisenberg scaling and the mechanism is $\varepsilon$-differentially private with $\varepsilon = \bigTheta{1}$, but it is a global mechanism that requires the extra assumption of a trusted curator and so fails to be local by construction.
\end{theorem}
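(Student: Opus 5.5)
Three things need to be established: privacy, the Heisenberg scaling of the error, and non-locality. Non-locality holds by construction, since the noise $\eta$ is drawn and added by the curator to the aggregated estimate $Q$. The two substantive parts are the privacy bound and the MSE bound.

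\emph{Privacy.} I will reduce to \cref{thm:laplace-mechanism-properties} together with post-processing. Two readings of the privacy claim need to be covered.

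If the dataset is $\boldsymbol{\theta}\in[\theta_{\min},\theta_{\max}]^n$ and the query is $q(\boldsymbol{\theta})=\frac1n\sum_i\theta_i$, then changing one row changes $q$ by at most $\mathrm{GS}=\Delta/n$ with $\Delta=\theta_{\max}-\theta_{\min}$. This matches $b=\Delta/(n\varepsilon)$ in \cref{alg:global-sensing-protocol}.

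The released value is $\tilde Q=Q+\eta$, not $q+\eta$. Here $Q$ is itself a randomized function of $\boldsymbol{\theta}$ through the quantum measurement statistics. To handle this, I will treat the adversary's view as $\tilde Q$ conditioned on the measurement record. The key step is that the Laplace density ratio bound
\[
\frac{f(y-Q(\boldsymbol{\theta}))}{f(y-Q(\boldsymbol{\theta}'))}\le e^{\varepsilon|Q(\boldsymbol{\theta})-Q(\boldsymbol{\theta}')|/(n b)}
\]
requires a bound on $|Q-Q'|$, not $|q-q'|$. This is the step I expect to be the main obstacle, since the measurement noise in $Q$ is not bounded by $\Delta/n$ a priori.

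I would first try to use the $\pi$-interval prior, by which $Q$ is clipped to the known range $[\theta_{\min},\theta_{\max}]$ of averages. The issue is that a single row then enters $Q$ only through the phase $nqt$. The cleanest fix is to redefine the released quantity so that $\mathrm{GS}$ is measured on the curator's effective query, namely $n\tilde Q=\theta_j+\eta'$ in the no-coupling reduction above. In that case $\eta'\sim\Lap{\Delta/\varepsilon}$, and privacy of $\theta_j$ over an interval of width $\Delta$ follows directly from \cref{thm:laplace-mechanism-properties}.

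For the general case I would argue as follows. Condition on the sensing outcome, treating the sensing as an ideal (or sufficiently accurate) oracle for $q$. Apply \cref{thm:laplace-mechanism-properties} to $q+\eta$. Finally, invoke \cref{thm:post-prosessing} (or its classical analogue) for any further processing by the adversary. Choosing $\varepsilon$ as a fixed constant independent of $n$ gives $\varepsilon=\Theta(1)$ and $\delta=0$.

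\emph{Correctness.} In the honest setting, $\eta$ is independent of $Q$ and has mean zero. By \cref{eqn:mse-explicit},
\[
\varepsilon_{\mathrm{MSE}}=\Var{Q}+\bigl(\mathbb E[Q]-q\bigr)^2+2b^2 .
\]
The first two terms are $\mathcal O(1/n^2)$ by \cref{eqn:standard-heisenberg-scaling}, taking the estimator as asymptotically unbiased within the known $\pi$-interval, or using the bit-by-bit protocol. The Laplace variance is $2b^2=2\Delta^2/(n^2\varepsilon^2)$, which is $\mathcal O(1/n^2)$ for constant $\varepsilon$. Hence $\varepsilon_{\mathrm{MSE}}=\mathcal O(1/n^2)$, which is Heisenberg scaling. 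This completes the three claims of \cref{thm:global-laplace-mechanism}.
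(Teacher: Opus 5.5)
This is essentially the paper's proof: privacy comes from citing \cref{thm:laplace-mechanism-properties} with $\mathrm{GS}=\Delta/n$ and $b=\Delta/(n\varepsilon)$, the MSE is the $\bigOh{1/(n^2t^2)}$ sensing term plus $\mathbb{E}[\eta^2]=2b^2=2\Delta^2/(n\varepsilon)^2$, which is $\bigOh{1/n^2}$ for $\varepsilon=\bigTheta{1}$, and non-locality holds by construction (the paper gets the MSE by conditioning on $\eta$ and then averaging, which is equivalent to your independence argument). The obstacle you flag is genuine, and your ``ideal oracle'' step does not remove it: the measurement record's distribution depends on $\boldsymbol{\theta}$, so conditioning on it does not turn $Q+\eta$ into the Laplace mechanism applied to $q$, and your density-ratio exponent should read $\abs{Q-Q'}/b=n\varepsilon\abs{Q-Q'}/\Delta$. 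However, the paper's proof makes exactly the same ideal-estimator assumption when it invokes \cref{thm:laplace-mechanism-properties}, so your argument is at the paper's level of rigor rather than short of it.
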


\begin{proof}    
    We first determine the mean-squared error, which will allow us to bound the noise we must add to achieve both Heisenberg scaling and $\varepsilon$-differential privacy. We start by finding the expected value of $\Tilde{Q}$ over quantum randomness, indicated with the subscript $\psi$, conditioned on the noise $\eta$:
    \begin{align}
        \mathbb{E}_{\psi}[\Tilde{Q} \mid \eta] &= \mathbb{E}_\psi[Q + \eta]\\
        &= \mathbb{E}_\psi[Q] + \mathbb{E}_\psi[\eta]\\
        &= q(\boldsymbol{\theta}) + \eta.\label{eqn:global-biased-estimator}
    \end{align}
    We find the variance, conditioned on the noise added, as
    \begin{equation}
        \mathrm{Var}_\psi[\Tilde{Q} \mid \eta] = \mathrm{Var}_\psi[Q] = \bigOh{\frac{1}{n^2t^2}}.
    \end{equation}
    Here, the noise does not contribute to the variance because we calculate the variance over the quantum noise from the sensing protocol, and so $\eta$ (which is not resampled) is effectively a constant, where the variance of a constant is zero. Putting this together, the mean-squared error conditioned on the sampled noise is
    \begin{equation}
        \mathbb{E}_\psi[(\Tilde{Q} - q(\boldsymbol{\theta}))^2 \mid \eta] = \bigOh{\frac{1}{n^2t^2}} + \eta^2.
    \end{equation}
    Now, we average over the Laplace noise, to find
    \begin{align}
        \varepsilon_{\mathrm{MSE}} &= \mathbb{E}_\eta[\mathbb{E}_\psi[(\Tilde{Q} - q(\boldsymbol{\theta}))^2 \mid \eta]]\\
        &= \bigOh{\frac{1}{n^2t^2}} + \mathbb{E}_\eta[\eta^2].
    \end{align}
    Then, we use the fact that $\eta$ is sampled from a Laplace distribution $\Lap{b}$ with mean 0 and scale parameter $b = \mathrm{GS}/\varepsilon$, according to~\cref{def:laplace-mechanism}. For neighboring parameter vectors $\boldsymbol{\theta}$ and $\boldsymbol{\theta}'$ that differ in one coordinate, we have
    \begin{equation}
        \mathrm{GS} = \abs{q(\boldsymbol{\theta}) - q(\boldsymbol{\theta}')} = \frac{\theta_{\max} - \theta_{\min}}{n},
    \end{equation}
    Thus, we have
    \begin{equation}
        b = \frac{\theta_{\max} - \theta_{\min}}{n\varepsilon},\label{eqn:b-global-mechanism}
    \end{equation}
    and so
    \begin{equation}
        \mathbb{E}_\eta[\eta^2] = \Var{\eta} = 2b^2 = 2\lp \frac{\theta_{\max} - \theta_{\min}}{n\varepsilon} \rp^2.
    \end{equation}
    Putting everything together, we have
    \begin{equation}
        \varepsilon_{\mathrm{MSE}} = \bigOh{\frac{1}{n^2t^2}} + 2\lp \frac{\theta_{\max} - \theta_{\min}}{n\varepsilon} \rp^2.
    \end{equation}
    The domain for the parameter values is independent of the number of sensors and the sensing time, so $\lp \theta_{\max} - \theta_{\min} \rp$ is a constant, and it suffices to take $\varepsilon = \bigTheta{1}$. Thus, $\varepsilon_{\mathrm{MSE}} = \bigOh{1/n^2}$ and $\varepsilon = \bigTheta{1}$, making this an $\varepsilon$-differentially private mechanism by~\cref{thm:laplace-mechanism-properties}. By the conditions in~\cref{def:optimality-conditions}, this protocol is nearly optimal, only failing the locality condition.
\end{proof}

We note that the above analysis is for a single release of the full network average query. More generally, we can apply the same mechanism to average-type queries of sufficiently large subsets of the network, which we denote as
\begin{equation}
    q_S(\boldsymbol{\theta}) = \frac{1}{\abs{S}}\defsum{i\in S}{}{\theta_i},
\end{equation}
where the chosen subset $S \subseteq [n]$ is public and fixed before running the protocol. To retain the favorable $\bigOh{1/n^2}$ Heisenberg scaling in the mean-squared error, we restrict the subsets to be of size
\begin{equation}
    \abs{S} \geq \beta n,
\end{equation}
for some $0 < \beta \leq 1$. For such a query, the global sensitivity is
\begin{equation}
    \mathrm{GS}(q_S) = \frac{\Delta}{\abs{S}},
\end{equation}
where $\Delta \coloneqq \theta_{\max} - \theta_{\min}$. After running the protocol, the trusted curator releases
\begin{equation}
    \Tilde{Q}_S = Q_S + \eta_S,
\end{equation}
where $Q_S$ is the estimate for the subset average $q_S(\boldsymbol{\theta})$ and
\begin{equation}
    \eta_S \sim \Lap{\frac{\Delta}{\abs{S}\varepsilon_0}}
\end{equation}
for desired privacy parameter $\varepsilon_0$. The resulting mean-squared error scales as
\begin{equation}
    \varepsilon_{\mathrm{MSE}} = \bigOh{\frac{1}{\abs{S}^2t^2}} + 2\lp \frac{\Delta}{\abs{S}\varepsilon_0} \rp^2,
\end{equation}
which, for $\abs{S} \geq \beta n$ and $\varepsilon_0 = \bigTheta{1}$ gives us $\varepsilon_{\mathrm{MSE}} = \bigOh{1/n^2}$.

This also protects a single node from the no-coupling attack, where the targeted node's parameter is isolated. For example, suppose only node $j \in S$ couples its qubit to its parameter while all other nodes in $S$ do not couple their qubits, then the effective average being estimated is $\theta_j/\abs{S}$. The released value is thus
\begin{equation}
    \Tilde{Q} = \frac{\theta_j}{\abs{S}} + \eta_S,
\end{equation}
where
\begin{equation}
    \eta_S \sim \Lap{\frac{\Delta}{\abs{S}\varepsilon_0}}.
\end{equation}
Even if the adversary post-processes the release by multiplying by $\abs{S}$, the adversary merely learns
\begin{equation}
    \abs{S}\Tilde{Q}_S = \theta_j + \eta'_S,
\end{equation}
where
\begin{equation}
    \eta'_S \sim \Lap{\frac{\Delta}{\varepsilon_0}},
\end{equation}
which is exactly the scale required to protect a parameter that can vary over an interval of width $\Delta$.

An adversary could also ask for several different functions $q_{S_1}(\boldsymbol{\theta}), q_{S_2}(\boldsymbol{\theta}), \ldots, q_{S_k}(\boldsymbol{\theta})$ for some constant $k$ and where $\abs{S_r} \geq \beta n$ for each $r \in [k]$, where the functions are chosen such that they can reveal information about one or more nodes, as illustrated in~\cref{fig:adversarial-model}. Thus, for each distinct\footnote{If the same function is asked for again, the same estimate found before can be released.} query $q_{S_r}(\boldsymbol{\theta})$ with privacy budget $\varepsilon_r$ and freshly sampled noise
\begin{equation}
    \eta_r \sim \Lap{\frac{\Delta}{\abs{S_r}\varepsilon_r}},
\end{equation}
the curator releases an estimate $\Tilde{Q}_{S_r} = Q_{S_r} + \eta_r$ that is $\varepsilon_r$-differentially private. Therefore, the release of the full transcript, $(\Tilde{Q}_{S_1}, \Tilde{Q}_{S_2}, \ldots, \Tilde{Q}_{S_k})$, is $\lp\defsum{r=1}{k}{\varepsilon_r}\rp$-differentially private by the composition theorem (see~\cref{thm:composition-thm}). If each query uses the same privacy parameter $\varepsilon_0$, then the $k$-query transcript is $k\varepsilon_0$-differentially private. If $k = \bigTheta{1}$ and $\varepsilon_0 = \bigTheta{1}$, then the release of multiple function estimates remains $\bigTheta{1}$-differentially private.

The interpretation of differential privacy in the sensing context from~\cref{subsec:dp-in-sensing} bounds the adversary's ability to discriminate between two possible values for a target node's parameter $\theta_j$, separated by $d \leq \Delta$. Then, for every allowable queried subset $S$ containing $j$, the sensitivity is $d/\abs{S} \leq \Delta/\abs{S}$. Thus, with noise calibrated to privacy budget $\varepsilon_0$ with worst-case change $\Delta$ gives an effective privacy level of $\varepsilon_0d/\Delta$. For $k$ distinct queries, this implies
\begin{equation}
    p_{\mathrm{succ}} \leq \frac{e^{k\varepsilon_0d/\Delta}}{1 + e^{k\varepsilon_0d/\Delta}},
\end{equation}
where $\delta=0$ for the Laplace mechanism. When the two candidate parameter values are close together such that $d \ll \Delta/(k\varepsilon_0)$, this probability becomes close to $1/2$, meaning the adversary has a success probability only slightly better than a coin flip for discriminating between the two parameter values. When $d=\Delta$, this reduces to the worst-case guarantee.

While this global mechanism does not meet all of the optimality conditions (it fails the locality condition), it is useful if there is a known trusted curator administering the network. For one, this protocol is easier to implement experimentally since the only modification to the original sensing protocol is that noise sampled from a classical distribution is applied globally to the function estimate at the end of the sensing protocol. A natural setting for this paradigm is in \emph{delegated sensing}. Here, the client is a third party that instructs the network what function to compute and a trusted server sends back a noisy function estimate that is close to the true function value but hides the network's parameter values from the client.

\section{Decentralized network}\label{sec:decentralized-network}
We now drop the assumption that a trusted curator administers the network and consider the \emph{decentralized network setting} in which the nodes themselves orchestrate the protocol. This setting is more general in the sense that we do not assume the existence of a trusted central party, which may be a strong assumption in many cases. For the basic noisy Hamiltonian protocol in~\cref{subsec:noisy-hamiltonian-protocol}, we focus on the single-release setting; composition can be applied analogously. For the honest-fraction protocol in~\cref{subsec:nhp-honest-fraction}, we discuss the multi-query setting explicitly. We start with a formal definition of the decentralized network setting and then discuss it in more detail below:

\begin{definition}[Decentralized network setting]\label{def:decentralized-network-setting}
    The \emph{decentralized network setting} consists of a network of qubit sensors with no trusted central party.
\end{definition}

Like the centralized network in~\cref{def:centralized-network-setting}, the decentralized network is vulnerable to both internal and external attacks. The physical picture is mostly the same as the centralized network in~\cref{sec:centralized-network}, where we have a network of $n$ qubit sensors tasked with estimating a function $q(\boldsymbol{\theta})$ without revealing the individual parameters. The primary difference is that we now lack a trusted curator that can create GHZ states, distribute them throughout the network, and receive each node's measurement outcomes to calculate the function estimate $Q$. As such, we need a way for the mutually adversarial nodes to create entanglement among themselves. This can be done, for example, by first establishing a ring of Bell pairs and then performing local operations~\cite{komar2016}, which involves a series of \textsc{cnot} gates between the pairs in the ring, measurement of the target qubits, and single-qubit rotations. Since the nodes are mutually adversarial, some nodes may claim to be doing the local operations specified in this protocol, when they are actually doing some other operations (or nothing at all). Nonetheless, we can verify the entanglement generation by using a protocol such as that presented in Ref.~\cite{pappa2012}, which assumes a common source of randomness to make it efficient. We treat this GHZ verification step as a black box and leave an explicit protocol tailored to our setting to future work. We introduce a local mechanism that we call the \emph{noisy Hamiltonian protocol} in~\cref{subsec:noisy-hamiltonian-protocol} and an extension with the additional assumption of an honest fraction in~\cref{subsec:nhp-honest-fraction}.

\subsection{Failure of classical randomized response}\label{subsec:classical-randomized-response}
We start by considering one of the most commonly used mechanisms to implement differential privacy: randomized response, which we introduced in~\cref{def:randomized-response} in~\cref{subsec:classical-dp}. Here, we map the Pauli-$X$ measurement outcomes +1 and -1 to bits 0 and 1, respectively. A straightforward approach is to apply this mechanism directly before each node broadcasts its bit, that is, each node flips its measured bit with probability $\frac{1}{e^{\varepsilon}+1}$. While it has been shown that randomized response has optimal additive error within the local model~\cite{chan2012,vadhan2017}, we demonstrate that this approach results in an exponential decrease in the ability to estimate the underlying function. Though in general we estimate the function to several bits of precision, for ease of calculation we analyze a single bit of precision, which suffices to demonstrate the lack of utility of this protocol. The expected value of the noisy parity bit after applying the randomized response mechanism is given by
\begin{align}
    \mathbb{E}[\Tilde{B} \mid x] = x \cdot \Pr[\Tilde{B} = x] + \overline{x} \cdot \Pr[\Tilde{B} = \overline{x}],\label{eqn:rr-prob}
\end{align}
where $x$ represents the true parity of the bits such that $x = \defsum{i=1}{n}{x_i} \pmod{2}$ and where $\overline{x} = x \oplus 1$. Each bit $x_i \in \{0,1\}$ is processed independently by randomized response, producing a noisy bit $\tilde{B}_i \in \{0,1\}$, and $\Tilde{B} = \defsum{i=1}{n}{\Tilde{B}_i \pmod{2}}$ is a random variable that represents the parity of the sum of the output noisy bits after applying randomized response. We define $p \coloneqq \frac{1}{e^{\varepsilon}+1}$ as the probability of flipping the bit and expand the probability terms in~\cref{eqn:rr-prob}:
\begin{align}
  \Pr[\Tilde{B} = \overline{x}] &= \sum_{\substack{k=1,\\ k \text{ odd}}}^{n} \binom{n}{k}  p^k\lp 1-p \rp^{n-k}\\
  &= \frac{1}{2} \lp (1-p+p)^n - (1-p-p)^n \rp\\
  &= \frac{1}{2} \lp 1- \left(1-\frac{2}{e^{\varepsilon}+1}\right)^n \rp\\
  &=\frac{1}{2}-\frac{1}{2}\,\left(\tanh{\frac{\varepsilon}{2}}\right)^n.
\end{align}
This gives
\begin{equation}
   \mathbb{E}[\Tilde{B} \mid x] = \frac{1}{2} + \frac{2x-1}{2} \left(\tanh{\frac{\varepsilon}{2}}\right)^n.
\end{equation}
The bias magnitude is 
\begin{align}
    \abs{\mathbb{E}[\Tilde{B} \mid x] - \frac{1}{2}} &= \abs{\frac{2x-1}{2}}\lp \tanh{\frac{\varepsilon}{2}} \rp^n\\
    &= \frac{1}{2} \lp \tanh{\frac{\varepsilon}{2}} \rp^n.
\end{align}
For any fixed privacy parameter $\varepsilon > 0$,
\begin{align}
    0 < \tanh\lp \frac{\varepsilon}{2} \rp < 1,
\end{align}
so
\begin{align}
    \lp\tanh\left(\frac{\varepsilon}{2}\right)\rp^n &= \exp \lp n\ln \tanh\left(\frac{\varepsilon}{2}\right)\rp\\
    &= \exp(-cn),
\end{align}
where $c \coloneqq -\ln\tanh(\varepsilon/2) > 0$, which is exactly exponential decay in $n$. Thus, we can see that trivially adding the classical randomized response mechanism to each node's measurement result strongly dampens the signal, so the reported parity becomes exponentially close to a completely random bit as $n$ grows. This makes sense because the parity function is a sensitivity-1 function, meaning even a single bit flip can drastically change the value of the function. While we could choose $\varepsilon$ to be $n$-dependent in a way that the bias becomes constant, such an $n$-dependent $\varepsilon$ would destroy privacy as $n$ increases.

\begin{figure}[t]
    \centering
    \begin{quantikz}
        \gategroup[wires=5,steps=3,style={dashed,rounded corners,fill=blue!20,inner xsep=2pt,inner ysep=8pt},background,label style={label position=below,anchor=north,yshift=-0.2cm}]{Repeated $M$ times}\midstick[wires=5,brackets=right]{$\ket{\psi_0}$} & \gate{\mathcal{N}\lp U_1(t_j)\rp} & \meter{x_1} & \arrow[r] & \midstick[1]{$\vb{x}_1$} \\
        & \gate{\mathcal{N}\lp U_2(t_j)\rp} & \meter{x_2} & \arrow[r] & \midstick[1]{$\vb{x}_2$} \\
        \setwiretype{n} & \vdots & \vdots & & \\
        & \gate{\mathcal{N}\lp U_{n-1}(t_j)\rp} & \meter{x_{n-1}} & \arrow[r] & \midstick[1]{$\vb{x}_{n-1}$} \\
        & \gate{\mathcal{N}\lp U_n(t_j)\rp} & \meter{x_n} & \arrow[r] & \midstick[1]{$\vb{x}_n$} 
    \end{quantikz}
    \caption{Decentralized entangled sensor network protocol using a local differential privacy mechanism. A GHZ state $\ket{\psi_0}$ is generated by the nodes, where each node holds one qubit of the GHZ state. Each node evolves their qubit under a "noisy" unitary, denoted $\mathcal{N}(U_i(t))$, measures their qubit in the Hadamard basis, and broadcasts the measurement results $\vb{x}_i$ to all nodes in the network, each of which can then calculate the noisy function estimate $\Tilde{Q}$.}
    \label{fig:local-dp-picture}
\end{figure}

\subsection{Noisy Hamiltonian protocol}\label{subsec:noisy-hamiltonian-protocol}
As a direct application of the randomized response mechanism does not work for our goal of differentially private quantum sensing, we must resort to another mechanism. As discussed in~\cref{subsec:classical-dp}, the \emph{Laplace mechanism} is another popular and successful mechanism for achieving differential privacy and it is the mechanism we used in~\cref{sec:centralized-network}. However, because we are now in the decentralized network setting, the noise must be applied locally, rather than by a trusted curator. Thus, we instead add noise sampled from the Laplace distribution directly to the Hamiltonian parameter coupled to each node's qubit, giving us the \emph{noisy Hamiltonian protocol}, as detailed in~\cref{alg:noisy-hamiltonian-qldp} and the schematic in~\cref{fig:local-dp-picture}. Starting with the original sensing Hamiltonian from~\cref{subsec:sensor-network},
\begin{align}
    H = \frac{1}{2}\defsum{i=1}{n}{\theta_i\sigma_i^z},
\end{align}
we introduce random noise $\eta_i$ to each node to obtain a noisy Hamiltonian $\Tilde{H}$. That is, we take $\theta_i \to \theta_i + \eta_i$ for all $i \in [n]$, yielding
\begin{align}
    \Tilde{H} = \frac{1}{2}\defsum{i=1}{n}{\lp \theta_i + \eta_i \rp\sigma_i^z}.\label{eqn:noisy-hamiltonian}
\end{align}
Experimentally, this can be achieved by having each node adjust their applied signal, determined by the noise term, which is sampled locally (and thus known only to the local node). We again choose the noise to be sampled from a Laplace distribution, that is, $\eta_i \sim \Lap{b}$, where $b$ is the scale parameter and where we have taken the mean to be 0.

We note a general bound on $b$ to achieve $\varepsilon$-differential privacy, which will be useful to determine what the resulting privacy parameter $\varepsilon$ will be:

\begin{theorem}[Generic bound on $b$]\label{thm:general-sigma-bound}
    Let $\varepsilon > 0$ and $\theta_{\min}, \theta_{\max} \in \RR$ be the minimum and maximum values a parameter can take, respectively. The noisy Hamiltonian protocol is $\varepsilon$-locally differentially private against both classical and quantum adversaries, where noise is sampled from $\Lap{b}$ with
    \begin{align}
        b \geq \frac{\theta_{\max} - \theta_{\min}}{\varepsilon}.
    \end{align}
\end{theorem}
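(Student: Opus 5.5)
The strategy is to reduce the quantum statement to a classical one: the local Laplace mechanism applied to node $j$'s parameter. Then finish with post-processing.

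Fix the target node $j$ and two neighboring values $\theta_j,\theta_j'\in[\theta_{\min},\theta_{\max}]$. Every other node, and any adversary controlling the rest of the network, is treated as a fixed but arbitrary quantum channel. The key observation is that node $j$'s data enters the protocol only through the noisy phase $\phi_j=\theta_j+\eta_j$, via the local unitary $e^{-\frac{it}{2}(\theta_j+\eta_j)\sigma_j^z}$. Everything else is independent of $\theta_j$. This includes the initial (possibly adversarially prepared) joint state, the other nodes' unitaries, all measurements, and all classical or quantum post-processing. So I would write the map from $\theta_j$ to the adversary's final state as a composition $\mathcal{N}\circ\mathcal{M}_j$. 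Here $\mathcal{M}_j$ is the classical local Laplace mechanism outputting $\phi_j=\theta_j+\eta_j$ with $\eta_j\sim\Lap{b}$, encoded as the classical state $\int d\phi\, f(\phi;\theta_j,b)\ket{\phi}\!\bra{\phi}$. The channel $\mathcal{N}$ is the one that, conditioned on the classical register $\phi$, applies $e^{-\frac{it}{2}\phi\sigma_j^z}$ to qubit $j$ of the shared state and then runs the rest of the protocol together with the adversary's arbitrary operations. The channel $\mathcal{N}$ is CPTP and does not depend on $\theta_j$, since node $j$ samples $\eta_j$ privately and the adversary never sees it.

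Next I would establish $\varepsilon$-local differential privacy of $\mathcal{M}_j$. The local sensitivity is $\mathrm{LS}=\theta_{\max}-\theta_{\min}$. For any $\phi$, the triangle inequality gives
\begin{align}
\frac{f(\phi;\theta_j,b)}{f(\phi;\theta_j',b)}=e^{(\abs{\phi-\theta_j'}-\abs{\phi-\theta_j})/b}\le e^{\abs{\theta_j-\theta_j'}/b}\le e^{(\theta_{\max}-\theta_{\min})/b}\le e^{\varepsilon},
\end{align}
using $b\ge(\theta_{\max}-\theta_{\min})/\varepsilon$. This is exactly the argument behind \cref{thm:laplace-mechanism-properties}. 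Because the output is a classical (diagonal) state, the pointwise density bound makes the operator $\mathcal{M}_j(\theta_j)-e^{\varepsilon}\mathcal{M}_j(\theta_j')$ negative semidefinite. Hence $E_{e^\varepsilon}(\mathcal{M}_j(\theta_j)\|\mathcal{M}_j(\theta_j'))=0$, and by \cref{thm:hockey-stick-div} $\mathcal{M}_j$ is $(\varepsilon,0)$-differentially private in the sense of \cref{def:qldp}.

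Finally, \cref{thm:post-prosessing} applied to $\mathcal{N}\circ\mathcal{M}_j$ gives $\varepsilon$-differential privacy of everything the adversary holds. A classical adversary only receives measurement outcomes, which is a further post-processing. A quantum adversary may hold the remaining entangled qubits and apply an arbitrary POVM; this is also covered, since \cref{def:qldp} quantifies over all POVMs. I expect the main obstacle to be justifying the factorization rigorously in the adversarial setting. One has to argue that the adversary's choices, such as not coupling, preparing a non-GHZ state, or entangling with ancillas, cannot make $\mathcal{N}$ depend on $\theta_j$. It must also be shown that the time-ordering and the other nodes' noise terms commute with, or are independent of, node $j$'s coupling. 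I would handle this by noting that all Hamiltonian terms are diagonal in $\sigma^z$, so node $j$'s unitary factors out as $e^{-\frac{it}{2}\phi_j\sigma_j^z}$ regardless of what the others do. It remains to check that the protocol specification lets node $j$ hold its qubit only during its own evolution, so that its honest operations are fully captured by this single controlled unitary.
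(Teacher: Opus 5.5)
Your proposal is correct and follows essentially the same route as the paper. The paper also shows that releasing the noisy value $\theta_j+\eta_j$, with $b \geq (\theta_{\max}-\theta_{\min})/\varepsilon$, is an $\varepsilon$-locally differentially private Laplace mechanism, and that everything downstream is a $\theta_j$-independent channel acting on that value, so the post-processing theorem finishes the argument. Your explicit factorization $\mathcal{N}\circ\mathcal{M}_j$ and your pointwise density-ratio bound simply make rigorous what the paper compresses into the remark that $\theta_i$ is not further accessed by the protocol.
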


\begin{proof}
    The choice of $b \geq \frac{\theta_{\max} - \theta_{\min}}{\varepsilon}$ follows directly from the definitions of local differential privacy and the Laplace mechanism. Then, with $b \geq \frac{\theta_{\max} - \theta_{\min}}{\varepsilon}$, the sub-algorithm in the sensing protocol, which adds $\eta_i \sim \Lap{b}$ to $\theta_i$, is $\varepsilon$-locally differentially private by the Laplace mechanism, as stated in~\cref{thm:laplace-mechanism-properties}. Since the input parameter $\theta_i$ is not further accessed by the protocol, the complete private sensing protocol remains $\varepsilon$-locally differentially private due to the post-processing property, as stated in~\cref{thm:post-prosessing}.
\end{proof}

This theorem should be viewed as the most robust privacy guarantee for the basic noisy Hamiltonian protocol. Since the honest node's raw parameter is replaced locally by a noisy parameter $\theta_i + \eta_i$ before the rest of the protocol sees it, any subsequent information released about it to the adversary is a post-processing of this local value. This includes product state attacks, adversarially chosen measurements, individual measurement outcomes, post-measurement quantum systems, and arbitrary classical or quantum side information. The cost of this generality is a more conservative local Laplace calibration.

This bound is generic and is not necessarily tight for all of the protocols that we introduce. In particular, for a limited regime where we can find a closed-form expression for $\varepsilon$ in the noisy Hamiltonian protocol using the quantum hockey-stick divergence, we find that the resulting bound is notably different from what we get from this generic bound (see~\cref{app-subsec:hockey-stick-divergence}). We also note that because differential privacy is an information-theoretic concept, the privacy guarantee holds against both classical and quantum adversaries (i.e., the computational power of the adversary does not help).

\begin{figure}[t]
    \begin{algorithm}[H]
        \small
        \caption{Noisy Hamiltonian protocol}\label{alg:noisy-hamiltonian-qldp}
        \begin{algorithmic}[1]
            \Require Same input as~\cref{alg:standard-entangled-sensing-protocol}, scale parameter $b = b(\alpha)$, $1 \leq \alpha \leq 2$
            \Ensure $\Tilde{Q}$
            \For{$i \in [n]$}
                \State $\eta_i \sim \Lap{b}$
            \EndFor
            \State $\Tilde{H} \gets \frac{1}{2}\defsum{i=1}{n}{(\theta_i + \eta_i)\sigma^z_i}$
            \State Run sensing protocol with $\Tilde{H}$ as Hamiltonian
            \State $\Tilde{Q} \gets$ noisy estimate of $\Tilde{q}(\boldsymbol{\theta})$
            \State \Return $\Tilde{Q}$
        \end{algorithmic}
    \end{algorithm}
\end{figure}

With this result in mind, we consider two variations of the noisy Hamiltonian protocol, one in which each node samples their noise anew with each shot of the sensing protocol and one in which each node samples their noise once and keeps it constant for every shot of sensing. Interestingly, as we show in~\cref{app-sec:noisy-hamiltonian-analysis}, we find that when resampling the noise between each shot of sensing, we end up destroying both the accuracy with which we can estimate the function and the resulting privacy, so our protocol requires that the noise is sampled only once at the start of the experiment. We quantify the performance of the noisy Hamiltonian protocol in the following theorem:

\begin{theorem}[Performance of the noisy Hamiltonian protocol]\label{thm:noisy-hamiltonian-optimality}
    Assume that each honest node samples their local Laplace noise $\eta_i \overset{\mathrm{i.i.d.}}{\sim} \Lap{b}$ once at the start of the protocol and retains the same noise for all sensing shots. Then, the $n$-sensor noisy Hamiltonian protocol in~\cref{alg:noisy-hamiltonian-qldp} is a local protocol that admits mean-squared error
    \begin{equation}
        \varepsilon_{\mathrm{MSE}} = \bigOh{\frac{1}{n^2t^2}} + \frac{2b^2}{n}\label{eqn:mse-scaling-nhp-thm}
    \end{equation}
    in the honest setting. For any node $i$ to protect its data against up to $n-1$ adversaries, the noisy Hamiltonian protocol is $\varepsilon$-differentially private, where to achieve a mean-squared error scaling as $\bigOh{n^{-\alpha}}$, each honest node takes
    \begin{equation}
        b = \bigTheta{n^{-(\alpha-1)/2}},
    \end{equation}
    which, under the generic bound in~\cref{thm:general-sigma-bound}, yields
    \begin{equation}
        \varepsilon = \Theta\lp n^{(\alpha-1)/2} \rp.
    \end{equation}
\end{theorem}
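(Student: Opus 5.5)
The plan mirrors the proof of \cref{thm:global-laplace-mechanism}: condition on the sampled noise, average over the quantum randomness, then average over the noise. Because each honest node samples $\eta_i$ once and keeps it for all shots, the protocol for fixed $\boldsymbol{\eta}=(\eta_1,\ldots,\eta_n)$ is exactly the standard GHZ protocol of \cref{alg:standard-entangled-sensing-protocol} with shifted parameters $\theta_i+\eta_i$. The imprinted phase is therefore $nt\,\tilde q$, with
\begin{equation}
    \tilde q = q(\boldsymbol{\theta}) + \bar\eta, \qquad \bar\eta \coloneqq \frac{1}{n}\sum_{i=1}^n \eta_i .
\end{equation}
From the single-bit analysis (or the bit-by-bit learning protocol), I will use $\mathbb{E}_\psi[\tilde Q \mid \boldsymbol{\eta}] = \tilde q$ up to the usual estimator bias, and $\mathrm{Var}_\psi[\tilde Q \mid \boldsymbol{\eta}] = \bigOh{1/(n^2t^2)}$ by \cref{eqn:standard-heisenberg-scaling}.

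It follows that
\begin{equation}
    \mathbb{E}_\psi\bigl[(\tilde Q - q)^2 \mid \boldsymbol{\eta}\bigr] = \bigOh{\frac{1}{n^2t^2}} + \bar\eta^2 .
\end{equation}
Averaging over the i.i.d.\ noise $\eta_i \sim \Lap{b}$, which has mean zero and variance $2b^2$, gives $\mathbb{E}[\bar\eta^2] = \mathrm{Var}(\bar\eta) = 2b^2/n$. This yields \cref{eqn:mse-scaling-nhp-thm}.

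For the $\bigOh{n^{-\alpha}}$ target, the Heisenberg term $\bigOh{1/n^2} \le \bigOh{n^{-\alpha}}$ for every $\alpha \le 2$, so it suffices to have $2b^2/n = \bigOh{n^{-\alpha}}$. This means $b = \bigOh{n^{-(\alpha-1)/2}}$, and we take $b = \bigTheta{n^{-(\alpha-1)/2}}$ to inject as much noise, and hence as much privacy, as the utility constraint allows. Locality holds by construction, since each node samples its own $\eta_i$ and modifies only its own coupling.

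For privacy, we work in the worst-case adversarial model. The $n-1$ other nodes may collude, refuse to couple, choose measurements, or keep post-measurement systems. As argued after \cref{thm:general-sigma-bound}, the honest node $i$ replaces $\theta_i$ by $\theta_i+\eta_i$ before anything else touches it. Everything the adversary sees is therefore a quantum channel applied to a state depending only on $\theta_i+\eta_i$ and adversarial inputs. Invoking \cref{thm:general-sigma-bound}, together with the post-processing theorem \cref{thm:post-prosessing} for the quantum case, node $i$ is $\varepsilon$-locally differentially private whenever $b \ge \Delta/\varepsilon$ with $\Delta = \theta_{\max}-\theta_{\min}$. Taking the smallest admissible $\varepsilon = \Delta/b$ with our choice of $b$ gives $\varepsilon = \Theta(n^{(\alpha-1)/2})$, and in particular $\delta = 0$.

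The main obstacle I expect is justifying the conditional-mean claim $\mathbb{E}_\psi[\tilde Q \mid \boldsymbol{\eta}] \approx \tilde q$ uniformly in $\boldsymbol{\eta}$. Laplace noise is unbounded, so $\tilde q$ may leave the prior interval that fixes the $2\pi$ (or $\pi$) branch, giving phase-wrapping errors. I would handle this by noting that $\bar\eta$ has standard deviation $\sqrt{2}\,b/\sqrt{n} \to 0$, so $\tilde q$ leaves a slightly enlarged prior interval only with exponentially small probability. I would then choose $\delta t$ with a constant margin for this enlarged interval and absorb the residual bounded contribution into the $\bigOh{\cdot}$ terms.
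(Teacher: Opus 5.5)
Your proposal is correct and follows the paper's proof essentially step for step. You condition on the once-sampled noise, so the protocol is the standard GHZ protocol with phase $nt(q(\boldsymbol{\theta})+\overline{\eta})$; you split the conditional error into the $\bigOh{1/(n^2t^2)}$ quantum variance plus the bias $\overline{\eta}^2$ and average to get $2b^2/n$; and you convert $b=\bigTheta{n^{-(\alpha-1)/2}}$ into $\varepsilon=\Delta/b=\bigTheta{n^{(\alpha-1)/2}}$ via \cref{thm:general-sigma-bound} and post-processing. Your closing remark on phase wrapping from the unbounded Laplace tails addresses a point the paper simply asserts without comment, namely $\mathbb{E}_\psi[\Tilde{Q}\mid\overline{\eta}]=q(\boldsymbol{\theta})+\overline{\eta}$, so it is a sound refinement rather than a change of route.
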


\begin{proof}
    Note that, by construction, this protocol is local since each node applies their noise locally and no trusted curator is assumed to exist. To analyze the mean-squared error, we note that by introducing Laplace noise, the Hamiltonian becomes
    \begin{equation}
        \Tilde{H} = \frac{1}{2}\defsum{i=1}{n}{(\theta_i + \eta_i)\sigma_i^z}.
    \end{equation}
    Thus, the GHZ state is actually exposed to the quantity
    \begin{equation}
        \Tilde{q}(\boldsymbol{\theta}) = \frac{1}{n}\defsum{i=1}{n}{(\theta_i+\eta_i)} = q(\boldsymbol{\theta}) + \overline{\eta},
    \end{equation}
    where $\overline{\eta} = \frac{1}{n}\defsum{i=1}{n}{\eta_i}$. We calculate the mean-squared error as
    \begin{equation}
        \varepsilon_{\mathrm{MSE}} = \mathbb{E}[(\Tilde{Q} - q(\boldsymbol{\theta}))^2] = \mathbb{E}_{\eta}[\mathbb{E}_\psi[(\Tilde{Q} - q(\boldsymbol{\theta}))^2 \mid \overline{\eta}]],
    \end{equation}
    where we separately take the expectation over the quantum randomness and the randomness from the noise. We can break the inner expectation into the variance part plus the bias part as
    \begin{equation}
        \mathbb{E}_\psi[(\Tilde{Q} - q(\boldsymbol{\theta}))^2 \mid \overline{\eta}] = \mathrm{Var}_\psi[\Tilde{Q} \mid \overline{\eta}] + (\mathbb{E}_\psi[\Tilde{Q} \mid \overline{\eta}] - q(\boldsymbol{\theta}))^2.
    \end{equation}
    We calculate $\mathrm{Var}_\psi[\Tilde{Q} \mid \overline{\eta}]$ over the quantum noise, thus treating $\overline{\eta}$ as a constant, as
    \begin{align}
        \mathrm{Var}_\psi[\Tilde{Q} \mid \overline{\eta}] &= \mathrm{Var}_\psi[Q + \overline{\eta} \mid \overline{\eta}]\\
        &= \mathrm{Var}_\psi[Q \mid \overline{\eta}] + \underbrace{\mathrm{Var}_\psi[\overline{\eta} \mid \overline{\eta}]}_0 + 2\underbrace{\CovSpec{\psi}{Q}{\overline{\eta} \mid \overline{\eta}}}_0\\
        &= \bigOh{\frac{1}{n^2t^2}},
    \end{align}
    where we used~\cref{eqn:standard-heisenberg-scaling} to find $\mathrm{Var}_\psi[Q | \overline{\eta}]$, the fact that the variance of a constant is 0, and the fact that the noise is constant and hence the covariance term is 0. Conditioned on $\overline{\eta}$, the expected value over the quantum randomness is
    \begin{equation}
        \mathbb{E}_\psi[\Tilde{Q}|\overline{\eta}] = \Tilde{q}(\boldsymbol{\theta}) = q(\boldsymbol{\theta}) + \overline{\eta}.
    \end{equation}
    Thus, we now have
    \begin{equation}
        \varepsilon_{\mathrm{MSE}} = \mathbb{E}[(\Tilde{Q} - q(\boldsymbol{\theta}))^2] = O\lp \frac{1}{n^2t^2} \rp + \mathbb{E}_\eta[\overline{\eta}^2],
    \end{equation}
    so it remains to find $\mathbb{E}_\eta[\overline{\eta}^2]$. Averaging over the Laplace noise, where each $\eta_i \sim \Lap{b}$ with mean 0 and variance $2b^2$, and using the fact that each noise term is sampled independently, we have
    \begin{equation}
        \mathbb{E}_\eta[\overline{\eta}] = \frac{1}{n}\defsum{i=1}{n}{\mathbb{E}[\eta_i]} = 0,
    \end{equation}
    and
    \begin{align}
        \mathrm{Var}_\eta[\overline{\eta}] &= \Var{\frac{1}{n}\defsum{i=1}{n}{\eta_i}}\\
        &= \frac{1}{n^2}\defsum{i=1}{n}{\Var{\eta_i}}\\
        &= \frac{2b^2}{n}.
    \end{align}
    Thus, we have
    \begin{equation}
        \mathbb{E}[\overline{\eta}^2] = \Var{\overline{\eta}} = \frac{2b^2}{n}.
    \end{equation}    
    Putting everything together, we have
    \begin{equation}
        \varepsilon_{\mathrm{MSE}} = \bigOh{\frac{1}{n^2t^2}} + \frac{2b^2}{n}.\label{eqn:mse-scaling}
    \end{equation}
    By~\cref{thm:general-sigma-bound},
    \begin{equation}
        b \geq \frac{\theta_{\max} - \theta_{\min}}{\varepsilon}.
    \end{equation}
    Thus,
    \begin{equation}
        \varepsilon \geq \frac{\theta_{\max} - \theta_{\min}}{b}.
    \end{equation}
    We choose $b = \bigTheta{n^{-(\alpha-1)/2}}$ to enforce $\varepsilon_{\mathrm{MSE}} = \bigOh{n^{-\alpha}}$, which implies
    \begin{equation}
        \varepsilon = \bigTheta{n^{(\alpha-1)/2}},
    \end{equation}
    as claimed in the theorem.
\end{proof}

We note that we restrict our analysis to the GHZ state because, as we show in~\cref{app-sec:ghz-state-leakiness}, the GHZ state is among the set of states that is the worst for privacy, while being optimal for sensing. Note also that~\cref{thm:noisy-hamiltonian-optimality} implies a tradeoff between the privacy and the utility of the noisy Hamiltonian protocol, which we parameterize by $\alpha$. We are only interested in the regime $1 \leq \alpha \leq 2$, where $\alpha = 1$ gives us the standard quantum limit for the scaling of $\varepsilon_{\mathrm{MSE}}$, $\alpha = 2$ gives us Heisenberg scaling, and $1 < \alpha < 2$ gives us an intermediate scaling. Analyzing the resulting privacy parameter for each of these in turn, we have

\begin{enumerate}
    \item \textbf{Standard quantum limit}: Taking $\alpha = 1$ in the expression for $\varepsilon$, we find
    \begin{equation}
        \varepsilon = \bigTheta{1},
    \end{equation}
    that is, a constant privacy parameter $\varepsilon$ is possible, which is desired from the privacy perspective.
    \item \textbf{Heisenberg limit}: Taking $\alpha = 2$ in the expression for $\varepsilon$, we have
    \begin{equation}
        \varepsilon = \bigTheta{\sqrt{n}},
    \end{equation}
    that is, the privacy parameter grows with the square root of the number of sensors $n$.
    \item \textbf{Intermediate regime}: Between $\alpha = 1$ and $\alpha = 2$, $\varepsilon$ scales sub-linearly with $n$, but slower than the square root of $n$.
\end{enumerate}

Thus, we see that constant-$\varepsilon$ local differential privacy is only possible if we accept a mean-squared error that scales according to the standard quantum limit. In this case, there is no benefit to using the entangled protocol and the same levels of privacy and function estimation quality can be achieved using an unentangled protocol with local Laplace noise added by each node. We prove this in~\cref{app-subsec:no-privacy-advantage-entanglement}. On the other hand, if we wish to achieve Heisenberg-limited scaling in $\varepsilon_{\mathrm{MSE}}$, then $\varepsilon = \bigTheta{\sqrt{n}}$, which means that we lose privacy as we try to get a more accurate function estimate. As we often desire small values for $\varepsilon$, taking $\alpha = 2$ is therefore not desirable. Thus, the advantage of this protocol lies in the intermediate regime, where the user can choose the best value for $\alpha$ that satisfies both their desired privacy and utility.

The scaling in~\cref{thm:noisy-hamiltonian-optimality} follows from the generic bound that we found in~\cref{thm:general-sigma-bound}, and so it applies to arbitrary attacks and arbitrary transcripts obtained from the locally randomized parameter by post-processing. However, this is a conservative bound, and in~\cref{app-subsec:hockey-stick-divergence}, we analyze the privacy using the quantum hockey-stick divergence. Using these techniques, we can find smaller values for $\varepsilon$ while still achieving Heisenberg scaling for the mean-squared error. In particular, when considering the full bit-by-bit learning protocol and for the special case of a one-sample, one-quadrature version of the $K=1$ setting, we find $\varepsilon = \bigTheta{\ln(1+n^{\alpha-1})}$, which is constant for $\alpha=1$ and scales as $\bigTheta{(\alpha-1)\ln{n}}$ for $1 < \alpha \leq 2$. This is markedly better than the bound we get using the generic bound on $b$ in~\cref{thm:general-sigma-bound}. However, for larger values of $K$, we are unable to find simple, closed-form expressions and instead provide numerical results.

Finally, we clarify the adversarial model assumed in the above analysis. Our statement about the scaling of $\varepsilon_{\mathrm{MSE}}$ applies in the \emph{honest setting}, that is, where all of the nodes behave honestly (i.e., not adversarially) by sampling their own noise independently and running the sensing protocol correctly: coupling their qubit to their (noisy) parameter, allowing it to evolve for time $t$, making a measurement, and broadcasting their result. On the other hand, the statements we make about the privacy pertain to the \emph{worst-case scenario}, where all but one of the nodes are assumed to be adversarial. The connection between these two perspectives is that it tells us how much noise must be added to ensure that, if there is only a single honest node, then that node's data is protected with privacy budget $\varepsilon$. If, however, we are in the honest setting, the fact that every node samples noise according to what is required in the worst-case scenario gives us the resulting scaling of $\varepsilon_{\mathrm{MSE}}$ in~\cref{eqn:mse-scaling-nhp-thm}. The key is that any given node does not know which setting they are in, and this gives us the tradeoff in the theorem. We relax this adversarial model in the next subsection, where we show that we can get an improved privacy-utility tradeoff for the noisy Hamiltonian protocol if we allow for an honest fraction of nodes.

\subsection{Noisy Hamiltonian protocol with an honest fraction}\label{subsec:nhp-honest-fraction}
In the previous subsection, we derived a general tradeoff between the achievable level of privacy and the utility of the sensing protocol for the noisy Hamiltonian protocol. While we are able to achieve a more favorable scaling of $\varepsilon = \bigTheta{(\alpha-1)\ln{n}}$ for the specific case of $K=1$ (see~\cref{app-subsec:hockey-stick-divergence}), our general analysis results in a scaling for $\varepsilon$ that is polynomial in $n$, which is undesirable. Recall that the analysis in the previous subsection was for a single honest node protecting its parameter against all other nodes in the network. In this subsection, we consider a less pessimistic setting, where we assume that the noisy Hamiltonian protocol is implemented where at least a constant fraction of the network is honest. As usual, the correctness and privacy statements that we make below refer to two different scenarios. In the honest setting, all nodes follow the protocol, and we analyze the resulting mean-squared error of the function estimate. In the adversarial setting, malicious nodes can behave arbitrarily and do not necessarily follow the protocol as specified. The privacy guarantee in this subsection is thus an aggregate-output guarantee: the released subset-average estimate is protected by the total Gaussian noise contributed by the honest nodes in the queried subset. When the verified GHZ implementation ensures that the honest parameters enter the adversary’s view only through the aggregate phase, the same guarantee applies to any transcript obtained by post-processing that aggregate information. However, we do not claim that the honest-fraction noise scale protects arbitrary unverified transcripts that reveal individual honest nodes’ noisy parameters. We summarize this protocol in~\cref{alg:honest-fraction-nhp}.

\begin{figure}[tb]
    \begin{algorithm}[H]
        \caption{Honest-fraction noisy Hamiltonian protocol}
        \label{alg:honest-fraction-nhp}
        \begin{algorithmic}[1]
            \Require Same input as~\cref{alg:noisy-hamiltonian-qldp}, constants $c>1$,
            $\beta > 1 - 1/c$, subset $S \subseteq [n]$ with $\abs{S} \geq \beta n$,
            privacy parameters $\varepsilon,\delta$
            \Ensure Noisy estimate $\Tilde{Q}_S$ of
            $q_S(\theta) = |S|^{-1}\sum_{i\in S}\theta_i$
            
            \State Choose $\kappa$ such that $\kappa^2 > 2\ln(1.25/\delta)$
            \State $\sigma_\ell \gets \frac{\kappa \Delta}{\varepsilon} \frac{1}{\sqrt{(\beta - 1 + 1/c)n}}$
            \State Generate and verify GHZ state on nodes in $S$
            \If{GHZ verification fails}
                \State \Return $\bot$
            \EndIf
            \For{$i \in S$}
                \State Honest node $i$ samples noise $\eta_i \sim \mathcal{N}(0,\sigma_\ell^2)$
            \EndFor            
            \State Define noisy Hamiltonian $\Tilde{H}_S = \frac{1}{2}\defsum{i\in S}{}{(\theta_i+\eta_i)\sigma_i^z}$
            \For{$m \in [M]$}
                \State Evolve GHZ state under $\Tilde{H}_S$ for time $t$
                \State Measure parity operator on nodes in $S$
                \State $p_m \gets$ parity measurement result
            \EndFor
            \State Estimate noisy parity expectation $\langle P_S\rangle_{\mathrm{est}} \gets \frac{1}{M}\sum_{m=1}^M p_m$
            \State Compute noisy subset-average estimate $\Tilde{Q}_S \gets \frac{1}{\abs{S}t} \arccos\lp\langle P_S\rangle_{\mathrm{est}}\rp$
            \State \Return $\widetilde Q_S$
        \end{algorithmic}
    \end{algorithm}
\end{figure}

We assume that we are promised that at least a $1/c$ fraction of the $n$ nodes in the network are honest, where $c > 1$ is a known constant. If we denote by $S_{\mathrm{hon}} \subseteq [n]$ the honest set, then we have
\begin{equation}
    \abs{S_{\mathrm{hon}}} \geq \frac{n}{c}.
\end{equation}
We also assume a common source of randomness to verify the GHZ state. We restrict the allowed queries to averages of subsets of the nodes' parameters of the form
\begin{equation}
    q_S(\boldsymbol{\theta}) = \frac{1}{\abs{S}}\defsum{i \in S}{}{\theta_i},
\end{equation}
where the queried subset $S \subseteq [n]$ is public and fixed before the execution of the protocol. In particular, the size of the subset satisfies
\begin{equation}
    \abs{S} \geq \beta n
\end{equation}
for some constant
\begin{equation}
    \beta > 1 - \frac{1}{c}.
\end{equation}
This guarantees that every allowable subset contains a linear number of honest nodes. That is, since there are at most $(1 - 1/c)n$ dishonest nodes, any allowable subset $S$ contains at least
\begin{equation}
    \abs{S \cap S_{\mathrm{hon}}} \geq \abs{S} - \lp 1 - \frac{1}{c} \rp n \geq \lp \beta - 1 + \frac{1}{c} \rp n
\end{equation}
honest nodes. In the analysis below, we assume that adversarial nodes do not add noise and do not couple their parameters to their qubits, as each of these things does nothing to help the adversary learn anything about the honest nodes' parameters. As a result, our privacy analysis finds the minimum amount of noise that honest nodes must add in order to achieve $(\varepsilon, \delta)$-differential privacy; any additional noise added by an adversary will simply add more noise than is necessary to achieve the desired level of privacy and can decrease utility. For each distinct allowable query $S$, we assume that each honest node $i \in S$ samples fresh independent Gaussian noise $\eta_i \sim \mathcal{N}(0,\sigma_\ell^2)$, where
\begin{equation}
    \sigma_\ell = \frac{\kappa\Delta}{\varepsilon} \cdot \frac{1}{\sqrt{(\beta-1+\frac{1}{c})n}},\label{eqn:gaussian-sigma}
\end{equation}
where
\begin{equation}
    \Delta \coloneqq \theta_{\max} - \theta_{\min}\quad \text{and}\quad \kappa^2 > 2\ln(1.25/\delta),\label{eqn:delta-kappa}
\end{equation}
in line with the definition of the Gaussian mechanism in~\cref{def:gaussian-mechanism}. We hold the sampled noise fixed for all sensing shots used to answer the query so as to not destroy the sensing advantage (see~\cref{app-subsec:resampled-noise}), but we stress that the noise \emph{is} resampled independently for each distinct query; we discuss this more below. If the same query is asked again, the protocol can simply return the same cached function estimate that was released before, rather than rerunning the sensing protocol with the same privacy budget but new noise. Within each query, each honest participating node evolves its qubit under the noisy Hamiltonian
\begin{equation}
    \Tilde{H}_S = \frac{1}{2}\defsum{i \in S}{}{(\theta_i + \eta_i)\sigma_i^z},
\end{equation}
where adversarial nodes behave arbitrarily. With this, we state the main result of this subsection:

\begin{theorem}[Honest-fraction noisy Hamiltonian protocol for allowed subset averages]\label{thm:nhp-honest-fraction}
    Fix constants $c > 1$, $\beta > 1-\frac{1}{c}$, $\varepsilon > 0$, $\delta \in (0,1)$, and $\kappa^2 > 2\ln(1.25/\delta)$ and let $\Delta$ and $\sigma_\ell$ be as given above in~\cref{eqn:gaussian-sigma,eqn:delta-kappa}, respectively. For every allowed subset $S \subseteq [n]$ satisfying $\abs{S} \geq \beta n$, consider the noisy Hamiltonian protocol, where each honest node $i \in S$ applies noise $\eta_i \sim \mathcal{N}(0,\sigma_\ell^2)$ independently and uses the same, fixed noise for all sensing shots associated with that query. Then, we have the following:
    \begin{enumerate}
        \item \textbf{Correctness in the honest setting}. If all nodes in $S$ are honest and follow the protocol, then the protocol estimates
        \begin{equation}
            q_S(\boldsymbol{\theta}) = \frac{1}{\abs{S}}\defsum{i \in S}{}{\theta_i}
        \end{equation}
        with mean-squared error
        \begin{equation}
            \varepsilon_{\mathrm{MSE}} = \bigOh{\frac{1}{\abs{S}^2t^2}} + \frac{\kappa^2\Delta^2}{\varepsilon^2(\beta-1+\frac{1}{c})n\abs{S}}.
        \end{equation}
        Since $\abs{S} \geq \beta n$, the mean-squared error scales as $\bigOh{1/n^2}$ for constant $c$, $\beta$, $\varepsilon$, and $\delta$, and thus retains Heisenberg scaling.

        \item \textbf{Privacy in the adversarial setting}. Suppose $S_{\mathrm{hon}} \subseteq [n]$ is the set of honest nodes, where $\abs{S_{\mathrm{hon}}} \geq n/c$. For any allowed queried $S$, if the honest nodes follow the protocol, then the released noisy estimate of the average of the subset's parameters is $(\varepsilon,\delta)$-differentially private with respect to changing one honest node's parameter.
    \end{enumerate}
\end{theorem}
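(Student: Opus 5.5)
The plan is to split the argument into the two parts of the theorem and reuse the conditional-expectation decomposition from the proof of~\cref{thm:noisy-hamiltonian-optimality} for correctness. Then I would reduce privacy to the Gaussian mechanism (\cref{thm:gaussian-mechanism-properties}) applied to the aggregate phase, followed by post-processing (\cref{thm:post-prosessing}).

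\emph{Correctness.} In the honest setting every node in $S$ adds $\eta_i\sim\mathcal{N}(0,\sigma_\ell^2)$ and holds it fixed for all $M$ shots. The GHZ state on $S$ therefore accumulates the phase $\abs{S}t\,\Tilde{q}_S$, where $\Tilde{q}_S = q_S(\boldsymbol{\theta}) + \overline{\eta}_S$ and $\overline{\eta}_S = \abs{S}^{-1}\sum_{i\in S}\eta_i$. I would condition on $\overline{\eta}_S$ and write the inner mean-squared error as variance plus bias, exactly as before. The variance term is the standard GHZ variance from~\cref{eqn:standard-heisenberg-scaling} with $n$ replaced by $\abs{S}$, namely $\bigOh{1/(\abs{S}^2t^2)}$. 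The conditional bias is $\overline{\eta}_S$. Averaging over the Gaussian noise gives $\mathbb{E}[\overline{\eta}_S^2]=\sigma_\ell^2/\abs{S}$. Substituting~\cref{eqn:gaussian-sigma} yields the stated term $\kappa^2\Delta^2/(\varepsilon^2(\beta-1+1/c)n\abs{S})$. Finally, $\abs{S}\geq\beta n$ with $c,\beta,\varepsilon,\delta$ constant makes both terms $\bigOh{1/n^2}$.

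\emph{Privacy.} In the adversarial setting I would use the stated reduction that adversarial nodes add no noise and do not couple their parameters. Under GHZ verification, the adversary's view is then a post-processing of the aggregate phase. Since only node $j$'s parameter varies between the neighboring inputs, I would express that phase, up to the factor $t$, as
\begin{equation}
    \Phi = \defsum{i\in S\cap S_{\mathrm{hon}}}{}{\theta_i} + Z + \text{(terms not depending on $\theta_j$ or on honest noise)},\qquad Z=\defsum{i\in S\cap S_{\mathrm{hon}}}{}{\eta_i}.
\end{equation}
The noise $Z$ is Gaussian with variance $h\sigma_\ell^2$, where $h=\abs{S\cap S_{\mathrm{hon}}}\geq(\beta-1+1/c)n$ by the counting argument given before the theorem. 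Changing one honest $\theta_j$ within $[\theta_{\min},\theta_{\max}]$ changes the sum by at most $\Delta$, so the global sensitivity of $\Phi$ is $\Delta$. The standard deviation of $Z$ is
\begin{equation}
    \sqrt{h}\,\sigma_\ell \geq \frac{\kappa\Delta}{\varepsilon}\cdot\sqrt{\frac{h}{(\beta-1+1/c)n}}\geq\frac{\kappa\Delta}{\varepsilon}.
\end{equation}
This meets the calibration in~\cref{def:gaussian-mechanism}. Adding extra Gaussian variance only helps, since the excess is an independent Gaussian that is itself a post-processing. Hence $\Phi$ is $(\varepsilon,\delta)$-differentially private by~\cref{thm:gaussian-mechanism-properties}. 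The quantum evolution, the parity measurements, the $\arccos$ inversion and the rescaling by $1/\abs{S}$ are then a channel applied to a state depending on $\boldsymbol{\theta}$ only through $\Phi$. Thus~\cref{thm:post-prosessing}, or equivalently~\cref{thm:hockey-stick-div} with data processing for the hockey-stick divergence, transfers the guarantee to $\Tilde{Q}_S$.

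\emph{Main obstacle.} The hard step is justifying that the adversary's view depends on the honest parameters only through $\Phi$. Adversarial nodes hold GHZ qubits and may measure or entangle them arbitrarily. I would first try to rely on the verified-GHZ black box together with the fact that $\prod_i e^{-i(\theta_i+\eta_i)t\sigma^z_i/2}$ acts on the GHZ subspace as a single phase rotation by the total. Any adversarial operation on their own qubits then commutes with the honest local rotations, and the joint state is a function of $\Phi$. I would state this explicitly as the assumption under which the aggregate-output guarantee holds, matching the caveat in the text. Multi-query composition is not required by the theorem and I would only remark on it.
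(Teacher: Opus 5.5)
Your proposal is correct and follows essentially the same route as the paper. The correctness part is the same variance-plus-bias decomposition with $\mathbb{E}[\overline{\eta}_S^2]=\sigma_\ell^2/\abs{S}$, and the privacy part is the same Gaussian-mechanism calibration from $\abs{S\cap S_{\mathrm{hon}}}\geq(\beta-1+1/c)n$, just rescaled by $\abs{S}$: you use the summed phase with sensitivity $\Delta$ and standard deviation $\kappa\Delta/\varepsilon$, where the paper uses the subset average with sensitivity $\Delta/\abs{S}$ and standard deviation $\kappa\Delta/(\varepsilon\abs{S})$. Your explicit post-processing steps (that excess Gaussian variance is harmless, and that the measurement and $\arccos$ inversion carry the noisy phase to $\Tilde{Q}_S$) make precise what the paper leaves implicit when it treats the release as $q_S+Z_{S_{\mathrm{hon}}}$ under its aggregate-phase caveat.
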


\begin{proof}
    We first prove the correctness in the honest setting. When all nodes in the queried subset $S$ are honest, the protocol estimates the function
    \begin{equation}
        \Tilde{q}_S(\boldsymbol{\theta}) = \frac{1}{\abs{S}}\defsum{i \in S}{}{(\theta_i+\eta_i)} = q_S(\boldsymbol{\theta}) + Z_S,
    \end{equation}
    where
    \begin{equation}
        Z_S = \frac{1}{\abs{S}}\defsum{i \in S}{}{\eta_i}
    \end{equation}
    is the aggregate noise contributed by the honest nodes. Conditioned on the sampled noise, the usual entangled sensing analysis gives us a variance of $\bigOh{1/(\abs{S}^2t^2)}$ for $\varepsilon_{\mathrm{MSE}}$. Averaging over the Gaussian noise, we have
    \begin{equation}
        \Var{Z_S} = \frac{1}{\abs{S}^2}\defsum{i \in S}{}{\Var{\eta_i}} = \frac{\sigma_\ell^2}{\abs{S}}.
    \end{equation}
    Using the value for $\sigma_\ell$ given in~\cref{eqn:gaussian-sigma}, we have
    \begin{equation}
        \varepsilon_{\mathrm{MSE}} = \bigOh{\frac{1}{\abs{S}^2t^2}} + \frac{\kappa^2\Delta^2}{\varepsilon^2(\beta-1+\frac{1}{c})n\abs{S}}.
    \end{equation}
    Since we have $\abs{S} \geq \beta n$, both terms scale as $\bigOh{1/n^2}$ for constant $c$, $\beta$, $\varepsilon$, and $\delta$, as claimed in the theorem.

    We now analyze the adversarial setting. The adversary may behave arbitrarily (e.g., not coupling its parameters to its qubits), and so we make no claim on the resulting function estimate accuracy and instead analyze the privacy of the protocol. With the query set $S$ and the set of honest nodes in the full network denoted by $S_{\mathrm{hon}} \subseteq [n]$, where $\abs{S_{\mathrm{hon}}} \geq n/c$, we have
    \begin{equation}
        \abs{S \cap S_{\mathrm{hon}}} \geq \lp \beta-1+\frac{1}{c} \rp n
    \end{equation}
    as the lower bound on the size of the honest set of nodes included in the query. Since only honest nodes are assumed to add noise, the aggregate noise added is
    \begin{equation}
        Z_{S_{\mathrm{hon}}} = \frac{1}{\abs{S}}\defsum{i \in S \cap S_{\mathrm{hon}}}{}{\eta_i},
    \end{equation}
    so we have
    \begin{equation}
        Z_{S_{\mathrm{hon}}} \sim \mathcal{N}\lp 0, \frac{\abs{S \cap S_{\mathrm{hon}}}\sigma_\ell^2}{\abs{S}^2} \rp.
    \end{equation}
    With the lower bound on $\abs{S \cap S_{\mathrm{hon}}}$ from above, we have
    \begin{align}
        \Var{Z_{S_\mathrm{hon}}} &\geq \frac{(\beta-1+\frac{1}{c})n}{\abs{S}^2} \cdot \frac{\kappa^2\Delta^2}{\varepsilon^2(\beta-1+\frac{1}{c})n}\\
        &= \frac{\kappa^2\Delta^2}{\varepsilon^2\abs{S}^2}.
    \end{align}
    The sensitivity of the subset average function $q_S(\boldsymbol{\theta})$ to changing one honest node's parameter value is
    \begin{equation}
        \mathrm{GS} = \frac{\Delta}{\abs{S}},
    \end{equation}
    so by~\cref{def:gaussian-mechanism}, the Gaussian mechanism requires the noise standard deviation to be
    \begin{equation}
        \frac{\kappa\mathrm{GS}}{\varepsilon} = \frac{\kappa\Delta}{\varepsilon\abs{S}},
    \end{equation}
     and so the variance is
     \begin{equation}
         \frac{\kappa^2\Delta^2}{\varepsilon^2\abs{S}^2}.
     \end{equation}
     As we showed above, the honest aggregate noise has at least this variance, and so the released noisy subset average is $(\varepsilon,\delta)$-differentially private by~\cref{thm:gaussian-mechanism-properties}, completing the proof.
\end{proof}

We highlight that the restriction on the type of queries (i.e., the allowed subsets $S$) is necessary. If the adversary could query a subset $S$ containing only one honest node $j$ and otherwise dishonest nodes, then the dishonest nodes could choose not to couple their qubits or add noise to their systems. The released value would then be equivalent to
\begin{equation}
    \Tilde{q}_S \approx \frac{\theta_j + \eta_j}{\abs{S}},
\end{equation}
which the adversary could then simply multiply by $\abs{S}$ to get $\theta_j + \eta_j$. With the noise used in~\cref{thm:nhp-honest-fraction}, $\eta_j$ has a standard deviation scaling as $\bigOh{1/\sqrt{n}}$, so this allows the adversary to learn $\theta_j$ with vanishing error as the size of the network, $n$, grows. The condition $\abs{S} \geq \beta n$ with $\beta > 1 - 1/c$ thus rules out this attack by ensuring that every allowed subset contains at least $(\beta-1+1/c)n$ honest nodes.

We also stress the importance of fresh independent noise being used for each distinct query. If the same $\eta_i$ were used across distinct subset averages, then an adversary could perform a differencing attack using, for example, two queries, one including the full network and one excluding the target parameter:
\begin{equation}
    n\Tilde{q}_{[n]} - (n-1)\Tilde{q}_{[n]\setminus\{j\}} = \theta_j + \eta_j.
\end{equation}
Since $\sigma_\ell = \bigOh{1/\sqrt{n}}$, this vanishes with increasing network size $n$. Instead, using resampled noise, we have
\begin{equation}
    n\Tilde{q}_{[n]} - (n-1)\Tilde{q}_{[n]\setminus\{j\}} = \theta_j + W_j,
\end{equation}
where
\begin{equation}
    W_j = \defsum{i\in S_{\mathrm{hon}}}{}{\eta_i^{(1)}} - \defsum{i\in S_{\mathrm{hon}}\setminus\{j\}}{}{\eta_i^{(2)}},
\end{equation}
where the superscript denotes which query the noise is coming from. Since these two queries use fresh independent Gaussian noise, we have
\begin{equation}
    W_j \sim \mathcal{N}(0, (2\abs{S_{\mathrm{hon}}}-1)\sigma_\ell^2).
\end{equation}
Since $\abs{S_{\mathrm{hon}}} \geq n/c$, we have
\begin{equation}
    \Var{W_j} \geq \lp \frac{2n}{c} - 1 \rp\sigma_\ell^2.
\end{equation}
Using
\begin{equation}
    \sigma_\ell^2 = \frac{\kappa^2\Delta^2}{\varepsilon_0^2(\beta-1+\frac{1}{c})n},
\end{equation}
where $\varepsilon_0$ is the per-query privacy parameter, we have
\begin{align}
    \Var{W_j} &\geq \lp \frac{2}{c}-\frac{1}{n} \rp\frac{\kappa^2\Delta^2}{\varepsilon_0^2(\beta-1+\frac{1}{c})}\\
    \implies W_j &\sim \mathcal{N}\lp 0, \bigOmega{\frac{\kappa^2\Delta^2}{\varepsilon_0^2}} \rp
\end{align}
for constant $c$ and $\beta > 1-1/c$. Thus, the adversary learns $\theta_j$ only up to constant-scale Gaussian noise, independent of $n$, and so the noise does not vanish as we increase $n$, as it did above where we did not resample the noise.

More generally, the adversary may ask for $k$ distinct subset-average queries $q_{S_1}, \ldots, q_{S_k}$, where each $S_r$ satisfies $\abs{S_r} \geq \beta n$ and each run of the protocol uses fresh independent Gaussian noise allowing for $(\varepsilon_r,\delta_r)$-differential privacy. By the composition theorem in~\cref{thm:composition-thm}, the release of all $k$ queries is $(\defsum{r=1}{k}{\varepsilon_r}, \defsum{r=1}{k}{\delta_r})$-differentially private. If $\varepsilon_r = \varepsilon_0$ and $\delta_r = \delta_0$ for all $r \in [k]$, then this release is $(k\varepsilon_0, k\delta_0)$-differentially private. Thus, to achieve a privacy budget of $(\varepsilon, \delta)$ for $k$ distinct queries, it suffices to take $\varepsilon_0 = \varepsilon/k$ and $\delta_0 = \delta/k$, with the corresponding increase in the noise parameter $\sigma_\ell$:
\begin{equation}
    \sigma_\ell = \frac{\kappa'\Delta}{\varepsilon/k} \cdot \frac{1}{\sqrt{(\beta-1+\frac{1}{c})n}},
\end{equation}
where
\begin{equation}
    (\kappa')^2 > 2\ln\lp \frac{1.25k}{\delta} \rp.
\end{equation}
Thus, for constant $k$, the mean-squared error in the honest setting still scales as $\bigOh{1/n^2}$. As discussed in~\cref{subsec:dp-in-sensing}, this transcript-level guarantee also bounds every post-processing attack on the released estimates. In particular, for two candidate values of the target parameter separated by a distance $d \leq \Delta$, the adversary's distinguishing advantage is controlled by the effective privacy parameters $\varepsilon_{\mathrm{eff}} = k\varepsilon_0d/\Delta$ and $\delta_{\mathrm{eff}} = k\delta_0$. Thus, we have
\begin{equation}
    p_{\mathrm{succ}} \leq \frac{e^{k\varepsilon_0d/\Delta} + k\delta_0}{1+e^{k\varepsilon_0d/\Delta}}.
\end{equation}
For small $k\delta_0$ and $d \ll \Delta/(k\varepsilon_0)$, this is close to $1/2$ (i.e., only slightly better than random guessing) and for $d = \Delta$, this reduces to the usual worst-case $(k\varepsilon_0, k\delta_0)$-differential privacy guarantee. We note that, using the strong composition theorem~\cite{dwork2010}, we can improve the dependence on $k$, giving a $\sqrt{k}$-type scaling rather than $k$, at the expense of a larger (but still small) value for $\delta$, but for simplicity we only use the basic composition theorem in~\cref{thm:composition-thm}.

\section{Discussion}\label{sec:discussion}
In this work, we introduced local and global protocols for differentially private quantum sensing in different network settings, where each protocol comes with a set of tradeoffs that the user can balance based on their needs. To measure the performance of each protocol, we introduced a set of three criteria: one correctness condition (the scaling of the mean-squared error on the function estimate) and two soundness conditions (locality and privacy). To the best of our knowledge, our work is the first use of explicit differentially private mechanisms in the context of entangled quantum sensing, and we hope that this work inspires future research along these lines. The privacy of quantum sensing protocols and clients' data should be considered an integral part of any application of quantum sensing, especially as we approach an era in which quantum sensors and other quantum technologies can be implemented regularly and reliably in commercial applications.

There are a number of potential applications for the differentially private quantum sensor network protocols that we have introduced in this paper. Biomedical data is an obvious domain that would benefit from differentially private sensing, but other applications such as geophysical sensors deployed in financially or militarily strategic settings would also benefit. We hope this work lays the groundwork for other exciting applications of differentially private quantum distributed sensing, such as in more general learning problems.

We leave open a number of possible future directions and improvements. One direction is to develop further notions of security, using both differential privacy as we have done in this paper but also invoking cryptographic assumptions (e.g., to simulate a trusted curator). We leave to future work exact mechanisms for realizing correlated noise that can be used by honest nodes to add noise in such a way that Heisenberg scaling is maintained with improved coefficients and milder assumptions while still retaining local privacy. Another approach could be for the network to verify that there is sufficient noise in the GHZ state, before nodes input their data. Furthermore, as mentioned in~\cref{sec:decentralized-network}, we leave to future work the analysis of an explicit GHZ state verification protocol, including errors, that is tailored to the settings we consider in this work. As we show in~\cref{app-sec:ghz-state-leakiness} below, there is a notion of optimality from both the sensing and the privacy perspectives. While we only consider the GHZ state in this work, optimizing over states to strike a balance between optimality for sensing versus privacy could be an interesting dimension to add to the analysis of our protocols.

In addition, we would like to improve the bound we get for $\varepsilon$ in~\cref{thm:noisy-hamiltonian-optimality} by analytically bounding the hockey-stick divergence, which we were only able to do for the specific case of $K=1$. Obtaining a tight analytic result for the full bit-by-bit learning protocol for arbitrary $K$ and $\nu$ would be useful. In particular, we expect that we can relax the assumptions to only a common source of randomness if we allow for $\varepsilon = \bigOh{\ln{n}}$. In this work, we only considered the average function to highlight the privacy aspects of our protocols, but the mechanisms can be adapted to more general linear functions of the form $q(\boldsymbol{\theta}) = \boldsymbol{\alpha} \cdot \boldsymbol{\theta}$, though the sensitivity and allowed-query conditions must be revisited. Further extensions to functions beyond linear~\cite{qian2019} would also be interesting. Finally, in this work we assumed our sensors were qubits; we leave open the possibility of implementing differentially private quantum sensing protocols using bosons~\cite{polino2020,zhuang2018,xia2019,bringewatt2024}, which may open new avenues for physically implementing noise to achieve differential privacy. We hope that this work generates interest in the unification of differential privacy and quantum sensing, and highlights the importance of privacy in practical implementations of quantum sensor networks.

\begin{acknowledgments}
    The authors thank Yusuf Alnawakhtha, Jacob Bringewatt, Adam Ehrenberg, Eleanor Rieffel, Yuxin Wang, and Yu Wei for helpful discussions. D.J.S.~acknowledges support from a graduate research fellowship from the Joint Quantum Institute (JQI) at the University of Maryland, College Park. K.S.~acknowledges support from the U.S. Army Research Office (K.S.) under grant No.~W911NF-20-1-0015. E.T.K.~acknowledges support from the NRC Research Associateship Program at the National Institute of Standards and Technology (NIST), administered by the Fellowships Office of the National Academies of Sciences, Engineering, and Medicine. D.J.S., K.S., E.T.K., and A.V.G.~were supported in part by ONR MURI, AFOSR MURI, the DoE ASCR Quantum Testbed Pathfinder program (award No.~DE-SC0024220), NSF QLCI (award No.~OMA-2120757), NSF STAQ program, DARPA SAVaNT ADVENT, ARL (W911NF-24-2-0107), and NQVL:QSTD:Pilot:FTL. D.J.S., K.S., E.T.K., and A.V.G.~also acknowledge support from the U.S.~Department of Energy, Office of Science, National Quantum Information Science Research Centers, Quantum Systems Accelerator (Award No. DE-SCL0000121), and from the U.S.~Department of Energy, Office of Science, Accelerated Research in Quantum Computing, Fundamental Algorithmic Research toward Quantum Utility (FAR-Qu).
\end{acknowledgments}

\bibliography{references}
\clearpage
\onecolumngrid
\appendix

\section{The GHZ state is in the family of states that leak the most information}\label{app-sec:ghz-state-leakiness}
In this appendix, we prove the claim made in~\cref{subsec:noisy-hamiltonian-protocol} that the GHZ state is among the set of worst states to use from the privacy perspective. We consider the GHZ state because it has been proven before~\cite{eldredge2018} that it is among the set of best states to use from the sensing perspective. We first characterize the least private pure states for a single honest qubit. Then, we extend the result to an honest qubit entangled with an arbitrary adversarial register. This shows that the GHZ state is among the family of least-private states.

We quantify privacy leakage using the quantum hockey-stick divergence $E_\gamma$, where $\gamma = e^\varepsilon$, as defined in~\cref{def:quantum-hockey-stick-divergence}. For a fixed pair of candidate parameter values $\theta$ and $\theta'$, we define the unitary evolution operators
\begin{equation}
    U_\theta = e^{-i\theta t\sigma^z/2},\quad U_{\theta'} = e^{-i\theta't\sigma^z/2}.
\end{equation}
The quantum hockey-stick divergence quantity that measures the distinguishability of states is $E_\gamma(\rho_\theta||\rho_{\theta'})$, where $\rho_\theta$ and $\rho_{\theta'}$ are the post-evolution states we obtain after applying $U_\theta$ or $U_{\theta'}$ to the honest qubit's initial state. We define the quantity $\delta_\theta \coloneqq t(\theta-\theta')$ for convenience. If $\delta_\theta \equiv 0\pmod{2\pi}$, then the two evolutions are identical up to a global phase, and so no state leaks any information about which parameter was used. Thus, in our analysis below, we restrict to cases where $\delta_\theta \not\equiv 0\pmod{2\pi}$.

\begin{lemma}[Least private single-qubit pure state]\label{lemma:least-private-single-qubit}
    For a single qubit, among all pure input states, the states that maximize $E_\gamma(\rho_\theta||\rho_{\theta'})$ are states of the form
    \begin{equation}
        \ket{+_\phi} = \frac{1}{\sqrt{2}}\lp \ket{0} + e^{i\phi}\ket{1} \rp,\label{eqn:single-qubit-pure-state}
    \end{equation}
    where $\phi \in [0,2\pi)$.
\end{lemma}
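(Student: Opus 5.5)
The plan is to reduce the hockey-stick divergence between two pure states to a monotone function of their fidelity, and then minimize the fidelity over the input state. First I parametrize an arbitrary single-qubit pure state, up to a global phase, as
\begin{equation}
    \ket{\psi} = \cos(a/2)\ket{0} + e^{i\phi}\sin(a/2)\ket{1}, \quad a \in [0,\pi],\ \phi \in [0,2\pi).
\end{equation}
Then $\rho_\theta = U_\theta\ketbra{\psi}{\psi}U_\theta^\dagger$ and $\rho_{\theta'} = U_{\theta'}\ketbra{\psi}{\psi}U_{\theta'}^\dagger$ are both pure. I write $F \coloneqq \abs{\mel{\psi}{U_{\theta'}^\dagger U_\theta}{\psi}}^2$ for their fidelity.

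The first step is a general computation of $E_\gamma(\rho\|\rho')$ for pure states $\rho,\rho'$. The operator $\rho - \gamma\rho'$ is supported on the span of the two state vectors, so it has at most two nonzero eigenvalues. Its trace is $1-\gamma$. Computing its $2\times 2$ determinant in an orthonormal basis of that span gives $-\gamma(1-F)$. Whenever $F<1$ the two eigenvalues therefore have opposite signs, and by~\cref{def:quantum-hockey-stick-divergence}
\begin{equation}
    E_\gamma(\rho\|\rho') = \lambda_+ = \frac{(1-\gamma) + \sqrt{(1-\gamma)^2 + 4\gamma(1-F)}}{2}.
\end{equation}
For fixed $\gamma = e^\varepsilon$ this is strictly decreasing in $F$. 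Hence maximizing $E_\gamma$ over input states is equivalent to minimizing $F$. This step is the main point needing care. One must check that the determinant really is $-\gamma(1-F)$, and handle the degenerate case $F=1$ separately; in that case $E_\gamma = \max(0,1-\gamma)$, which is the minimum possible value anyway.

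Next I compute $F$ directly. Since $U_{\theta'}^\dagger U_\theta = e^{-i\delta_\theta\sigma^z/2}$, the overlap is
\begin{equation}
    \cos^2(a/2)e^{-i\delta_\theta/2} + \sin^2(a/2)e^{i\delta_\theta/2}.
\end{equation}
Its squared modulus is
\begin{equation}
    F = \cos^2(\delta_\theta/2) + \cos^2 a\,\sin^2(\delta_\theta/2) = 1 - \sin^2 a\,\sin^2(\delta_\theta/2).
\end{equation}
The phase $\phi$ drops out entirely. This is expected, because $\phi$ can be absorbed by a $\sigma^z$ rotation, which commutes with the evolution.

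Finally, the assumption $\delta_\theta \not\equiv 0 \pmod{2\pi}$ gives $\sin^2(\delta_\theta/2) > 0$. So $F$ is uniquely minimized over $a\in[0,\pi]$ at $\sin^2 a = 1$, that is $a = \pi/2$, for every $\phi$. These are exactly the states $\ket{+_\phi}$ of~\cref{eqn:single-qubit-pure-state}. By the monotonicity established in the first step, they are precisely the maximizers of $E_\gamma(\rho_\theta\|\rho_{\theta'})$ among pure input states, which is the claim.
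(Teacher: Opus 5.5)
Your proposal is correct and follows essentially the same route as the paper: write $E_\gamma$ for two pure states as the single positive eigenvalue $\lambda_+$, note that it is decreasing in the fidelity, and minimize $\abs{f}^2 = 1 - 4p(1-p)\sin^2(\delta_\theta/2)$ at $p = 1/2$. Your $p = \cos^2(a/2)$ parametrization is equivalent, and your trace/determinant derivation of $\lambda_+$ and separate treatment of the $F=1$ case fill in steps that the paper only asserts.
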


\begin{proof}
    Write an arbitrary initial single-qubit pure state as
    \begin{equation}
        \ket{\psi_0} = \sqrt{p}\ket{0} + e^{i\phi}\sqrt{1-p}\ket{1},\label{eqn:least-private-single-qubit-ansatz}
    \end{equation}
    where $p \in [0,1]$. After applying the two possible evolution operators $U_\theta$ or $U_{\theta'}$, the possible post-evolution states are
    \begin{equation}
        \ket{\psi_\theta} = U_\theta\ket{\psi_0},\quad \ket{\psi_{\theta'}} = U_{\theta'}\ket{\psi_0}.
    \end{equation}
    The overlap between these states is then
    \begin{equation}
        f \coloneqq \bra{\psi_\theta}\ket{\psi_{\theta'}} = pe^{i\delta_\theta/2} + (1-p)e^{-i\delta_\theta/2}.
    \end{equation}
    Taking the modulus squared, we have
    \begin{equation}
        \abs{f}^2 = 1 - 4p(1-p)\sin^2{\lp \frac{\delta_\theta}{2} \rp}.\label{eqn:f-squared}
    \end{equation}
    For two pure states $\rho_\theta = \ketbra{\psi_\theta}{\psi_\theta}$ and $\rho_{\theta'} = \ketbra{\psi_{\theta'}}{\psi_{\theta'}}$, the only possible positive eigenvalue of the quantity $\rho_\theta - \gamma\rho_{\theta'}$ is
    \begin{equation}
        \lambda_+ = \frac{1-\gamma}{2} + \frac{1}{2}\sqrt{(1-\gamma)^2 + 4\gamma(1-\abs{f}^2)}.
    \end{equation}
    Thus, $E_\gamma(\rho_\theta||\rho_{\theta'}) = \lambda_+$, and so maximizing the hockey-stick divergence is equivalent to minimizing $\abs{f}^2$ in~\cref{eqn:f-squared}, which in turn is equivalent to maximizing
    \begin{equation}
        4p(1-p)\sin^2{\lp \frac{\delta_\theta}{2} \rp}.
    \end{equation}
    Since $\delta_\theta$ is fixed, this quantity is clearly maximized when $p=1/2$. The phase $\phi$ does not affect the overlap, so substituting this value for $p$ into~\cref{eqn:least-private-single-qubit-ansatz}, we see that the least private pure states are exactly of the form
    \begin{equation}
        \ket{+_\phi} = \frac{1}{\sqrt{2}}\lp \ket{0} + e^{i\phi}\ket{1} \rp,
    \end{equation}
    as claimed in the statement of the lemma.
\end{proof}

Note that~\cref{lemma:least-private-single-qubit} applies to pure states. It turns out that mixed single-qubit states cannot leak more information. In particular, we have:

\begin{corollary}[Mixed single-qubit states cannot leak more information]\label{cor:mixed-states}
    No mixed single-qubit input state yields a larger hockey-stick divergence than the pure states of the form given by~\cref{eqn:single-qubit-pure-state}.
\end{corollary}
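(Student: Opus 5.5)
The plan is to use joint convexity of the quantum hockey-stick divergence. For fixed $\gamma \geq 1$, the map $(\rho,\sigma)\mapsto E_\gamma(\rho\|\sigma)=\Tr[(\rho-\gamma\sigma)_+]$ is jointly convex. To see this, I would write it as a supremum of linear functionals,
\begin{equation}
    E_\gamma(\rho\|\sigma) = \sup_{0\leq P\leq \identity}\Tr[P(\rho-\gamma\sigma)],
\end{equation}
which follows because the supremum is attained at the projector onto the positive eigenspace of $\rho-\gamma\sigma$. A pointwise supremum of functions that are jointly linear in $(\rho,\sigma)$ is jointly convex. This is also the same operational characterization that appears in \cref{thm:hockey-stick-div}.

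Next, I would take an arbitrary mixed input state and write its spectral decomposition $\rho_0=\sum_k q_k\ketbra{\psi_k}{\psi_k}$, with $q_k\geq 0$ and $\sum_k q_k=1$. The evolutions $U_\theta$ and $U_{\theta'}$ are unitary conjugations, which are linear. Hence $\rho_\theta=\sum_k q_k U_\theta\ketbra{\psi_k}{\psi_k}U_\theta^\dagger$ and $\rho_{\theta'}=\sum_k q_k U_{\theta'}\ketbra{\psi_k}{\psi_k}U_{\theta'}^\dagger$ are mixtures with the same weights $q_k$ over pure-input outputs. Joint convexity then gives
\begin{equation}
    E_\gamma(\rho_\theta\|\rho_{\theta'}) \leq \sum_k q_k\, E_\gamma\lp U_\theta\psi_kU_\theta^\dagger \,\big\|\, U_{\theta'}\psi_kU_{\theta'}^\dagger\rp \leq \max_k E_\gamma(\cdots),
\end{equation}
where $\psi_k \coloneqq \ketbra{\psi_k}{\psi_k}$.

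Each term on the right is the divergence for a pure input state. By \cref{lemma:least-private-single-qubit}, each such term is bounded by the value attained at $\ket{+_\phi}$, namely $\lambda_+$ evaluated at $\abs{f}^2=\cos^2(\delta_\theta/2)$. Chaining these inequalities shows that no mixed state exceeds the pure-state optimum, which is the claim.

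The step most likely to need care is justifying joint convexity cleanly, since the rest is bookkeeping. If the supremum formula seems insufficiently justified, I would prove it directly: for a Hermitian operator $X$ with Jordan decomposition $X=X_+-X_-$, we have $\Tr[PX]\leq\Tr[PX_+]\leq\Tr X_+$ for any $0\leq P\leq\identity$, with equality at the projector onto the support of $X_+$. I would also note that $\gamma=e^\varepsilon\geq 1$ is not actually needed for convexity; the supremum representation works for any $\gamma>0$.
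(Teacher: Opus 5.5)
Your proposal is correct and matches the paper's proof. Both decompose the mixed input into pure states, use linearity of the unitary conjugations and (joint) convexity of $E_\gamma$, and bound each pure-state term by \cref{lemma:least-private-single-qubit}. The one difference is that you prove joint convexity yourself via $E_\gamma(\rho\|\sigma)=\sup_{0\leq P\leq \identity}\Tr[P(\rho-\gamma\sigma)]$, while the paper just cites it from Ref.~\cite{hirche2023}.
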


\begin{proof}
    Let
    \begin{equation}
        \rho_0 = \defsum{i}{}{p_i\ketbra{\psi_i}{\psi_i}}
    \end{equation}
    be an arbitrary mixed input state. Then, the two possible post-evolution states under parameters $\theta$ and $\theta'$ are
    \begin{equation}
        \rho_\theta = \defsum{i}{}{p_iU_\theta\ketbra{\psi_i}{\psi_i}U_\theta^\dagger},\quad \rho_{\theta'} = \defsum{i}{}{p_iU_{\theta'}\ketbra{\psi_i}{\psi_i}U_{\theta'}^\dagger}.
    \end{equation}
    Using the convexity of the hockey-stick divergence~\cite{hirche2023}, we have
    \begin{equation}
        E_\gamma(\rho_\theta||\rho_{\theta'}) \leq \defsum{i}{}{p_iE_\gamma\lp U_\theta\ketbra{\psi_i}{\psi_i}U_\theta^\dagger||U_{\theta'}\ketbra{\psi_i}{\psi_i}U_{\theta'}^\dagger \rp}.
    \end{equation}
    Then, because each pure-state term is bounded above by the value we obtained in the proof of~\cref{lemma:least-private-single-qubit}, the mixture is bounded above by this value.
\end{proof}

We now consider the multi-qubit case, where the first qubit is given to the honest party and the remaining qubits are given to the adversary. We find that the GHZ state is among the \emph{set} of least private states, though it is not the only one. As we show below, the other states in this family of states give the same hockey-stick divergence, so it suffices to consider the GHZ state used in the sensing protocol as providing the least privacy.

\begin{lemma}[Least private pure states with an adversarial system]\label{lemma:least-private-adversary-state}
    Let $H$ denote the honest qubit and $A$ denote an arbitrary adversarial system. Suppose only the honest qubit undergoes the parameter-dependent evolution such that
    \begin{equation}
        \ket{\Psi_\theta} = (U_\theta \otimes I_A)\ket{\Psi_0},\quad \ket{\Psi_{\theta'}} = (U_{\theta'} \otimes I_A)\ket{\Psi_0}.
    \end{equation}
    Among all pure states $\ket{\Psi_0} \in \mathcal{H}_H \otimes \mathcal{H}_A$, the states that maximize the hockey-stick divergence $E_\gamma(\ketbra{\Psi_\theta}{\Psi_\theta}||\ketbra{\Psi_{\theta'}}{\Psi_{\theta'}})$ are of the form
    \begin{equation}
        \ket{\Psi_0} = \frac{1}{\sqrt{2}}\lp \ket{0}_H\ket{u}_A + \ket{1}_H\ket{v}_A \rp,
    \end{equation}
    where $\ket{u}_A$ and $\ket{v}_A$ are arbitrary normalized states of the adversarial system.
\end{lemma}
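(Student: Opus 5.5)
We reduce the bipartite case to the single-qubit computation in the proof of \cref{lemma:least-private-single-qubit}. Write a general pure state by its decomposition along the computational basis of the honest qubit:
\begin{equation}
    \ket{\Psi_0} = \sqrt{p}\ket{0}_H\ket{u}_A + \sqrt{1-p}\ket{1}_H\ket{v}_A,
\end{equation}
with $p \in [0,1]$ and $\ket{u}_A,\ket{v}_A$ normalized but otherwise arbitrary (not necessarily orthogonal). Any relative phase can be absorbed into $\ket{v}_A$. Every pure state of $\mathcal{H}_H\otimes\mathcal{H}_A$ has this form, so the optimization runs over $p$, $\ket{u}_A$, and $\ket{v}_A$.

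Since $U_\theta \otimes I_A$ acts diagonally on the honest qubit, the evolved states are
\begin{equation}
    \ket{\Psi_\theta} = \sqrt{p}\,e^{-i\theta t/2}\ket{0}\ket{u} + \sqrt{1-p}\,e^{i\theta t/2}\ket{1}\ket{v}.
\end{equation}
By orthogonality of $\ket{0}_H,\ket{1}_H$, the overlap is
\begin{equation}
    f = \bra{\Psi_\theta}\ket{\Psi_{\theta'}} = p e^{i\delta_\theta/2} + (1-p)e^{-i\delta_\theta/2},
\end{equation}
up to the sign convention of $\delta_\theta$. This is exactly the single-qubit overlap: the inner products $\bra{u}\ket{u}=\bra{v}\ket{v}=1$ appear, but the cross term $\bra{u}\ket{v}$ never does. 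Hence $\abs{f}^2 = 1-4p(1-p)\sin^2(\delta_\theta/2)$, independent of $\ket{u}_A$ and $\ket{v}_A$.

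The two evolved states are pure, so as in the proof of \cref{lemma:least-private-single-qubit}, $\rho_\theta-\gamma\rho_{\theta'}$ is supported on a two-dimensional subspace. It has at most one positive eigenvalue,
\begin{equation}
    \lambda_+ = \frac{1-\gamma}{2}+\frac12\sqrt{(1-\gamma)^2+4\gamma(1-\abs{f}^2)},
\end{equation}
which is strictly decreasing in $\abs{f}^2$ for $\gamma>0$. We should check that this formula holds in any dimension. It does, because the spectrum is determined by the Gram matrix of the two vectors. So maximizing $E_\gamma$ is equivalent to minimizing $\abs{f}^2$, i.e., maximizing $p(1-p)$. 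Because $\delta_\theta\not\equiv 0 \pmod{2\pi}$, the factor $\sin^2(\delta_\theta/2)>0$, so the maximizer is exactly $p=1/2$, and $\ket{u}_A$ and $\ket{v}_A$ remain free. This gives the claimed family.

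The main subtlety is being careful that every state is captured. The decomposition with possibly non-orthogonal $\ket{u},\ket{v}$ is fully general, which covers product states ($\ket{u}=\ket{v}$) as well as entangled ones. We should also note that the resulting maximum equals the single-qubit value of \cref{lemma:least-private-single-qubit}, so adversarial entanglement gives no extra leakage. The GHZ state corresponds to $\ket{u}=\ket{0}^{\otimes(n-1)}$ and $\ket{v}=\ket{1}^{\otimes(n-1)}$, so it lies in this family. One could also mention the mixed-state extension via convexity, as in \cref{cor:mixed-states}.
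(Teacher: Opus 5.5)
Your proposal is correct and follows essentially the same route as the paper: you decompose $\ket{\Psi_0}$ along the honest qubit's computational basis, observe that the overlap $f$ depends only on the weights $p$ and $1-p$ (the cross term $\braket{u}{v}$ never appears), and reduce to minimizing $\abs{f}^2$ via the rank-two eigenvalue formula, which forces $p=1/2$. The differences are cosmetic: you use normalized $\ket{u},\ket{v}$ with weights $\sqrt{p},\sqrt{1-p}$ where the paper uses unnormalized $\ket{a},\ket{b}$, and your explicit checks (the $\lambda_+$ formula holding in any dimension, $\lambda_+$ strictly decreasing in $\abs{f}^2$, and $\sin^2(\delta_\theta/2)>0$) fill in steps the paper leaves implicit.
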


\begin{proof}
    We start by writing an arbitrary pure state as
    \begin{equation}
        \ket{\Psi_0} = \ket{0}_H\ket{a}_A + \ket{1}_H\ket{b}_A,
    \end{equation}
    where $\ket{a}_A$ and $\ket{b}_A$ are arbitrary (potentially unnormalized) states in the adversary's register and where
    \begin{equation}
        \braket{a} + \braket{b} = 1.
    \end{equation}
    If we denote by
    \begin{equation}
        p_0 = \braket{a},\quad p_1 = \braket{b} = 1-p_0,
    \end{equation}
    then the overlap between the post-evolution states is given by
    \begin{equation}
        f = \braket{\Psi_\theta}{\Psi_{\theta'}} = p_0e^{i\delta_\theta/2} + p_1e^{-i\delta_\theta/2}.
    \end{equation}
    Taking the modulus squared, we find
    \begin{equation}
        \abs{f}^2 = 1 - 4p_0p_1\sin^2{\lp \frac{\delta_\theta}{2} \rp}.
    \end{equation}
    Just as we saw in the proof for~\cref{lemma:least-private-single-qubit}, the hockey-stick divergence is maximized when $\abs{f}^2$ is minimized, which happens when $p_0 = p_1 = 1/2$. Thus, we take
    \begin{equation}
        \ket{a}_A = \frac{1}{\sqrt{2}}\ket{u}_A,\quad \ket{b}_A = \frac{1}{\sqrt{2}}\ket{v}_A,
    \end{equation}
    where $\ket{u}_A$ and $\ket{v}_A$ are now normalized states. Thus, the least private pure states are exactly
    \begin{equation}
        \ket{\Psi_0} = \frac{1}{\sqrt{2}}\lp \ket{0}_H\ket{u}_A + \ket{1}_H\ket{v}_A \rp,
    \end{equation}
    as claimed in the statement of the lemma.
\end{proof}

If we take $\ket{u}_A = \ket{0}^{\otimes(n-1)}$ and $\ket{v}_A = e^{i\phi}\ket{1}^{\otimes(n-1)}$, then the initial state becomes
\begin{equation}
    \ket{\Psi_0} = \frac{1}{\sqrt{2}}\lp \ket{0}^{\otimes n} + e^{i\phi}\ket{1}^{\otimes n}\rp.
\end{equation}
For arbitrary $\phi \in [0,2\pi)$, this is exactly the family of GHZ states, up to a relative phase; taking $\phi=0$ gives the usual GHZ state. Thus, this GHZ family is least private under the no-coupling attack, but note that this set of states is not unique. Taking $\ket{u}_A = \ket{v}_A = \ket{\chi}_A$, we have the product state
\begin{equation}
    \ket{\Psi_0} = \frac{1}{\sqrt{2}}\lp \ket{0}_H + e^{i\phi}\ket{1}_H \rp \otimes \ket{\chi}_A,
\end{equation}
which leaks the same amount of information about the honest party's parameter as the family of entangled states. More generally, for any adversarial mixed state $\sigma_A$, the mixed state
\begin{equation}
    \ketbra{+_\phi}_H \otimes \sigma_A
\end{equation}
achieves the same maximal leakage of information.

Thus, we have shown that the GHZ state is among the set of states that is worst for privacy. The same property that makes it optimal for sensing, namely that it has maximal coherence between the extremal eigenspaces of the generator of field evolution, also makes it the least private. We also showed that the set of states that are worst for privacy is larger than the set of states that are optimal for sensing, as evidenced by the inclusion of both entangled and product states.

Finally, we show that, for the average function and Hamiltonian considered in this work, the set of pure states that are optimal for sensing are of the form $\frac{1}{\sqrt{2}}(\ket{0}^{\otimes n} + e^{i\phi}\ket{1}^{\otimes n})$, and thus are contained in the family of least private states characterized above. This result is a consequence of the work done in Ref.~\cite{eldredge2018}, but we include it here for completeness.

\begin{lemma}[Set of optimal states for sensing average function]\label{lemma:optimal-sensing-states}
    Consider the $n$-qubit Hamiltonian $H = \frac{1}{2}\defsum{i=1}{n}{\theta_i\sigma_i^z}$ and the average function $q(\boldsymbol{\theta}) = \frac{1}{n}\defsum{i=1}{n}{\theta_i}$. Among all pure $n$-qubit states, those that maximize the quantum Fisher information for estimating $q(\boldsymbol{\theta})$, and thus are optimal for sensing, are of the form
    \begin{equation}
        \ket{\psi_0} = \frac{1}{\sqrt{2}}\lp \ket{0}^{\otimes n} + e^{i\phi}\ket{1}^{\otimes n} \rp,\quad \phi \in [0,2\pi).
    \end{equation}
\end{lemma}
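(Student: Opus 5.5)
The plan is to reduce the problem to maximizing the variance of a single generator. The average $q(\boldsymbol{\theta})$ enters the evolution only through a collective phase. We reparametrize $\boldsymbol{\theta}$ so that $q$ is one coordinate and the remaining coordinates are orthogonal combinations (e.g.\ $\theta_i = q + \xi_i$ with $\sum_i \xi_i = 0$). Evaluated at the symmetric point, the generator of translations in $q$ is then
\begin{equation}
    G = \frac{t}{2}\defsum{i=1}{n}{\sigma_i^z},
\end{equation}
because $H$ commutes with itself at all times and the $\sigma_i^z$ commute. For a pure state the quantum Fisher information for $q$ is $F_Q = 4\,\mathrm{Var}_{\psi_0}(G)$. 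Following Ref.~\cite{eldredge2018}, we take the single-parameter quantum Fisher information of $q$ with the nuisance parameters held at their true values as the figure of merit. We will note that this coincides with the multiparameter bound for the average function when the nuisance directions are treated as known. The lemma then reduces to characterizing the pure states that maximize $\mathrm{Var}(J_z)$, where $J_z = \frac{1}{2}\sum_i \sigma_i^z$.

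The key step is the elementary bound: for any observable with spectrum contained in $[\lambda_{\min},\lambda_{\max}]$, we have $\mathrm{Var}(A) \le (\lambda_{\max}-\lambda_{\min})^2/4$. We prove it from
\begin{equation}
    \mathrm{Var}(A) \le \evsymb{(A-c)^2}
\end{equation}
with $c$ the midpoint, together with $(A-c)^2 \le (\lambda_{\max}-\lambda_{\min})^2/4$ as operators. For $J_z$, the extreme eigenvalues are $\pm n/2$, each with a one-dimensional eigenspace spanned by $\ket{0}^{\otimes n}$ and $\ket{1}^{\otimes n}$ respectively. This gives $\mathrm{Var}(J_z) \le n^2/4$ and hence $F_Q \le n^2 t^2$. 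This matches the inverse of~\cref{eqn:standard-heisenberg-scaling}, so the GHZ state attains the bound.

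Next comes the equality case, which pins down the form of the optimal states. Equality in the first inequality requires $\evsymb{A} = c$. Equality in the second requires $\psi_0$ to lie in the eigenspace where $(A-c)^2$ attains its maximum, i.e.\ in $\mathrm{span}\{\ket{0}^{\otimes n},\ket{1}^{\otimes n}\}$. So $\ket{\psi_0} = a\ket{0}^{\otimes n} + b\ket{1}^{\otimes n}$. Imposing $\evsymb{J_z} = 0$ gives $\abs{a}^2 = \abs{b}^2 = 1/2$, and modding out the global phase leaves exactly $\frac{1}{\sqrt{2}}(\ket{0}^{\otimes n} + e^{i\phi}\ket{1}^{\otimes n})$. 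Conversely, every such state has variance $n^2/4$.

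The main obstacle I anticipate is justifying the reduction in the first paragraph rigorously. The full problem is multiparameter, so "maximizing the quantum Fisher information for $q$" must be made precise. I would handle this by citing the framework of Ref.~\cite{eldredge2018}, which shows that for linear functions the optimal achievable variance equals the inverse of the maximal variance of $\sum_i \alpha_i \sigma_i^z/2$ under the constraint imposed by the coefficient vector. This holds even when optimizing over nuisance directions, and it yields exactly the generator $G$ above for $\alpha_i = 1/n$. It will be important to keep the normalization (the factor of $n$ relating $q$ to $\sum_i \theta_i$) consistent.
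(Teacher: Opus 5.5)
Your proposal is correct and follows essentially the same route as the paper. Both take the symmetric direction in parameter space to get $F_Q = 4t^2\,\mathrm{Var}_{\psi_0}\bigl(\tfrac{1}{2}\sum_i\sigma_i^z\bigr)$, bound the variance by $(\lambda_{\max}-\lambda_{\min})^2/4 = n^2/4$, and read off the relative-phase GHZ states from the equality case. You justify that equality case explicitly (mean at the midpoint, support on the extremal eigenspaces), where the paper only asserts it.

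One correction to your last paragraph: optimizing over nuisance directions does not single out $G$. The direction $v = n e_1$ also satisfies $\sum_i v_i = n$ and gives the same optimal value $n^2t^2$, but with generator $\tfrac{nt}{2}\sigma_1^z$, which is maximized by product states such as $\ket{+}\otimes\ket{\chi}$. Uniqueness is better justified directly: the Cram\'{e}r--Rao bound holds along every direction $v$ with $\sum_i v_i = n$, the symmetric direction $v=(1,\ldots,1)$ already forces the GHZ form, and GHZ states attain $n^2t^2$ along every such $v$.
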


\begin{proof}
    Given the average function $q(\boldsymbol{\theta}) = \frac{1}{n}\defsum{i=1}{n}{\theta_i}$, the associated generator is
    \begin{equation}
        G \coloneqq \frac{1}{2}\defsum{i=1}{n}{\sigma_i^z}.
    \end{equation}
    To maximize the sensitivity to $q$, we consider the family of states
    \begin{equation}
        \ket{\psi_q} = e^{-itqG}\ket{\psi_0},
    \end{equation}
    where the quantum Fisher information can be calculated as
    \begin{align}
        F_Q &= 4\lp \braket{\partial_q\psi_q} - \abs{\braket{\psi_q}{\partial_q\psi_q}}^2 \rp\\
        &= 4\lp t^2\mel{\psi_q}{G^2}{\psi_q} - t^2(\mel{\psi_q}{G}{\psi_q})^2 \rp\\
        &= 4t^2\lp \langle G^2 \rangle_{\psi_q} - \langle G \rangle^2_{\psi_q}\rp\\
        &= 4t^2\mathrm{Var}_{\psi_q}[G].
    \end{align}
    $G$ commutes with the evolution operator $e^{-itqG}$ and so its variance is independent of $q$, so we can write
    \begin{equation}
        F_Q = 4t^2\mathrm{Var}_{\psi_0}[G].
    \end{equation}
    As such, maximizing the quantum Fisher information comes down to maximizing the variance of $G$ with respect to the input state $\ket{\psi_0}$. $G$ has eigenvalues $-\frac{n}{2}, -\frac{n}{2}+1, \ldots, \frac{n}{2}-1, \frac{n}{2}$, and so the maximum and minimum eigenvalues are $\lambda_{\max} = \frac{n}{2}$ associated with eigenvector $\ket{0}^{\otimes n}$ and $\lambda_{\min} = -\frac{n}{2}$ associated with eigenvector $\ket{1}^{\otimes n}$, respectively. The variance of $G$, a Hermitian operator, is bounded by
    \begin{equation}
        \Var{G} \leq \frac{(\lambda_{\max} - \lambda_{\min})^2}{4} = \frac{n^2}{4}.
    \end{equation}
    This bound is saturated (i.e., we achieve the equality) only if the state has equally-weighted support on the maximum- and minimum-eigenspaces of $G$, which is achieved with pure states of the form
    \begin{equation}
        \ket{\psi_0} = \frac{1}{\sqrt{2}}\lp \ket{0}^{\otimes n} + e^{i\phi}\ket{1}^{\otimes n} \rp,
    \end{equation}
    where we allow for a relative phase $\phi \in [0,2\pi)$. Thus, we have
    \begin{equation}
        \mathrm{Var}_{\psi_0}[G] = \frac{n^2}{4}
    \end{equation}
    and therefore
    \begin{equation}
        F_Q = 4t^2\mathrm{Var}_{\psi_0}[G] = n^2t^2.
    \end{equation}
    Using the single-shot quantum Cram\'{e}r-Rao bound, we have
    \begin{equation}
        \Var{Q} \geq \frac{1}{F_Q} = \frac{1}{n^2t^2}.
    \end{equation}
    For $M$ shots, we pick up a factor of $M$ in the denominator. Thus, the relative-phase GHZ states are exactly the pure states that maximize the quantum Fisher information and are thus optimal for sensing the average function under this Hamiltonian, as claimed in the lemma.
\end{proof}

\section{Further analysis of noisy Hamiltonian protocol}\label{app-sec:noisy-hamiltonian-analysis}
In this appendix, we provide further analysis of the noisy Hamiltonian protocol, which we presented in~\cref{subsec:noisy-hamiltonian-protocol}. We start by analyzing the privacy of the protocol using the quantum hockey-stick divergence in~\cref{app-subsec:hockey-stick-divergence}. Then, in~\cref{app-subsec:resampled-noise}, we show why we cannot have each node resample their noise with each shot of the sensing protocol. Finally, in~\cref{app-subsec:no-privacy-advantage-entanglement}, we show that if we choose $\alpha=1$ for the noisy Hamiltonian protocol, which drops the scaling of the mean-squared error down to the standard quantum limit, there is no privacy advantage in using entanglement, and so an unentangled protocol may as well be used. 

\subsection{Privacy analysis via the quantum hockey-stick divergence}\label{app-subsec:hockey-stick-divergence}
Recall from the statement of~\cref{thm:noisy-hamiltonian-optimality} that the noisy Hamiltonian protocol is $\varepsilon$-differentially private, that is, $\delta = 0$. In this subsection, we analyze privacy using the quantum hockey-stick divergence defined in~\cref{def:quantum-hockey-stick-divergence}. This allows us to tightly characterize the amount of noise required to achieve privacy, including in the pure case $\delta = 0$. The analysis in this subsection should be distinguished from the analysis that we did in~\cref{subsec:noisy-hamiltonian-protocol}, which gives us a generic local differential privacy guarantee that protects the honest party's information against arbitrary state preparation attacks. Here, instead, we condition on the sensing protocol being run using the GHZ state as specified by the protocol; that is, we assume that the GHZ state verification step has succeeded.

Given that local differential privacy protects individual data at the point of local output, in scenarios involving multiple parties, it suffices to consider the worst-case scenario in which we have a single honest node and $n-1$ adversarial nodes. The strongest attack consists of all of the adversarial nodes colluding to try to learn the honest node's parameter, so, without loss of generality, we combine all of the adversarial nodes into one adversary. We label the honest node as node 1 (i.e., $\theta_1$ is being measured by the honest node). In this worst-case adversarial setting, where adversarial nodes neither couple their sensors to their parameters nor do they apply any noise, the adversarial sensors are granted maximum access to the honest node's parameter. We then compute the distinguishability of the adversary's conditional quantum information as the honest parameter is changed from $\theta_1$ to $\theta_1'$, which gives us a sharper privacy bound than we were able to achieve in our generic analysis in~\cref{subsec:noisy-hamiltonian-protocol}, but it does not apply to arbitrary malicious input states such as product states and is thus more limited in scope.

We consider $(\varepsilon,\delta)$-differential privacy and show how to achieve $\delta = 0$ to yield perfect $\varepsilon$-differential privacy. In the setting of the strongest attack, where the adversarial nodes do not couple their sensors to their parameters but the honest party does couple, the quantum state only encodes information about the honest party's parameter. As such, after the honest party couples their sensor to their noisy field parameter and evolves their part of the shared quantum state while the adversaries' qubits are idle, we have the following final state, where we include the full, bit-by-bit learning protocol detailed in~\cref{subsec:sensor-network}:
\begin{align}
    \ket{\Tilde{\Psi}_f(t) | X_1 = \theta_1, Y_1 = \eta_1} &= \bigotimes_{j=1}^K \ket{\Tilde{\psi}_f(t_j) | X_1 = \theta_1, Y_1 = \eta_1}.\label{eqn:general-final-state}
\end{align}
Here, we make explicit the coupling of the state to the honest node's parameter (denoted by $X_1$) and the honest node's sampled noise (denoted by $Y_1$). Within the tensor product on the right-hand side of~\cref{eqn:general-final-state}, the state for each stage (i.e., each level of precision) is given as
\begin{align}
    \ket{\Tilde{\psi}_f(t_j) | X_1 = \theta_1, Y_1 = \eta_1} &= \lp\lp\Tilde{U}(t_j) \otimes I^{\otimes (n-1)}\rp\ket{\psi_0}\rp^{\otimes 2\nu_j}\\
    &= \lp\frac{1}{\sqrt{2}} \lp e^{-\frac{it_j}{2}(\theta_1+\eta_1)}\ket{0} \otimes \ket{0}^{\otimes (n-1)} + e^{\frac{it_j}{2}(\theta_1+\eta_1)} \ket{1} \otimes \ket{1}^{\otimes (n-1)} \rp\rp^{\otimes 2\nu_j}\\
    &\begin{multlined}
        = \left(\frac{1}{\sqrt{2}} \ket{+} \otimes \frac{1}{\sqrt{2}} \lp e^{-\frac{it_j}{2}(\theta_1+\eta_1)} \ket{0}^{\otimes (n-1)} + e^{\frac{it_j}{2}(\theta_1+\eta_1)} \ket{1}^{\otimes (n-1)} \rp \right.\\
        +\left.\frac{1}{\sqrt{2}} \ket{-} \otimes \frac{1}{\sqrt{2}} \lp e^{-\frac{it_j}{2}(\theta_1+\eta_1)} \ket{0}^{\otimes (n-1)} - e^{\frac{it_j}{2}(\theta_1+\eta_1)} \ket{1}^{\otimes (n-1)} \rp \right)^{\otimes 2\nu_j}.
    \end{multlined}
    \label{eq:dishonest-post-measurement-state}
\end{align}
Here, recall from~\cref{subsec:sensor-network}, at each stage $j$ to learn the $j\numth$ bit of precision of the function, we assume $2\nu_j$ copies of the state, where $\nu_j = \lfloor x_j \rceil$, where $x_j$  is given in Refs.~\cite{belliardo2020,ehrenberg2023} by
\begin{align}
    x_j = \frac{3}{\log_2{C}}(K - j) + x_K,\ \forall j \in [K].
\end{align}
The constants $C$ and $x_K$ are taken as $C = 24.26\pi$ and $x_K = 1$, which is also shown in Ref.~\cite{belliardo2020}. It is important to note that this is a worst-case analysis. In other words, the scenario in which all other sensors act adversarially and attempt to extract information after the honest sensor broadcasts its output is the most challenging situation the honest sensor may encounter. This is because the only way other sensors can extract information from a quantum state is by a POVM. According to the data processing inequality~\cite{wilde2017}, the information content cannot be increased through any local physical operation. Consequently, it suffices to bound the distinguishability of the adversary's complete quantum information, as any later measurement is post-processing.

We analyze the information about $\theta_1$ that could be leaked from the post-measurement state that the adversary holds. After the honest party samples and applies its noise $\eta_1$, and then measures its qubit, the state of the system collapses to a conditional state based on the outcome, that is, $\ket{+}$ or $\ket{-}$: 
\begin{align} \label{eqn:collapsed-state}
    \ket{\phi_+(t_j)|\theta_1,\eta_1} &= \ket{+} \otimes \frac{1}{\sqrt{2}} \lp e^{-\frac{it_j}{2}(\theta_1+\eta_1)} \ket{0}^{\otimes (n-1)} + e^{\frac{it_j}{2}(\theta_1+\eta_1)} \ket{1}^{\otimes (n-1)} \rp \\
    \ket{\phi_-(t_j)|\theta_1,\eta_1} &= \ket{-} \otimes \frac{1}{\sqrt{2}} \lp e^{-\frac{it_j}{2}(\theta_1+\eta_1)} \ket{0}^{\otimes (n-1)} - e^{\frac{it_j}{2}(\theta_1+\eta_1)} \ket{1}^{\otimes (n-1)} \rp 
\end{align}
Repeating this process $2\nu_j$ times on independent copies of the GHZ state yields the following state held by the dishonest parties:
\begin{align} 
    \rho_{\theta_1,\eta_1}(t_j)
    &=
    \left(
    \ket{\phi_+(t_j)|\theta_1,\eta_1}
    \bra{\phi_+(t_j)|\theta_1,\eta_1}
    \right)^{\otimes (2\nu_j-a)}
    \otimes
    \left(
    \ket{\phi_-(t_j)|\theta_1,\eta_1}
    \bra{\phi_-(t_j)|\theta_1,\eta_1}
    \right)^{\otimes a}\\
    &=
    \left(
    \ket{\phi_+(t_j)|\theta_1,\eta_1}
    \bra{\phi_+(t_j)|\theta_1,\eta_1}
    \right)^{\otimes (2\nu_j-a)}
    \otimes
    \left[
    Z_1
    \ket{\phi_+(t_j)|\theta_1,\eta_1}
    \bra{\phi_+(t_j)|\theta_1,\eta_1}
    Z_1
    \right]^{\otimes a},
\end{align}
where $2\nu_j-a$ is the number of copies in the state
$\ket{\phi_+(t_j)|\theta_1,\eta_1}$, and $a$ is the number of copies in the state $\ket{\phi_-(t_j)|\theta_1,\eta_1}$. Moreover,
$Z_1 \coloneqq Z\otimes I^{\otimes(n-1)}$. Here, we highlight the $\eta_1$ dependence explicitly. After the honest party broadcasts its measurement result, all other sensors update their view of the system accordingly, where the new state can be found by integrating over $\eta_1$: 
\begin{align} \label{eqn:ugly-integral}
    \rho_{\theta_1}(t) &= \int_{-\infty}^{\infty}\bigotimes_{j=1}^{K}\rho_{\theta_1,\eta_1}(t_j)\Pr[Y_1=\eta_1]\, d\eta_1\\
    &= \int_{-\infty}^{\infty}\bigotimes_{j=1}^{K}\Bigg[\left(\ket{\phi_+(t_j)|\theta_1,\eta_1}\bra{\phi_+(t_j)|\theta_1,\eta_1}\right)^{\otimes (2\nu_j-a)} \nonumber\\
    &\qquad\qquad\qquad\qquad\otimes\left[Z_1\ket{\phi_+(t_j)|\theta_1,\eta_1}\bra{\phi_+(t_j)|\theta_1,\eta_1} Z_1\right]^{\otimes a}\Bigg]\frac{1}{2b}\exp\left(-\frac{|\eta_1|}{b}\right)\, d\eta_1,
\end{align}
where $\theta_1$ represents the honest party's parameter, incorporated into the state shared by all adversaries. By~\cref{thm:hockey-stick-div}, finding $\delta$ of this algorithm reduces to computing $\Tr[(\rho_{\theta_1} - \gamma\rho_{\theta'_1})^+]$, that is, the sum of the positive eigenvalues of $\rho_{\theta_1} - \gamma\rho_{\theta'_1}$. As the quantity in~\cref{eqn:ugly-integral} is rather complicated, we are only able to get a simple, analytic result for the case where $K = 1$ and $\nu = 1$. For $K > 1$ and $\nu > 1$, we make use of numerical techniques.

We start by analyzing the $K = 1$, $\nu = 1$ case. Recall from~\cref{subsec:sensor-network} that the full bit-by-bit learning protocol uses $2\nu_j$ samples at stage $j$, with $\nu_j$ samples measured in the $X$ basis and $\nu_j$ samples measured in the $Y$ basis as both quadratures are needed to resolve the phase in the full interval $[0,2\pi)$. However, for analytical convenience, we consider a single $X$-basis measurement, which is enough to produce one bit of information about the phase in units of $\pi$, rather than $2\pi$. This is sufficient for the privacy analysis in this subsection. Our goal is to determine $\delta$ and then which value of $\varepsilon$ we need to choose to achieve $\delta = 0$. To do so, we consider
\begin{align}
    \rho_{\theta_1}(t)-\gamma\rho_{\theta'_1}(t) &=
    \int_{-\infty}^{\infty} Z_1^a \left[ \ket{\phi_+(t)|\theta_1,\eta_1} \bra{\phi_+(t)|\theta_1,\eta_1} \right. \nonumber \\
    &\qquad\qquad \left. -\gamma \ket{\phi_+(t)|\theta'_1,\eta_1} \bra{\phi_+(t)|\theta'_1,\eta_1} \right] Z_1^a\frac{1}{2b} \exp\left(-\frac{|\eta_1|}{b}\right) \, d\eta_1.
\end{align}
where $a \in \{0,1\}$. This equation further simplifies to
\begin{equation}
    \begin{gathered}
        \rho_{\theta_1}(t)-\gamma\rho_{\theta'_1}(t) =\ket{\chi_a}\bra{\chi_a} \otimes
        \frac{1}{2} \Big[ (1-\gamma) \left( \ketbra{0}^{\otimes(n-1)} + \ketbra{1}^{\otimes(n-1)} \right)\\
        \qquad\qquad + \frac{1}{1+b^2t^2} \left( e^{-it\theta_1} - \gamma e^{-it\theta'_1} \right) \ketbra{0}{1}^{\otimes(n-1)}\\
        \qquad\qquad + \frac{1}{1+b^2t^2} \left(e^{it\theta_1} - \gamma e^{it\theta'_1} \right) \ketbra{1}{0}^{\otimes(n-1)} \Big],
    \end{gathered}
\end{equation}
where $\ket{\chi_a} \coloneqq Z^a\ket{+}$.
We find the eigenvalues of this state to be
\begin{align}
    \left\{\underbrace{0, \ldots, 0}_{2^{n}-2},\frac{1-\gamma}{2}\pm\frac{1}{2}\frac{1}{1+b^2t^2} \sqrt{1+\gamma^2-2\gamma\cos\left(t(\theta_1-\theta'_1)\right)} \right\}.
\end{align}
We note that the parameter $a$ does not affect the final result: different values for $a$ (i.e., the single-qubit outcome being $\ket{+}$ and $\ket{-}$) only involve additional unitary $Z_1$ gates. As these are unitary changes of basis, they do not change the eigenvalues. The only potentially positive eigenvalue is thus
\begin{align} \label{eq:K=1-delta}
    \lambda_+ &= \frac{1-\gamma}{2}+\frac{1}{2}\frac{1}{1+b^2t^2} \sqrt{1+\gamma^2-2\gamma\cos\left(t(\theta_1-\theta'_1)\right)}\\
    &\leq \frac{1-\gamma}{2}+\frac{1}{2}\frac{1}{1+b^2t^2} \sqrt{1+\gamma^2+2\gamma} \\
    &= \frac{1}{2}\left((1-\gamma)+\frac{1+\gamma}{1+b^2t^2}\right)\\
    &=\frac{2+(1-e^{\varepsilon})b^2t^2}{2(1+b^2t^2)}.
\end{align}
This quantity is less than or equal to $0$ when
\begin{equation}
    \varepsilon \geq \ln\lp 1 + \frac{2}{b^2t^2}\rp,
\end{equation}
or, equivalently, when
\begin{equation}
    b \geq \frac{1}{t}\sqrt{\frac{2}{e^{\varepsilon}-1}}.
\end{equation}
This is the condition for which $\delta = 0$.

Since in this setting we are resolving the final bit up to precision $1/t$, our sensing time is in the range $t \in [2^{K-1}, 2^K]$, where $K$ is the total number of bits of precision. As such, we map the time $t$ to the sensitivity with which we can estimate the parameters: $t \to \frac{1}{\theta_{\max} - \theta_{\min}}$. This gives 
    \begin{align}
        b \geq \left( \theta_{\max} - \theta_{\min} \right)\sqrt{\frac{2}{e^{\varepsilon}-1}},\label{eqn:k1-scale-parameter}
    \end{align}
which is in terms of the input parameters that the user knows at the start of the protocol. For small $\varepsilon \ll 1$ (i.e., high privacy), we can Taylor expand~\cref{eqn:k1-scale-parameter} to get
\begin{align}  
    b &\propto \sqrt{\frac{2}{e^{\varepsilon}-1}} \leq \sqrt{\frac{2}{\varepsilon}}.
\end{align} 
Recall, from~\cref{thm:general-sigma-bound}, we take the scale parameter as $b = \Delta/\varepsilon$, so using the hockey-stick analysis, we find a square-root improvement on the dependence of $\varepsilon$. In summary, the noisy Hamiltonian protocol achieves differential privacy with smaller noise injected than the direct application of the classical Laplace mechanism, thereby offering improved utility under the same privacy requirements.

Recalling the constraint on $b$ from the mean-squared error, $b = \bigOh{n^{-(\alpha-1)/2}}$, which we write as $b = cn^{-(\alpha-1)/2}$ for some constant $c$, to have $\delta = 0$, we must have
\begin{align}
    \varepsilon &\geq \ln{\lp 1 + \frac{2}{c^2t^2}n^{\alpha-1} \rp} = \bigTheta{(\alpha-1)\ln{n}}.
\end{align}
Thus, we see that for this special case of $K = 1$, analyzing the privacy using the quantum hockey-stick divergence, we can actually achieve $\varepsilon$ scaling logarithmically in $n$ for $1 < \alpha \leq 2$ and constant for $\alpha=1$. Compare this to the $\bigTheta{n^{(\alpha-1)/2}}$ scaling that we arrived at using the generic classical bound in~\cref{thm:general-sigma-bound}. Thus, for a fixed scale parameter $b$, which is given by the desired mean-squared error, the quantum hockey-stick analysis improves our bound on $\varepsilon$ by a substantial amount. We expect the $\bigOh{\ln{n}}$ scaling to hold for arbitrary constant $K$, since $K$ is independent of $n$.

If we take $K > 1$ and/or $\nu_j > 1$, we cannot obtain a simple, closed-form result as we did above. In this case, we resort to numerical analysis for small values of $K$ and $\nu_j$. We do this by analyzing the right-hand side of
\begin{equation}
    \delta \geq E_\gamma(\rho_{\theta_1}(t) || \rho_{\theta'_1}(t)) = \Tr[(\rho_{\theta_1}(t)- \gamma\rho_{\theta'_1}(t))^+]
\end{equation}
numerically. In the no-coupling attack, the adversarial register relevant to the honest phase is supported on the two-dimensional subspace spanned by $\ket{0}^{\otimes(n-1)}$ and $\ket{1}^{\otimes(n-1)}$, so for this calculation, the number of adversarial sensors does not change the nonzero spectrum, and it suffices to represent the adversarial register as an effective qubit, which we do by taking $n=2$. We choose specific values of $\theta_1$ and $\theta_1'$ to represent worst-case scenarios in our numerical experiments, that is, the values that are easiest to distinguish. We set $t = 1$ and take $t_j = 2^{j-1}\delta t$, for some unit step of time $\delta t$, following Refs.~\cite{belliardo2020,ehrenberg2023}. We calculate the corresponding privacy parameter $\delta$ (not to be confused with the time step $\delta t$) for scale parameter values $b = 0.2, 0.5, 1.0, 2.0$ and present the results in~\cref{fig:delta-vs-eps}.

\begin{figure}[tb]
    \centering
    \subfloat[]{\includegraphics[scale=0.35]{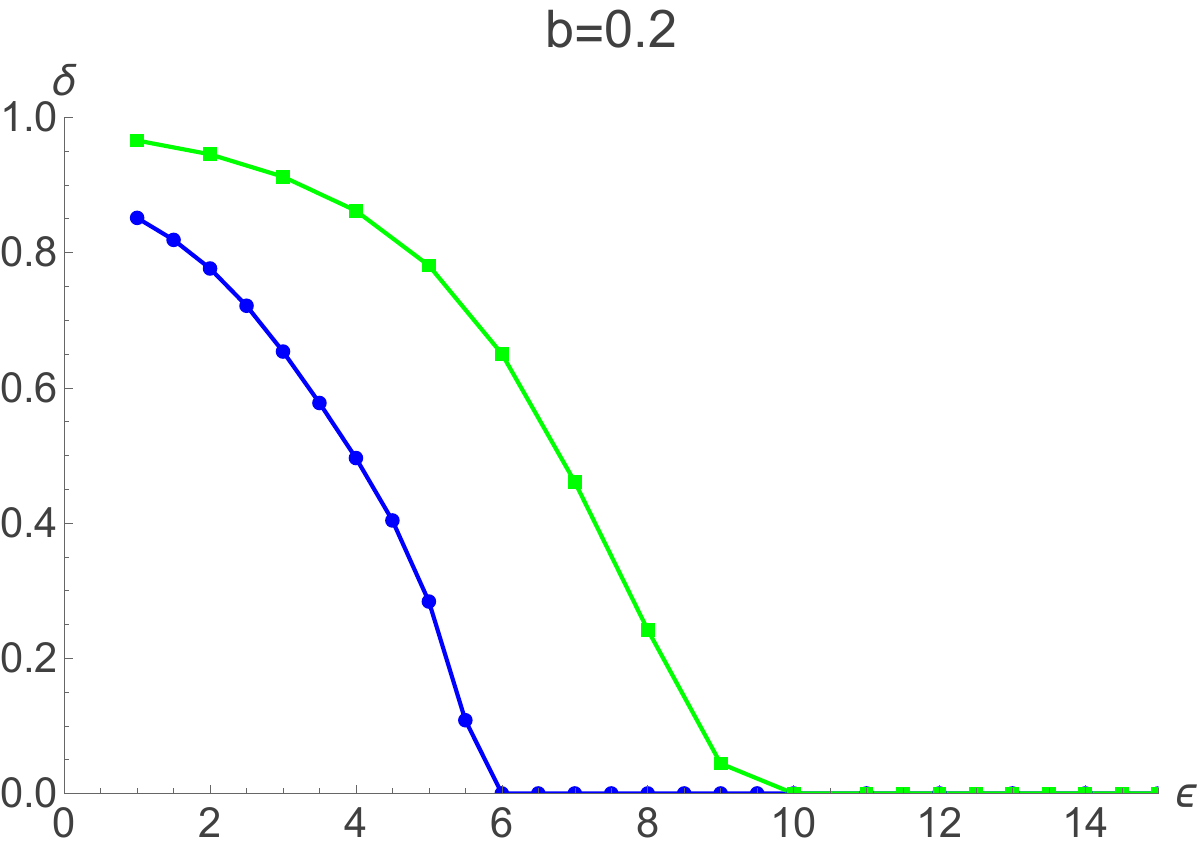}\label{subfig:0.2-sig}}
    \hfill
    \subfloat[]{\includegraphics[scale=0.35]{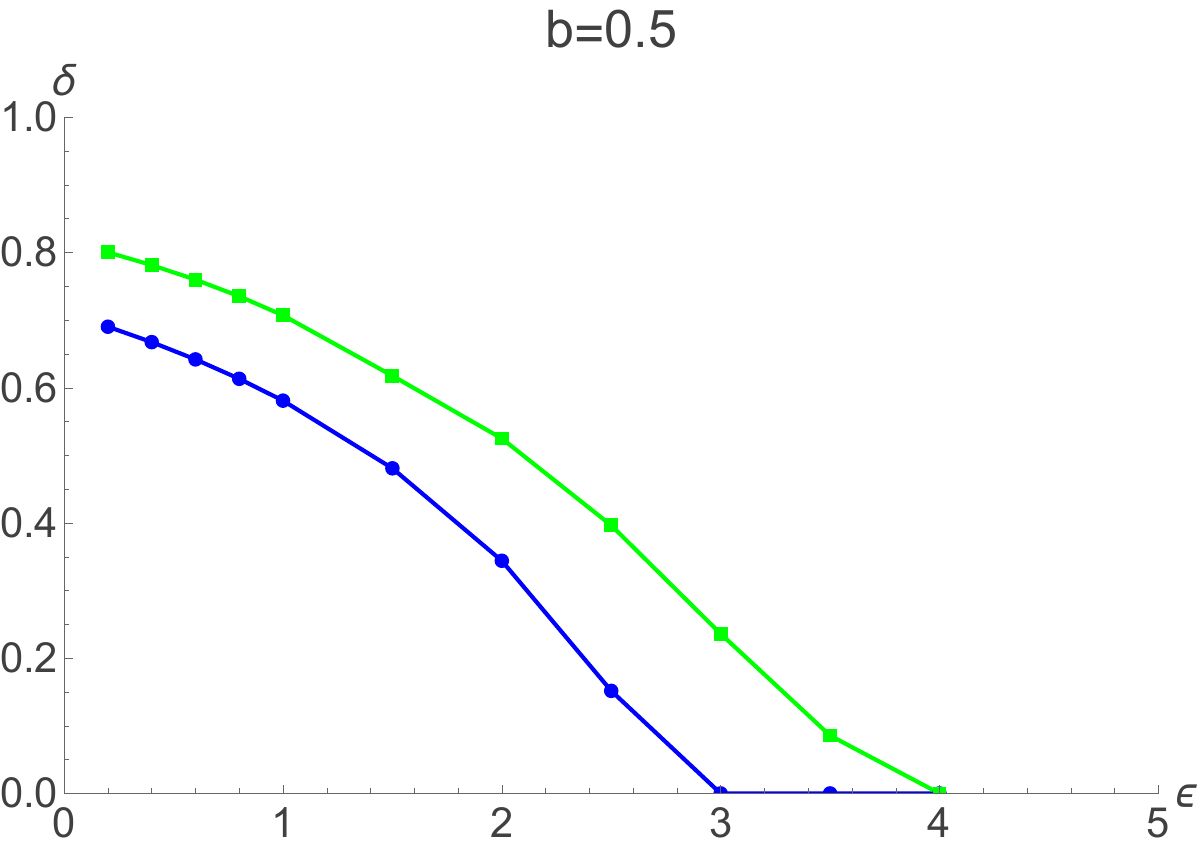}\label{subfig:0.5-sig}}
    \hfill
    \subfloat[]{\includegraphics[scale=0.35]{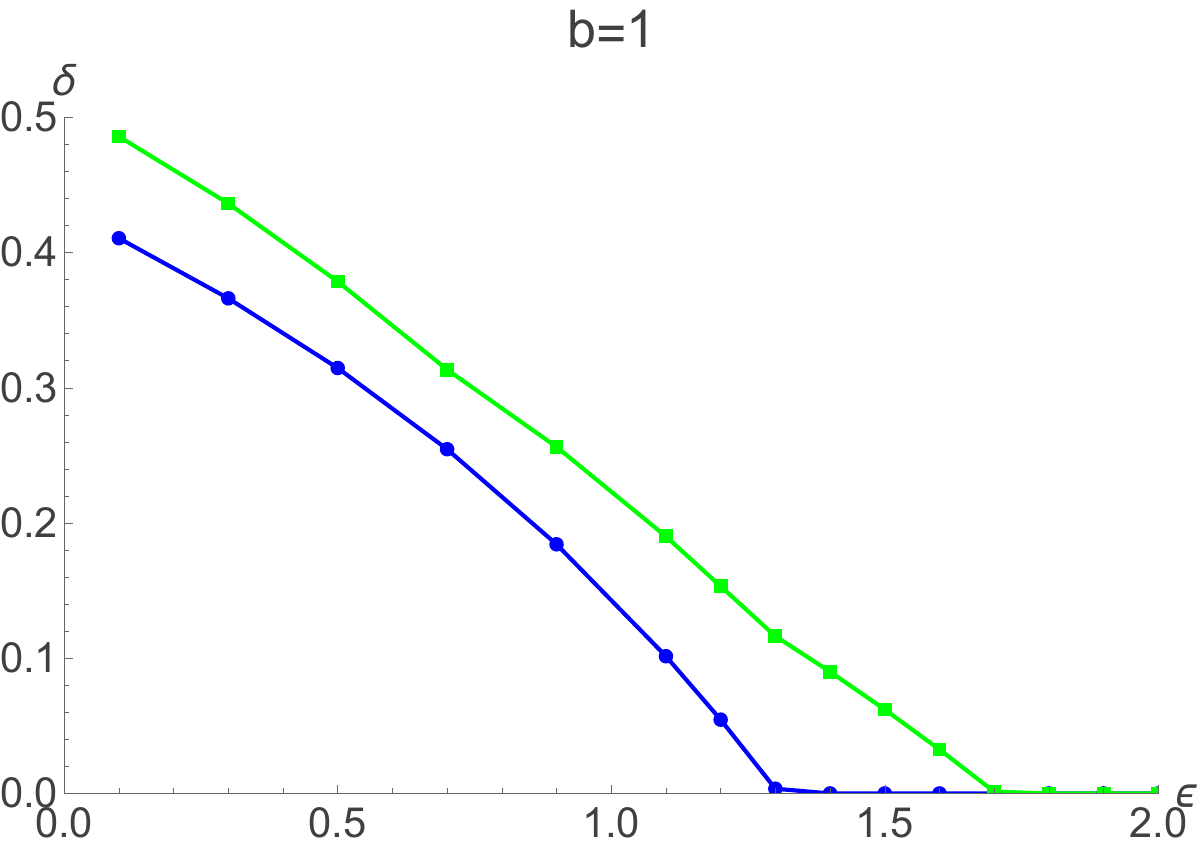}\label{subfig:1-sig}}
    \hfill
    \subfloat[]{\includegraphics[scale=0.35]{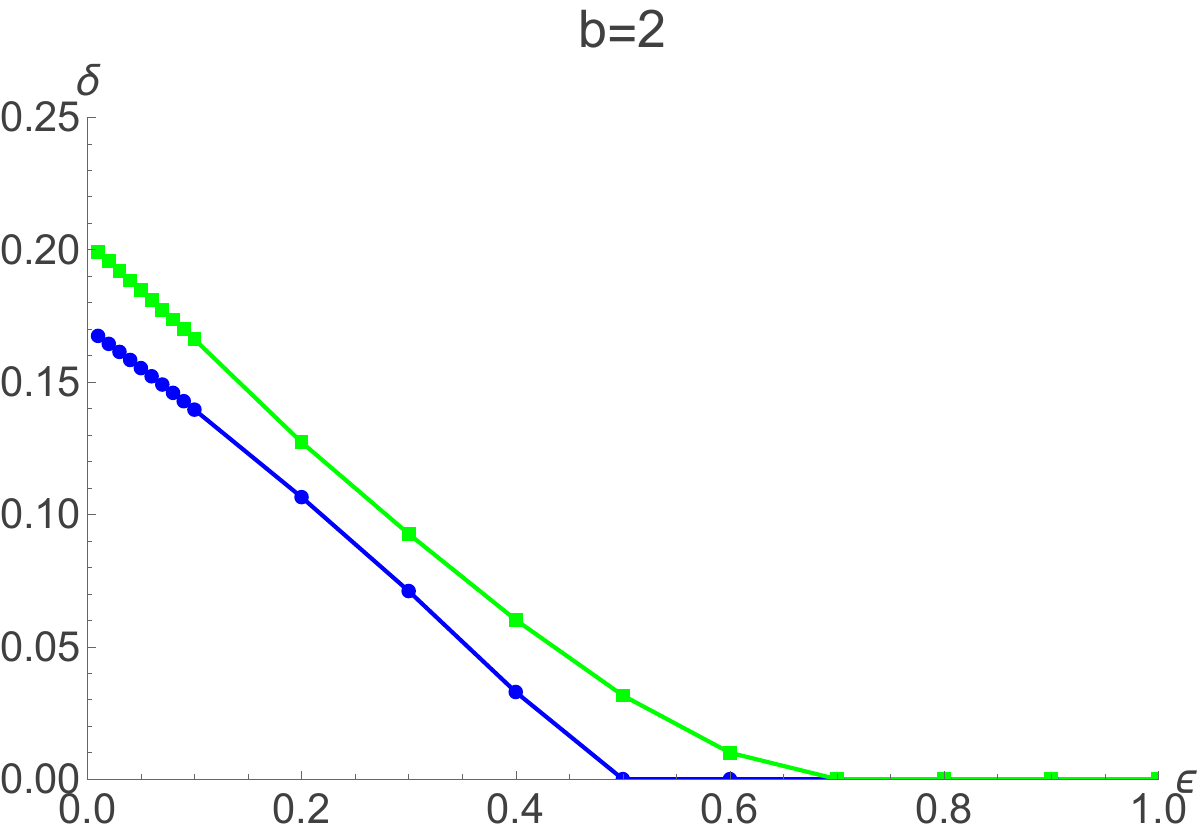}\label{subfig:2-sig}}
    \caption{Probability of failure to achieve $\varepsilon$-differential privacy, $\delta$, as a function of privacy level $\varepsilon$, for varying values of bits of precision $K=1$ (blue), $K=2$ (green) and noise level $b = 0.2, 0.5, 1.0, 2.0$.} 
    \label{fig:delta-vs-eps}
\end{figure}

These results indicate an intuitive trend: as $K$ increases (i.e., we seek more bits of precision), the adversary has more samples from which to obtain information, and so the probability of failure to achieve $\varepsilon$-differential privacy increases. At the same time, as we inject more noise into the system (i.e., we increase $b$), we can achieve a better level of privacy with a lower probability of failure. In real-world applications, we aim to minimize the failure probability, ideally achieving $\delta = 0$. As shown in~\cref{fig:delta-vs-eps}, the corresponding $\varepsilon$ value required to attain $\delta = 0$ increases with the number of bits of precision $K$ for a fixed noise parameter $b$. This is also intuitive: with a more precise function estimate and a fixed noise injection, the level of privacy decreases, resulting in a larger $\varepsilon$.

\subsection{Resampled noise ruins the sensing advantage}\label{app-subsec:resampled-noise}
In this subsection, we show that the protocol in which each node resamples their noise for every shot in the sensing protocol results in a variance for the function estimate that blows up with time $t$ and the number of parties $n$. To show this, we will need two probability distributions: the exponential distribution and the gamma distribution. We also make use of the Laplace distribution, which we already defined in~\cref{def:laplace-distribution}. We start by defining the exponential distribution:

\begin{definition}[Exponential distribution]\label{def:exponential-distribution}
    The \emph{exponential distribution} is the distribution with probability density function
    \begin{align}
        f(x;b) =
        \begin{cases}
            \frac{1}{b}e^{-x/b} & x \geq 0,\\
            0 & x < 0,
        \end{cases}
    \end{align}
    where $b > 0$ is the \emph{scale parameter}.
\end{definition}

The gamma distribution is defined as:

\begin{definition}[Gamma distribution]\label{def:gamma-distribution}
    The \emph{gamma distribution} is the distribution with probability density function
    \begin{align}
        f(x;\alpha,b) =
        \begin{cases}
            \frac{x^{\alpha-1}e^{-x/b}}{b^\alpha\Gamma(\alpha)} & x > 0,\\
            0 & x \leq 0,
        \end{cases}
    \end{align}
    where $\alpha > 0$ is the \emph{shape parameter}, $b > 0$ is the \emph{scale parameter}, and $\Gamma(\alpha)$ is the gamma function evaluated at $\alpha$:
    \begin{align}
        \Gamma(\alpha) = \defint{0}{\infty}{t^{\alpha-1}e^{-t}\mathop{dt}}.
    \end{align}
\end{definition}

With these distributions, we state and prove the following result, showing that the noise cannot be resampled in our protocol:

\begin{theorem}[The noisy Hamiltonian protocol with resampled noise ruins the entanglement advantage]\label{thm:resampled-noise-scaling}
    The variance of the function estimate using the noisy Hamiltonian protocol, with noise resampled between successive shots of sensing, scales as
    \begin{align}
        \Var{Q} = \frac{(1 + b^2t^2)^{2n} - \cos^2(nqt)}{n^2t^2\sin^2(nqt)}.
    \end{align}
\end{theorem}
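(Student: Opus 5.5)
The plan is to use the same error-propagation formula as in the standard protocol, $\Var{Q} = \Var{P}/(\partial_q \expval{P})^2$. The difference is that the parity statistics are now averaged over fresh Laplace noise in every shot. Because the noise is resampled independently each shot, the single-shot outcomes $p_m$ are i.i.d.\ $\pm 1$ random variables whose distribution is the \emph{noise-averaged} one. So the whole computation reduces to finding the noise-averaged parity expectation $\overline{\expval{P}} \coloneqq \mathbb{E}_\eta[\expval{P}]$ and then inserting it into the propagation formula.

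First, conditioned on the noise vector $\boldsymbol{\eta}$ of a single shot, the evolved GHZ state acquires relative phase $t\sum_i(\theta_i+\eta_i) = ntq + t\sum_i \eta_i$. By~\cref{eqn:exp-val-parity-measurement}, this gives $\expval{P \mid \boldsymbol{\eta}} = \cos\lp ntq + t\sum_i\eta_i\rp$. Next I average over $\eta_i \overset{\mathrm{i.i.d.}}{\sim}\Lap{b}$. I expand the cosine with the angle-addition formula and write $\cos(t\sum_i\eta_i)$ and $\sin(t\sum_i\eta_i)$ as the real and imaginary parts of $\prod_i e^{it\eta_i}$. Independence together with the symmetry of the Laplace density then kills the sine term and factorizes the cosine term into $\prod_i \mathbb{E}[\cos(t\eta_i)]$. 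The one-variable integral $\mathbb{E}[\cos(t\eta)] = \frac{1}{b}\int_0^\infty e^{-x/b}\cos(tx)\,dx = \frac{1}{1+b^2t^2}$ is the Laplace characteristic function. This is where the exponential distribution enters: I fold the Laplace density onto $x\ge 0$, which gives an exponential density. The result is
\begin{equation}
    \overline{\expval{P}} = \frac{\cos(ntq)}{(1+b^2t^2)^n}.
\end{equation}
Since the per-shot outcome is $\pm1$, we have $P^2 = 1$ and so $\Var{P} = 1 - \cos^2(ntq)/(1+b^2t^2)^{2n}$. Differentiating in $q$ gives $(\partial_q \overline{\expval{P}})^2 = n^2t^2\sin^2(ntq)/(1+b^2t^2)^{2n}$. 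Taking the ratio and multiplying numerator and denominator by $(1+b^2t^2)^{2n}$ yields exactly the stated formula.

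The main obstacle is justifying that the i.i.d.-shot reduction is legitimate. The estimator $Q$ must be built by inverting the \emph{damped} expectation (rescaling $\expval{P}_{\text{est}}$ by $(1+b^2t^2)^n$ before applying $\arccos$), and the delta-method formula is applied to that estimator. I will state this explicitly. I will also note that the sum $\sum_i\eta_i$ could alternatively be handled through its distribution, since sums of exponentials are gamma-distributed, which is presumably why the paper introduces the gamma distribution. The characteristic-function route avoids needing that distribution explicitly, so I would use it. Finally, I will remark that the factor $(1+b^2t^2)^{2n}$ grows exponentially in $n$ and $t$, which shows that the Heisenberg advantage is destroyed.
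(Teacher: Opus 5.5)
Your proposal is correct and arrives at the stated formula. Where it differs from the paper is in how you compute the noise-averaged parity expectation.

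\textbf{The paper's route.} It works with the aggregate noise $S=\sum_i\eta_i$ directly. It writes each Laplace variable as a difference of two exponentials, so that $S$ becomes a difference of two $\mathrm{Gamma}(n,b)$ variables. A convolution then gives an explicit density $f(S)$ of modified-Bessel $K_{n-1/2}$ type. The paper forms the mixed state $\rho_f=\int f(S)\,|\tilde\psi_f\rangle\langle\tilde\psi_f|\,dS$ and integrates $\cos(ntq+tS)$ against $f(S)$ to obtain $\cos(nqt)/(1+b^2t^2)^n$.

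\textbf{Your route.} You condition on the noise and use linearity. You observe that $\mathbb{E}[\cos(ntq+tS)]$ depends only on the characteristic function of $S$. The sine term vanishes because each $\eta_i$ is symmetric. The cosine term factorizes over independent nodes into $(1+b^2t^2)^{-n}$, using the one-line Laplace characteristic function.

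\textbf{Comparison.} The paper's final integral is exactly an evaluation of this characteristic function, so your argument bypasses the Gamma/Bessel detour entirely. It is shorter and less error-prone. The paper's route yields more than the theorem needs: the explicit distribution of the aggregate noise and hence the full mixed state $\rho_f$.

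The remaining steps coincide with the paper's. These are using $P^2=1$ for the variance, differentiating the damped expectation, and taking the ratio. Your remark that the error-propagation formula presupposes an estimator inverting the damped mean $\cos(ntq)/(1+b^2t^2)^n$ makes explicit an assumption the paper leaves implicit.
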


\begin{proof}
    We start by evolving the initial $n$-qubit GHZ state under the noisy Hamiltonian $\Tilde{H} = \frac{1}{2}\defsum{i=1}{n}{(\theta_i + \eta_i)\sigma_i^z}$ for time $t$ and where $\eta_i$ is the noise that each node samples:
    \begin{align}
        \ket{\Tilde{\psi}_f} &= \Tilde{U}(t)\ket{\psi_0},
    \end{align}
    where
    \begin{align}
        \Tilde{U}(t) = e^{-\frac{it}{2}\defsum{i=1}{n}{(\theta_i + \eta_i)\sigma_i^z}},
    \end{align}
    and where we write the final state $\ket{\tilde{\psi}_f}$ with a tilde to indicate that it is a state obtained under this "noisy" evolution. Thus, the only quantity that we have access to is
    \begin{align}
        \Tilde{q} = q + \eta = \frac{1}{n}\defsum{i=1}{n}{(\theta_i + \eta_i)},
    \end{align}
    where $q = \frac{1}{n}\defsum{i=1}{n}{\theta_i}$ and $\eta = \frac{1}{n}\defsum{i=1}{n}{\eta_i}$. Evolving $\ket{\psi_0} = \frac{1}{\sqrt{2}}\lp \ket{0}^{\otimes n} + \ket{1}^{\otimes n} \rp$ under this operator, we get the following final state with noise injected:
    \begin{align}
        \ket{\Tilde{\psi}_f} &= \Tilde{U}(t)\ket{\psi_0}\\
        &= \frac{1}{\sqrt{2}}\lp e^{-\frac{int}{2}(q + \eta)}\ket{0}^{\otimes n} + e^{\frac{int}{2}(q + \eta)}\ket{1}^{\otimes n} \rp\\
        &= \frac{1}{\sqrt{2}}\lp e^{-\frac{intq}{2} - \frac{itS}{2}}\ket{0}^{\otimes n} + e^{\frac{intq}{2} + \frac{itS}{2}}\ket{1}^{\otimes n} \rp,
    \end{align}
    where we introduced a new variable $S \coloneqq n\eta$. In this case, the noise is sampled anew with every shot, so we cannot just assume a static noise term $\eta$ in our function estimate and instead have to average over all possible values for the noise. We write the final state as a density matrix:
    \begin{align}
        \rho_f &= \defint{-\infty}{\infty}{f(S)\ket{\Tilde{\psi}_f}\bra{\Tilde{\psi}_f}\mathop{dS}},\label{eqn:final-density-matrix}
    \end{align}
    where $f(S)$ is the probability density function (p.d.f.) of the noise term $S$. $S$ is defined as the sum of individual nodes' Laplace-distributed noise $\eta_i$, but we cannot take $S$ to follow a Laplace distribution because the sum of Laplace-distributed random variables is not itself a Laplace-distributed random variable. This is in contrast to, for example, the sum of independently-sampled Gaussian-distributed random variables, which is a Gaussian-distributed random variable. To find the distribution, we use a known relation between a Laplace-distributed random variable and exponentially distributed random variables: if $X_{i_1}, X_{i_2} \sim \expdistr{b}$, where $\expdistr{b}$ is the exponential distribution, then $X_{i_1} - X_{i_2} \sim \Lap{b}$. Thus, $S_i \sim \Lap{b}$ can be written as
    \begin{align}
        S_i \sim X_{i_1} - X_{i_2},
    \end{align}
    where $X_{i_1}, X_{i_2} \sim \expdistr{b}$. A sum of such variables is then
    \begin{align}
        S \sim \defsum{i=1}{n}{S_i} \sim \defsum{i=1}{n}{(X_{i_1} - X_{i_2})} \sim \defsum{i=1}{n}{X_{i_1}} - \defsum{i=1}{n}{X_{i_2}}.
    \end{align}
    We then use the fact that the sum of $n$ exponential-distributed random variables is a gamma-distributed random variable $G_i \sim \text{Gamma}(n,b)$. Thus, we have
    \begin{align}
        S \sim G_1 - G_2,
    \end{align}
    where $G_1 \sim \text{Gamma}(n,b)$ and $G_2 \sim \text{Gamma}(n,b)$ are gamma-distributed random variables. Thus, we now have a random variable equal to the difference of two other random variables, for which we can find the p.d.f. using convolution:
    \begin{align}
        f(S) = \defint{-\infty}{\infty}{f_{G_1}(g)f_{G_2}(g-S)\mathop{dg}}.
    \end{align}
    As mentioned in~\cref{def:gamma-distribution}, the gamma distribution is nonzero for $g > 0$ and $g - S > 0$, so we take $ g > S$. Using this and the p.d.f.~for the gamma distribution $\text{Gamma}(n,b)$, the integral becomes
    \begin{equation}
        f(S) = \frac{e^{S/b}}{\Gamma(n)^2b^{2n}}\defint{\max(0,S)}{\infty}{g^{n-1}(g-S)^{n-1}e^{-2g/b}\mathop{dg}}.
    \end{equation}
    We evaluate this integral separately for $S \leq 0$ and $S > 0$ and find
    \begin{equation}
        f(S) = \frac{\abs{S}^{n-1/2}K_{n-1/2}\lp \frac{\abs{S}}{b} \rp}{(2b)^n\sqrt{\frac{\pi b}{2}}\Gamma(n)},
    \end{equation}
    where $K_\nu(z)$ is the modified Bessel function of the second kind. With this p.d.f., we can now calculate the density matrix in~\cref{eqn:final-density-matrix}. We calculate the expectation value of the parity operator with respect to this final state as
    \begin{align}
        \ev{P} &= \Tr[P\rho_f]\\
        &= \Tr[\defint{-\infty}{\infty}{\frac{1}{2}f(S)\bigotimes_{i=1}^{n}{\sigma_i^x}\lp e^{-\frac{intq}{2} - \frac{itS}{2})}\ket{0}^{\otimes n} + e^{\frac{intq}{2} + \frac{itS}{2})}\ket{1}^{\otimes n} \rp\lp e^{\frac{intq}{2} + \frac{itS}{2})}\bra{0}^{\otimes n} + e^{-\frac{intq}{2} - \frac{itS}{2})}\bra{1}^{\otimes n} \rp\mathop{dS}}]\\
        &= \defint{-\infty}{\infty}{\cos(ntq + tS)f(S)\mathop{dS}}.
    \end{align}
    Evaluating for $f(S)$, we find
    \begin{align}
        \ev{P} = \frac{\cos(nqt)}{(1 + b^2t^2)^{n}}.
    \end{align}
    Using this, we calculate $\Var{P}$:
    \begin{align}
        \Var{P} = 1 - \frac{\cos^2(nqt)}{(b^2t^2 + 1)^{2n}}.
    \end{align}
    We also find
    \begin{align}
        \lp \pdv{\ev{P}}{q} \rp^2 &= \frac{n^2t^2\sin^2(nqt)}{(1 + b^2t^2)^{2n}}.
    \end{align}
    Putting everything together, we have
    \begin{align}
        \Var{Q} &= \frac{\Var{P}}{\lp \pdv{\ev{P}}{q} \rp^2}\\
        &= \frac{(1 + b^2t^2)^{2n} - \cos^2(nqt)}{n^2t^2\sin^2(nqt)},\label{eqn:divergent-variance}
    \end{align}
    as claimed in the theorem.
\end{proof}

We can see that this expression quickly blows up both for fixed $n$ and $t \to \infty$ and for fixed $t$ and $n \to \infty$. Furthermore, for fixed $b > 0$,~\cref{eqn:divergent-variance} grows exponentially in $n$ and therefore destroys even standard quantum limit scaling. As $(1+b^2t^2)^{2n} \approx \exp(2nb^2t^2)$, preserving Heisenberg scaling under resampling would require $nb^2t^2 = \bigOh{1}$, that is, $b = \bigOh{1/(t\sqrt{n})}$. Under the generic Laplace calibration, this translates to having $\varepsilon = \bigOmega{t\Delta\sqrt{n}}$, where, recall, $\Delta = \theta_{\max} - \theta_{\min}$. However, a privacy level that diminishes with the size of the network is undesirable. Thus, resampling is incompatible with simultaneously achieving constant local differential privacy and Heisenberg scaling.

\subsection{Entanglement offers no privacy advantage when \texorpdfstring{$\alpha=1$}{at the standard quantum limit}}\label{app-subsec:no-privacy-advantage-entanglement}
Recall from the main text that when we take $\alpha = 1$ in the noisy Hamiltonian protocol (see~\cref{subsec:noisy-hamiltonian-protocol}), we drop the scaling of $\varepsilon_{\mathrm{MSE}}$ down to the standard quantum limit. As such, there is no benefit in using the entangled protocol, and instead we can just use an unentangled protocol while achieving the same level of privacy:

\begin{theorem}[Unentangled strategy]\label{thm:unentangled-strategy}
    Consider the  noisy Hamiltonian protocol from~\cref{subsec:noisy-hamiltonian-protocol} and the result in~\cref{thm:noisy-hamiltonian-optimality}. In particular, take $\alpha = 1$ such that $\varepsilon_{\mathrm{MSE}} = \bigOh{1/n}$, which scales according to the standard quantum limit, and $\varepsilon = \bigTheta{1}$. An unentangled protocol, where each node independently estimates their parameter and applies local Laplace noise, yields the same scaling for $\varepsilon_{\mathrm{MSE}}$ and $\varepsilon$. Thus, there is no scaling advantage to using an entangled protocol when $\alpha = 1$.
\end{theorem}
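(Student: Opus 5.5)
We construct the unentangled protocol explicitly and then repeat the mean-squared-error and privacy analysis of \cref{thm:noisy-hamiltonian-optimality} for it. In the unentangled protocol, each node $i$ prepares its own qubit in $\ket{+}$ and samples $\eta_i \sim \Lap{b}$ once. It evolves under $\frac{1}{2}(\theta_i+\eta_i)\sigma^z$ for time $t$ and measures $\sigma^x$ (and $\sigma^y$ where the bit-by-bit protocol requires it). From these outcomes it forms a local estimate $\Tilde{\Theta}_i$ of $\theta_i + \eta_i$ and broadcasts $\Tilde{\Theta}_i$. Any party can then compute $\Tilde{Q} = \frac{1}{n}\defsum{i=1}{n}{\Tilde{\Theta}_i}$. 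Locality holds by construction, since there is no curator and each node adds its own noise.

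For correctness, we condition on $\boldsymbol{\eta}$ exactly as in the proof of \cref{thm:noisy-hamiltonian-optimality}. Each single-qubit estimate is unbiased for $\theta_i+\eta_i$ with single-shot variance $1/t^2$, which is \cref{eqn:standard-heisenberg-scaling} with $n=1$. The $n$ estimates are independent across nodes, so $\mathrm{Var}_\psi[\Tilde{Q}\mid\boldsymbol{\eta}] = \frac{1}{n^2}\cdot n\cdot\bigOh{1/t^2} = \bigOh{1/(nt^2)}$, and the conditional bias is $\overline{\eta}$. Averaging over the Laplace noise reuses the computation $\mathbb{E}[\overline{\eta}^2] = 2b^2/n$, giving
\begin{equation}
    \varepsilon_{\mathrm{MSE}} = \bigOh{\frac{1}{nt^2}} + \frac{2b^2}{n}.
\end{equation}
With $\alpha=1$ we take $b = \bigTheta{1}$, so both terms are $\bigOh{1/n}$. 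This matches the entangled protocol at $\alpha = 1$: there the quantum term is $\bigOh{1/n^2}$ but is dominated by the noise term $2b^2/n = \bigTheta{1/n}$.

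For privacy, we invoke \cref{thm:general-sigma-bound} directly. Node $i$ replaces $\theta_i$ by $\theta_i+\eta_i$ with $\eta_i\sim\Lap{b}$ and $b \geq (\theta_{\max}-\theta_{\min})/\varepsilon$. Everything else the node releases (the evolved qubit, the measurement outcomes, $\Tilde{\Theta}_i$) is a quantum channel applied to this noisy value, so \cref{thm:post-prosessing} gives $\varepsilon$-local differential privacy against classical and quantum adversaries, with up to $n-1$ colluders. Choosing $b = \Delta/\varepsilon$ with $\varepsilon = \bigTheta{1}$ is consistent with $b = \bigTheta{1}$. This yields the same $(\varepsilon,\delta) = (\bigTheta{1},0)$ as \cref{thm:noisy-hamiltonian-optimality} at $\alpha=1$.

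The main point needing care is that the comparison is at the level of scaling only. We must ensure the unentangled protocol's quantum sampling term $\bigOh{1/(nt^2)}$ is not worse than the noise term. This holds because $t$ and $\Delta$ are $n$-independent constants, and it is exactly where the entangled advantage ($1/n^2$ versus $1/n$) is swamped by the $2b^2/n$ privacy noise. A secondary subtlety is the phase ambiguity of single-qubit estimation. We handle it with the same bit-by-bit learning from \cref{subsec:sensor-network}, applied per node. The prior interval condition $\delta t\,(\theta_{\max}-\theta_{\min}) < 2\pi$ would need to account for the unbounded Laplace noise. We treat this as in the entangled case, since the noisy Hamiltonian analysis faces the same issue and the paper treats the estimator as unbiased there.
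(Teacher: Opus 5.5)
Your argument is correct and is essentially the paper's proof. The paper also takes $b \geq (\theta_{\max}-\theta_{\min})/\varepsilon$ from the local Laplace mechanism, writes the unentangled error as $\bigOh{1/n} + 2b^2/n$, sets $b = \bigTheta{1}$ for $\varepsilon = \bigTheta{1}$, and observes that the $2b^2/n$ term swamps the entangled protocol's $\bigOh{1/n^2}$ sensing term, so your conditional-variance computation and post-processing privacy argument make explicit what the paper only asserts.

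One small refinement concerns your phase-ambiguity caveat, which is sharper per node than in the entangled case: there $\eta_i$, of typical size $b = \bigTheta{\Delta/\varepsilon}$, enters the phase, whereas in the entangled case only $\overline{\eta}$, of typical size $\bigOh{b/\sqrt{n}}$, does. Since each node knows its own $\eta_i$, it can instead sense $\theta_i$ in a fixed window of width $\bigTheta{\Delta}$ and add $\eta_i$ classically afterwards, as the statement's wording suggests, and the Laplace density ratio then gives $\bigTheta{\Delta/b}$-privacy directly, with no noise-induced wrapping.
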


\begin{proof}
    If each node adds to their parameter noise sampled from a Laplace distribution with scale parameter $b$, then by the properties of the Laplace mechanism, we have
    \begin{equation}
        b \geq \frac{\theta_{\max} - \theta_{\min}}{\varepsilon},\label{eqn:expression-for-b}
    \end{equation}
    as we have seen before. The mean-squared error of the unentangled protocol is given by
    \begin{equation}
        \varepsilon_{\mathrm{MSE}}^{\mathrm{unent}} = \bigOh{\frac{1}{n}} + \frac{2b^2}{n}.
    \end{equation}
    Choosing a constant privacy budget $\varepsilon = \bigTheta{1}$, from~\cref{eqn:expression-for-b}, this gives $b = \bigTheta{1}$. Thus, we have
    \begin{equation}
        \varepsilon_{\mathrm{MSE}}^{\mathrm{unent}} = \bigOh{\frac{1}{n}}.
    \end{equation}
    In the entangled protocol, we have
    \begin{equation}
        \varepsilon_{\mathrm{MSE}}^{\mathrm{ent}} = \bigOh{\frac{1}{n^2}} + \frac{2b^2}{n} = \bigOh{\frac{1}{n}}
    \end{equation}
    for the same choice of $b = \bigTheta{1}$. Thus, although the entangled protocol has a smaller intrinsic sensing term (i.e., $\bigOh{1/n^2}$ versus $\bigOh{1/n}$), the privacy-noise contribution scales as $1/n$ and determines the overall $n$-scaling. Thus, once we choose $\alpha=1$, an unentangled protocol achieves the same asymptotic scaling in $n$ in both mean-squared error and privacy.
\end{proof}
\end{document}